\newif\ifarxiv
\arxivtrue

\ifarxiv
\documentclass[11pt,a4paper]{article}
\else
\documentclass[acmsmall,screen,review]{acmart}
\acmJournal{TEAC}
\fi

\ifarxiv
\usepackage[pagebackref]{hyperref}
\usepackage[small]{caption}
\usepackage[sort,numbers]{natbib}
\usepackage{authblk}
\usepackage{fullpage}
\usepackage{amssymb}
\else
\usepackage{hyperref}
\usepackage{caption}
\theoremstyle{acmplain}
\fi

\makeatletter
\AtBeginDocument{%
  \def\doi#1{\href{https://doi.org/#1}{\nolinkurl{doi:#1}}}%
 }
\makeatother

\usepackage[utf8]{inputenc}
\usepackage{url}
\usepackage{graphicx}
\usepackage{amsmath}
\usepackage{mathtools}
\usepackage{amsthm}
\usepackage{microtype}
\usepackage{booktabs}
\usepackage{algorithm}
\usepackage{algorithmic}
\usepackage[capitalize]{cleveref}
\usepackage{tabularx,booktabs,multirow}
\usepackage{caption}
\usepackage{subcaption}
\usepackage{thm-restate}

\usepackage{pifont}
\usepackage{paralist}

\newcommand{\cmark}{{\large \ding{51}}}
\newcommand{\xmark}{{\large \ding{55}}}

\newtheorem{theorem}{Theorem}
\newtheorem{corollary}[theorem]{Corollary}

\newtheorem{observation}[theorem]{Observation}
\newtheorem{definition}[theorem]{Definition}

\newtheorem{lemma}[theorem]{Lemma}

\crefname{theorem}{Theorem}{Theorems}
\crefname{corollary}{Corollary}{Corollaries}
\crefname{observation}{Observation}{Observations}
\crefname{lemma}{Lemma}{Lemmas}
\crefname{table}{Table}{Tables}
\crefname{figure}{Figure}{Figures}

\usepackage{xspace}

\setlength{\marginparwidth}{1.5cm}

\newcommand{\td}{{{\mathrm{td}}}}
\newcommand{\at}{\mathrm{td}}
\newcommand{\Diff}{{{\mathrm{Diff}}}}
\newcommand{\dDi}{{{\Delta\mathrm{Diff}}}}
\newcommand{\Vor}{{{\mathrm{Vor}}}}
\newcommand{\dVor}{{{\Delta\mathrm{Vor}}}}
\newcommand{\diDiff}{temporal difference diffusion\xspace}
\newcommand{\dVoronoi}{temporal difference Voronoi}
\newcommand{\nc}{nice\xspace}

\newcommand{\Pp}{\mathcal{P}}
\newcommand{\Gg}{\mathcal{G}}
\newcommand{\Cc}{\mathcal{C}}
\newcommand{\Tt}{\mathcal{T}}
\newcommand{\Bb}{\mathcal{B}}

\newcommand{\J}{ \jmath} 

\newcommand{\appear}[1]{\tau_{#1}}
\newcommand{\jointime}{{\bar{t}}}

\newcommand{\bigO}{\mathcal{O}}
\newcommand{\NN}{\mathbb{N}}
\newcommand{\ZZ}{\mathbb{Z}}

\newcommand{\footprint}[1]{#1_\downarrow}

\newcommand{\forest}{\mathcal{F}}
\newcommand{\stopForest}{\overline{\mathcal{F}}}

\DeclarePairedDelimiter{\floor}{\lfloor}{\rfloor}
\DeclarePairedDelimiter{\ceil}{\lceil}{\rceil}
\DeclarePairedDelimiter{\abs}{\lvert}{\rvert}

\newcommand{\mymod}[1]{\underline{#1}}

\newcommand{\lreach}{\omega_<}
\newcommand{\rreach}{\omega_>}
\newcommand{\reach}{\omega}

\DeclareMathOperator{\med}{med}

\usepackage{algorithm}
\usepackage{capt-of}

\usepackage{tikz}
\usetikzlibrary{arrows,decorations.pathmorphing,decorations.pathreplacing,backgrounds,positioning,fit,matrix,shapes,backgrounds,calc}
\usetikzlibrary{shapes,calc}
\tikzset{
	vertex/.style={
		circle,
		scale=0.5,
		draw=black,
		fill = gray!7,
		thick,
	},
	vertexP/.style={
		vertex,
		fill=\colorP,
	},
	vertexQ/.style={
		vertex,
		fill=\colorQ,
	},
	vertexG/.style={
		vertex,
		fill=\colorG,
	},
	edge/.style={
		draw=black,
		very thick,
	},
	vboundary/.style={ 
		rectangle,
		draw,
		minimum width=1em,
		minimum height=0pt,
		inner sep=0pt
	},
	reach/.style={
		draw=#1,
		fill=#1,
		fill opacity=0.2,
	},
	timeline/.style={
		draw=black,
		semithick,
		>=stealth,
	},
	timelabel/.style={
		font=\small,
	},
	group/.style={
		draw,
		thin,
		rounded corners,
	},
	blob/.style={
		draw,
		thin,
		rounded corners=5mm,
	},
	path/.style={
		draw,
		thick,
		decorate,
		decoration={snake, segment length=7mm, amplitude=.3mm},
	},
}

\def \colorP {blue!60}
\def \colorQ {red!60}
\def \colorG {gray!90}

\def\timelabelx{-1}
\def\starttime{0}
\def\vertexlabely{-1.2}
\newcommand{\gamegrid}[2]{
	\def\n{#1}
	\def\T{#2}
	\draw
		\foreach \time in {\starttime,...,\T} {
			(0, {\T-\time}) node {\time}
			\foreach \i in {1,...,\n} { 
				(\i, {\T-\time}) node[vertex] (v-\i-\time) {}
			}
		}
	;
	\node[rotate=90] at  (\timelabelx,{(\T-\starttime)/2}){time};
	\node[anchor=base] at ({\n/2+0.5}, \vertexlabely-0.8) {vertex};
}

\newcommand{\colorvertex}[3]{
	\draw \foreach \time in {#2,...,\T}{
		(v-#1-\time) node[#3] {}
	};
}

\newcommand{\ints}[2]{[#1, #2]}
\newcommand{\zeroto}[1]{\ints{0}{#1}}
\newcommand{\oneto}[1]{[#1]}

\newcommand{\case}[2]{\vspace{-.9\baselineskip}\paragraph{Case~#1: #2}}

\ifarxiv
\title{Two Influence Maximization Games on Graphs Made Temporal\thanks{An extended abstract of this article appeared in the proceedings of the Thirtieth International Joint Conference on Artificial
		Intelligence (IJCAI 2021), pages 45--51, 2021.}}

\author{Niclas Boehmer\footnote{Supported by DFG projects MaMu (NI~369/19) and ComSoc-MPMS (NI 369/22).}}
\author{Vincent Froese}
\author{Julia Henkel} 
\author{Yvonne Lasars} 
\author{Rolf Niedermeier} 
\author{Malte Renken\footnote{Supported by DFG project MATE (NI~369/17).}}

\affil{\small
	Technische Universit\"at Berlin, Faculty~IV, Algorithmics and Computational 
	Complexity\protect\\
	\{niclas.boehmer, vincent.froese\}@tu-berlin.de,\protect\\
	\{henkel, y.lasars\}@campus.tu-berlin.de,\protect\\
	m.renken@tu-berlin.de}

\date{\today}

\else

	\title[Two Influence Maximization Games on Graphs Made Temporal]{{Two Influence Maximization Games on Graphs Made Temporal}}

\author[N. Boehmer]{Niclas~Boehmer}
\orcid{}
\affiliation{
	\institution{Technische Universit\"at Berlin}
	\department{Algorithmics and Computational Complexity}
	\streetaddress{Ernst-Reuter Platz 7}
	\postcode{10587}
	\city{Berlin}
	\country{Germany}
}
\email{niclas.boehmer@tu-berlin.de}

\author[V. Froese]{Vincent Froese}
\orcid{}
\affiliation{
	\institution{Technische Universit\"at Berlin}
	\department{Algorithmics and Computational Complexity}
	\streetaddress{Ernst-Reuter Platz 7}
	\postcode{10587}
	\city{Berlin}
	\country{Germany}
}
\email{vincent.froese@tu-berlin.de}

\author[J. Henkel]{Julia Henkel}
\orcid{}
\affiliation{
	\institution{Technische Universit\"at Berlin}
	\department{Algorithmics and Computational Complexity}
	\streetaddress{Ernst-Reuter Platz 7}
	\postcode{10587}
	\city{Berlin}
	\country{Germany}
}
\email{henkel@campus.tu-berlin.de}

\author[Y. Lasars]{Yvonne Lasars}
\orcid{}
\affiliation{
	\institution{Technische Universit\"at Berlin}
	\department{Algorithmics and Computational Complexity}
	\streetaddress{Ernst-Reuter Platz 7}
	\postcode{10587}
	\city{Berlin}
	\country{Germany}
}
\email{y.lasars@campus.tu-berlin.de}

\author[R. Niedermeier]{Rolf Niedermeier}
\orcid{}
\affiliation{
	\institution{Technische Universit\"at Berlin}
	\department{Algorithmics and Computational Complexity}
	\streetaddress{Ernst-Reuter Platz 7}
	\postcode{10587}
	\city{Berlin}
	\country{Germany}
}

\author[M. Renken]{Malte Renken}
\orcid{}
\affiliation{
	\institution{Technische Universit\"at Berlin}
	\department{Algorithmics and Computational Complexity}
	\streetaddress{Ernst-Reuter Platz 7}
	\postcode{10587}
	\city{Berlin}
	\country{Germany}
}
\email{m.renken@tu-berlin.de}
\fi

\sloppy

\begin{document}

\ifarxiv
\maketitle

\begin{abstract}
 To address the dynamic nature of real-world 
 networks,
 we generalize competitive diffusion games and Voronoi games from static
 to temporal graphs, where edges may appear or disappear over time. 
This establishes a new direction of studies in the 
 area of graph games, motivated by applications such as influence spreading. 
 As a first step, we investigate the existence of Nash equilibria in 2-player competitive 
diffusion and
 Voronoi games on different temporal graph classes.
Even when restricting our studies to temporal trees and cycles, this turns out to be a challenging
 undertaking, revealing significant differences between the two games in the temporal setting.
 Notably, both games are equivalent on static trees and cycles.
 Our two main technical results are (algorithmic) proofs for the 
 existence of Nash equilibria in 2-player competitive diffusion and temporal 
 Voronoi games when the edges are restricted not to disappear over time.

 \medskip

\noindent\textbf{Keywords:} Temporal Graph Games, Competitive Diffusion Games, 
Voronoi Games, Nash Equilibria, Temporal Graph Classes
\end{abstract} 
\else
\begin{abstract}
	To address the dynamic nature of real-world 
	networks,
	we generalize competitive diffusion games and Voronoi games from static
	to temporal graphs, where edges may appear or disappear over time. 
	This establishes a new direction of studies in the 
	area of graph games, motivated by applications such as influence spreading. 
	As a first step, we investigate the existence of Nash equilibria in competitive 
	diffusion and
	Voronoi games on different temporal graph classes.
	Even when restricting our studies to temporal paths and cycles, this turns out to be a challenging
	undertaking, revealing significant differences between the two games in the temporal setting.
	Notably, both games are equivalent on static paths and cycles.
	Our two main technical results are (algorithmic) proofs for the 
	existence of Nash equilibria in temporal competitive diffusion and temporal 
	Voronoi games when the edges are restricted not to disappear over time.
\end{abstract} 
\begin{CCSXML}
	<ccs2012>
	<concept>
	<concept_id>10003752.10010070.10010099.10010100</concept_id>
	<concept_desc>Theory of computation~Algorithmic game theory</concept_desc>
	<concept_significance>500</concept_significance>
	</concept>
	<concept>
	<concept_id>10003752.10010070.10010099.10010103</concept_id>
	<concept_desc>Theory of computation~Exact and approximate computation of equilibria</concept_desc>
	<concept_significance>500</concept_significance>
	</concept>
	<concept>
	<concept_id>10002950.10003624</concept_id>
	<concept_desc>Mathematics of computing~Discrete mathematics</concept_desc>
	<concept_significance>300</concept_significance>
	</concept>
	</ccs2012>
\end{CCSXML}

\ccsdesc[500]{Theory of computation~Algorithmic game theory}
\ccsdesc[500]{Theory of computation~Exact and approximate computation of equilibria}
\ccsdesc[300]{Mathematics of computing~Discrete mathematics}
\keywords{Temporal Graph Games, Competitive Diffusion Games, 
	Voronoi Games, Nash Equilibria, Temporal Graph Classes}

\maketitle

\begin{acks}
	An extended abstract of some parts of this article appeared in the proceedings of the Thirtieth International Joint Conference on Artificial
	Intelligence (IJCAI 2021), pages 45--51, 2021.
	NB was supported by DFG projects MaMu (NI~369/19) and ComSoc-MPMS (NI 369/22).
	MR was supported by DFG project MATE (NI~369/17).
\end{acks}
\fi

\section{Introduction}
As graph games help us to reason about a networked world, playing games on graphs is an intensively researched topic since decades. 
In this work, we focus on competitive games on undirected graphs.
Here, some external parties influence a small subset of
agents who then spread some information through the network.
Typical application scenarios for these occur, for instance, when
political parties want to gain influence in a social network or in the 
context of viral marketing.

Looking at two prominent, somewhat similar representatives, 
namely competitive diffusion games and 
Voronoi games, we put forward to model network dynamics more realistically. 
Specifically, while to the best of our knowledge almost all 
work on graph games
focused on \emph{static} graphs, we initiate the study 
of these two games on \emph{temporal} graphs. Roughly speaking, 
in a temporal graph, the edge set may evolve 
over discrete time steps, while the vertex set remains unchanged, yielding 
time-ordered graph layers with different edge sets.
Moving to the temporal setting has dramatic consequences. For example, while  
competitive 
diffusion games and Voronoi games are equivalent on static trees and 
cycles~\cite{DBLP:journals/tcs/SunSXZ20}
and we understand well their properties in these simple but important special cases, they are no longer equivalent in the temporal case and their properties are much more challenging to analyze.

Our study takes a first step towards understanding both games on temporal paths, trees, and cycles.
It turns out that these backbone structures of graphs already 
confront us with several technically challenging questions when looking for the existence 
(and computation) of Nash equilibria---one of the most fundamental game-theoretic concepts.
We refer to the next section for all formal definitions and examples of the two (temporal) games.
Intuitively speaking, in both games one can think of 
each player having a color and trying to color as many vertices as 
possible
by her own color; the coloring process starts in a vertex freely chosen by each player 
and acts through the neighborhood relation of the graph.
Herein, the distance of a vertex to the start vertices plays a central role.
In both games, a player colors all vertices that are closest  
to her start vertex. Moreover, while in competitive diffusion games a vertex 
that is at the same 
distance to two start vertices of competing players can still get one of the 
two colors, this is not the case for Voronoi games.  

\paragraph{Related work.}
Competitive diffusion games were introduced
by \citet{alon}.
Research on competitive diffusion games so far mainly focused on 
the existence of Nash equilibria on a variety of graph 
classes for different numbers of 
players~\cite{alon,bult,fuku,roshan,small,suke,take}.
Also, the (parameterized) computational complexity of deciding the existence of 
a Nash equilibrium has been studied~\cite{etes,ito}.

Voronoi games have been originally studied for 
a one-dimensional
or two-dimensional continuous space
\cite{DBLP:journals/tcs/AhnCCGO04,DBLP:journals/jco/BanikBD13,DBLP:conf/isaac/BergKM19,DBLP:journals/dcg/CheongHLM04,DBLP:journals/comgeo/FeketeM05}.
There, it is typically assumed that players choose their initial sets of
points sequentially and that a player wins the game if a certain fraction of all
points is closest to her.
Voronoi games on graphs have also been studied on 
different graph classes for various numbers of 
players~\cite{DBLP:journals/tcs/BandyapadhyayBDS15,DBLP:conf/esa/DurrT07,DBLP:conf/wine/FeldmannMM09,DBLP:conf/mfcs/MavronicolasMPS08,DBLP:journals/tcs/SunSXZ20,DBLP:journals/jgaa/TeramotoDU11}.
Again a focus lies on determining for which graphs a Nash equilibrium 
exists 
and how to compute one. 

From a broader perspective, analyzing games played on graphs from a game-theoretic 
perspective is an intensively researched topic; recent examples
include Schelling games 
\cite{ChauhanLM18,DBLP:conf/ijcai/ElkindGISV19}, 
$b$-matching games~\cite{DBLP:conf/ijcai/KumabeM20}, or network creation 
games~\cite{DBLP:conf/ijcai/Echzell0LM20}.

As mentioned before, we know of basically no work 
systematically studying
graph games in a temporal setting.
The only exception we are aware of is in the context of 
pursuit-evasion games:
\citet{erlebach} studied the pursuit-evasion game of cops 
and robbers 
on some specific
temporal graphs, namely so-called edge-periodic graphs. 
Also different from our studies, their focus was on computing 
winning strategies for the players. 
\citet{DBLP:conf/mfcs/MorawietzRW20} and 
\citet{MW21} extended this study, 
also answering an open question of \citet{erlebach}.

\paragraph{Our contributions.}
We put forward the study of game-theoretic models on
\emph{temporal} graphs.
We do so by generalizing two well-studied (static) graph games to temporal graphs, namely competitive diffusion games and Voronoi games.
For the two resulting temporal graph games we analyze 
the (constructive) existence 
of Nash equilibria 
on different temporal graph classes, focusing 
on different types of temporal paths, trees, and cycles (see \Cref{fi:ov}
for an overview of our results).
We observe that, in contrast to the static case where both games are 
equivalent and a Nash equilibrium is guaranteed to exist~\cite{roshan}, the games exhibit far 
more complex dynamics and a quite different behavior on temporal trees and 
cycles.
Our main results are two involved proofs of guaranteed existence of Nash equilibria, 
namely in temporal diffusion games and temporal Voronoi games on
so-called monotonically growing temporal cycles. 
On a high level, one conclusion from our work is that temporal Voronoi games seem to exhibit a more intricate behavior than temporal diffusion games. 

\begin{table}[t]
	\centering
	\footnotesize
        \def\arraystretch{1.3}
        \caption{Overview of our results. ``\xmark''~means that a Nash equilibrium 
          is 
          not guaranteed to exist. ``\cmark''~means that a Nash equilibrium always 
          exists. 
          Entries in parentheses are implied by other table cells.
          See \Cref{se:prel} for formal definitions. Voronoi games and diffusion games  
          on static trees and static cycles are guaranteed to admit a Nash equilibrium.
        }
	\begin{tabular}{l llll}
	\toprule
	&	 Temporally 			
			& Monotonically & Monotonically\\
	& 	connected	& growing  & shrinking\\		
    \midrule
    \textbf{Diffusion} \\
	Temporal Paths 	& \hspace{-4pt}(\cmark)
	& \hspace{-4pt}(\cmark) &  \xmark~(Theorem 
\ref{thm:shrink}) \\
	Temporal Trees & \cmark~(Theorem \ref{thm:diff-tree})& 
	\hspace{-4pt}(\cmark) & \hspace{-4pt}(\xmark) \\
	Temporal Cycles		& \xmark~(Theorem
	\ref{sup_cycle}) 	& 
	\cmark~(Theorem \ref{mono_cycle}) &  \xmark~(Theorem 
\ref{thm:shrinkc})\\
	 \midrule
	 \midrule
    \textbf{Voronoi}\\
	Temporal Paths 		& 
\xmark~(Theorem \ref{voronoi_nonex}) & \hspace{-4pt}(\cmark) & 
\xmark~(Corollary  
\ref{thm:shrinkVor})\\
	Temporal Trees   &
	\hspace{-4pt}(\xmark) & \cmark~(Theorem \ref{thm:voronoi-tree}) & 
\hspace{-4pt}(\xmark)\\
	Temporal Cycles		& 
\xmark~(Theorem \ref{voronoi_nonex})	& \cmark~(Theorem~\ref{thm:vor-ne-cycle}) &\xmark~(Corollary 
\ref{thm:shrinkVor})\\
	\bottomrule
	\end{tabular}
	\label{fi:ov}
\end{table}

\paragraph{Organization of the paper.} In \Cref{se:prel}, we provide some 
background on temporal graphs and define temporal competitive diffusion and 
temporal Voronoi games. In \Cref{se:TDG}, we analyze temporal competitive 
diffusion games. We first consider these games on 
temporal trees (\cref{se:TDG_tree}) and afterwards on temporal cycles 
(\Cref{se:TDG_cycle}). In \Cref{subsub:moncycles}, we present our technically 
most involved result for diffusion games, i.e., we show that every temporal competitive diffusion 
game on a 
monotonically growing temporal cycle admits a Nash equilibrium. After that, in 
\Cref{se:vor}, we turn to temporal Voronoi games. Here, the structure is a bit 
different. We first show negative results on monotonically shrinking paths and cycles 
(\Cref{se:vor1}) and then  on
temporally connected  trees and cycles (\Cref{se:vor2}).
Lastly, in \Cref{se:vor3}, we study monotonically growing 
trees and in \Cref{se:vor4}, present our most complicated result showing that a Nash equilibrium exists in all temporal Voronoi games on monotonically growing cycles.

\section{Preliminaries} \label{se:prel}
For $a \le b\in \mathbb{N}$, let $[a,b]:=\{a,a+1,\dots, b\}$, 
$[a,b[:=\{a,a+1,\dots, b-1\}$, and $]a,b]:=\{a+1,\dots, b\}$. 
Further, let 
$[n]:=[1,n]$ for~$n
\in\mathbb{N}$.
For any proposition~$P$, the Iverson bracket~$[P]$ evaluates to~$1$ if~$P$~is true and to~$0$ if~$P$~is false.

\subsection{Temporal graphs}
A \emph{temporal graph}~$\mathcal{G} = (V, (E_t)_{t=1}^\infty)$
consists of a set of vertices~$V$
and a sequence of edge sets~$(E_t)_{t=1}^\infty$ with 
$E_t\subseteq \binom{V}{2}$.
We call the graph~$G_t=(V,E_t)$ the \emph{$t$-th layer} of~$\mathcal{G}$
and $\mathcal{G}_\downarrow = 
(V, E_\downarrow)$ with~$E_\downarrow := \bigcup_{t=1}^\infty E_t$
the \textit{underlying graph} of~$\mathcal{G}$.

If there is an integer~$\tau(\Gg)$
such that $E_t = E_\tau$ for all $t \geq \tau(\Gg)$,
then the minimum such integer~$\tau(\Gg)$ is called the \emph{lifetime} of~$\Gg$.
Intuitively, this means that the graph stops changing after time~$\tau(\Gg)$.
Since the temporal games we consider (which will be defined in \cref{sec:games}) always end after some finite amount of time (depending on the temporal graph),
it does not constitute a loss of generality to assume that all our temporal graphs have finite lifetime~$\tau$.
We will then simply omit specifying $E_i$ for $i > \tau$ (implying that $E_i=E_\tau$ for all $i>\tau$).

A \emph{temporal forest} (resp.\ \emph{tree}, \emph{path}, \emph{cycle}) is a temporal graph
whose underlying graph is a forest (resp.\ path, cycle).
In the case of paths, we will use the convention that the path runs from left to right and that its vertices are denoted $1,\dots, n$ in that order.
Similarly, we will denote the vertices of a cycle by either $1, \dots, n$ or $0,\dots, (n-1)$ in counterclockwise order.

For the definition of temporal Voronoi games, we require a notion of 
temporal distance between two vertices.
In a  temporal graph $\mathcal{G}=(V,(E_t)_{t=1}^\infty)$, we 
define a \emph{temporal walk} from a vertex $v_0$ to a vertex $v_d$ as a 
sequence of tuples 
$(\{v_0,v_1\},t_1),(\{v_1,v_2\},t_2),\dots, (\{v_{d-1},v_d\},t_d)$ such that
the following properties hold:
\begin{compactitem}
  \item $t_i < t_{i+1}$ for all $i\in[d-1]$,
  \item $\{v_{i-1},v_i\}\in E_{t_i}$ for all $i\in[d]$.
\end{compactitem}

We refer to $t_d$ as the \emph{arrival time} of the temporal walk.
Moreover, we call a temporal walk from~$v_0$ to~$v_d$ \emph{foremost} if there 
is no temporal walk from~$v_0$ to~$v_d$ with a smaller arrival time.
We now define the temporal distance $\td(u,v)$ from $u$ to $v$ as the arrival 
time of a foremost walk\footnote{We consider 
foremost walks since earliest arrival seems more natural than other concepts such as \emph{fastest} or \emph{shortest}~\cite{BHNN20} in the context of
influence spreading.
Moreover, we require \emph{strict} inequality between $t_i$~and~$t_{i+1}$ since this follows the static case and the diffusion process more closely.}
from $u$ to~$v$.
If there is no such walk, then set $\td(u, v) = \infty$.
Notably, in contrast to the static case, temporal distances are not necessarily 
symmetric, that is, $\td(u,v)\neq \td(v,u)$ is possible.
By convention, we set~$\td(v, v) = 0$ for any vertex $v$.
For two vertices $u$ and $v$, we say that 
\emph{$u$ reaches $v$ until step $\ell$} if $\td(u,v)\leq \ell$ and that 
\emph{$u$ reaches $v$ in step (or at time) $\ell$} if $\td(u,v)= \ell$.
The set of all vertices reachable from~$v$ until time~$\ell$ in $\Gg$ is denoted by $\Omega_\Gg^\ell(v) := \{w \mid \td(v, w) \leq \ell\}$
where the subscript $\Gg$ is omitted if it is clear from context.

A temporal graph~$\Gg$ is \emph{temporally connected} if $\td(u, v) < \infty$ for all vertex pairs $u, v$.
Further, we call $\Gg$ \emph{monotonically growing} 
if no edge disappears over time, i.e., $E_t\subseteq E_{t+1}$ for all~$t$.
Symmetrically, $\mathcal{G}$~is \emph{monotonically shrinking} if edges 
do not appear over time, i.e., $E_{t+1}\subseteq E_t$ for all~$t$.

\subsection{Games on temporal graphs}
\label{sec:games}
We focus on games with two players. Nevertheless, to 
highlight the nature of our definitions,  we 
directly introduce both games for an arbitrary number of players. Since the 
games are somewhat similar,
we start by making some general 
definitions for both temporal games before 
describing the specifics. 
For a temporal graph $\mathcal{G}=(V,(E_t)_{t=1}^\infty)$ and a number~$k\in \mathbb{N}$ 
of players, $\Diff(\mathcal{G},k)$ \big($\Vor(\mathcal{G},k)$\big) denotes the 
$k$-player \emph{temporal diffusion game} (\emph{temporal Voronoi game}) on the 
temporal graph $\mathcal{G}$, where each player has 
a distinct color in~$[k]$. Moreover, we use the color~$0$ to which 
we refer as \emph{gray}. The \emph{strategy space} of each player~$i\in [k]$ is 
the vertex set~$V$, that is,
each player~$i$ selects a single vertex $p_i\in V$, which is then immediately 
colored by 
her color~$i$.
If two players pick the same vertex, then it is colored gray.
A \textit{strategy profile} of the game is a tuple~$(p_1 , \dots, p_k ) \in 
V^k$ containing the initially chosen vertex of each player.
We also use the term \emph{position} to refer to the vertex~$p_i$ chosen by 
player~$i$. 

Now, for both games the strategy profile $(p_1 , \dots, p_k )$ determines an initial (partial) coloring of the graph at time~$t=0$.
We use $U_i^t(p_1, \dots, p_k)$ to denote the set of vertices having color~$i$ at time~$t$.
Then $U_i^0(p_1, \dots, p_k) = \{p_i\}$, unless some other player chose the same vertex, in which case $U_i^0(p_1, \dots, p_k) = \emptyset$.
In each of the two games, the coloring at time~$t$ then determines the coloring at time~$t+1$ by the specific rules described below.

Let further $u_i^t(p_1, \dots, p_k):=|U_{i}^t(p_1, \dots, p_k)|$ be the number of vertices having color~$i$ at time~$t$.
As vertices will never change their color once they have been colored,
the sequence $(U_i^t(p_1, \dots, p_k))_{t=0}^\infty$ must eventually become constant.
We denote its limit by $U_i(p_1, \dots, p_k)$.
The \textit{payoff} or \textit{outcome} for player~$i$ is then $u_i(p_1, \dots, p_k) := \abs{U_i(p_1, \dots, p_k)}$,
and the players aim to maximize this value.
Consequently, we say 
that a player $i\in [k]$ plays a \emph{best response} to the 
other 
players in the strategy profile~$(p_1, \dots, p_k)$ if for all
vertices~$p' \in V$ it holds that~$u_i(p_1, 
\dots, p_{i-1}, p', p_{i+1}, \dots, p_k) \leq u_i(p_1, \dots, p_k)$.
A strategy profile~$(p_1, \dots, p_k)$ is a \textit{Nash equilibrium} if
every player~$i \in [k]$ plays a best response to the other players.
Any strategy profile of the form~$(p, p, \dots, p)$ in which all players choose the same vertex is called~\emph{trivial}.

It remains to specify how the strategy profile $(p_1,\dots, p_k)$ 
determines the coloring of the vertices~$V$ in 
the two games.

\paragraph{Temporal diffusion games.}
In a \textit{temporal diffusion game}, the temporal graph~$\mathcal{G}$ is 
colored by the following propagation 
process over time. We call a vertex \emph{uncolored} if no color has been 
assigned to it (so far).
In step~$t$, we consider the layer~$G_t$.
We color a so far uncolored vertex~$v$ with color~$i\in [k]$ if $v$ has at 
least one 
neighbor in~$G_t$ that is colored with color~$i \in [k]$ and no neighbor in 
$G_t$ that 
is colored with 
any other color~$j \in [k] \setminus \{i\}$.
Every uncolored vertex with at least two neighbors in~$G_t$ colored by two 
different 
colors~$i,j \in [k]$ is colored gray.

\paragraph{Temporal Voronoi games.}
In a \textit{temporal Voronoi game},
at time~$t$ a vertex~$v$ is colored with color~$i\in [k]$
if~$p_i$ is the only position of a player that can reach~$v$ until time~$t$ by a temporal walk. 
(If an uncolored vertex is simultaneously reached by two or more players, then it is colored gray.)
In effect, $v \in U_i(p_1, \dots, p_k)$ if and only if
$\td(p_i,v)<\td(p_j,v)$ 
holds for all $j\neq i$.
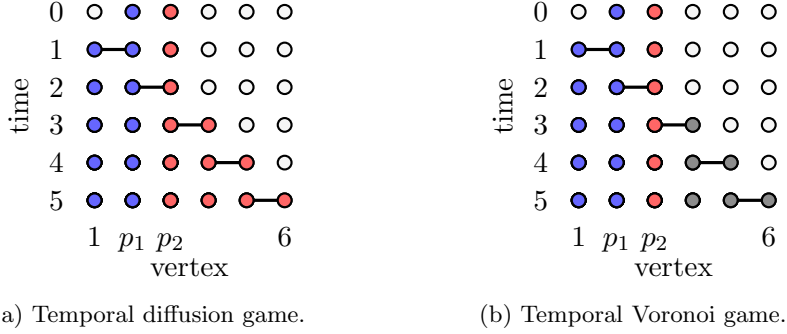
\begin{figure}[t]
	\centering
	\begin{subfigure}[t]{.4\textwidth}
		\flushleft
		\begin{minipage}[t]{\textwidth}
			\center 
			\begin{tikzpicture}[scale=0.5]
			
			\gamegrid{6}{5}
			\def\p{2}
			\def\q{3}

			\draw[edge]
				\foreach \t [evaluate={\t as \tt using int(\t+1)}] in {1,...,5} {
					(v-\t-\t) -- (v-\tt-\t)
				};
			
			\colorvertex{\p}{0}{vertexP};
			\colorvertex{\q}{0}{vertexQ};
			
			\foreach \t in {1,...,5} {
				\colorvertex{\p}{\t}{vertexP};
				\colorvertex{1}{\t}{vertexP};
				\colorvertex{\q}{\t}{vertexQ};
			}
			
			\foreach \t in {3,4,5} {
				\colorvertex{4}{\t}{vertexQ};
			}
			\foreach \t in {4,5} {
				\colorvertex{5}{\t}{vertexQ};
			}
			\colorvertex{6}{5}{vertexQ};
					
			\path[every node/.style={anchor=base}]
				(0, \vertexlabely)
				+(1, 0) node {$1$}
				+(\p, 0) node {$p_1$}
				+(\q, 0) node {$p_2$}
				+(6,0) node {$6$}
				;
			\end{tikzpicture}
			\caption{Temporal diffusion game.}
			\label{ex:diff}
		\end{minipage}
	\end{subfigure}%
	\begin{subfigure}[t]{.4\textwidth}
		\centering
		\begin{minipage}[t]{\textwidth}
			\center 
			\begin{tikzpicture}[scale=0.5]
					\gamegrid{6}{5}
			\def\p{2}
			\def\q{3}

			\draw[edge]
				\foreach \t [evaluate={\t as \tt using int(\t+1)}] in {1,...,5} {
					(v-\t-\t) -- (v-\tt-\t)
				};
			
			\colorvertex{\p}{0}{vertexP};
			\colorvertex{\q}{0}{vertexQ};
			
			\foreach \t in {1,...,5} {
				\colorvertex{\p}{\t}{vertexP};
				\colorvertex{1}{\t}{vertexP};
				\colorvertex{\q}{\t}{vertexQ};
			}
			
			\foreach \t in {3,4,5} {
				\colorvertex{4}{\t}{vertexG};
			}
			\foreach \t in {4,5} {
				\colorvertex{5}{\t}{vertexG};
			}
			\colorvertex{6}{5}{vertexG};
			
			\path[every node/.style={anchor=base}]
				(0, \vertexlabely)
				+(1, 0) node {$1$}
				+(\p, 0) node {$p_1$}
				+(\q, 0) node {$p_2$}
				+(6,0) node {$6$}
				;
			\end{tikzpicture}
			\caption{Temporal Voronoi game.}
			\label{ex:vor}
		\end{minipage}
	\end{subfigure}%
	\caption{Example of our two games on the temporal path 
	$\mathcal{G}=([6],E_1,\dots, E_5)$ with $E_t=\{\{t,t+1\}\}$ for $t\in [5]$, 
	where player~1 
	selects vertex $p_1=2$ and player~2 selects vertex $p_2=3$. In the diffusion game, the 
color of a player cannot ``pass'' a 
vertex colored by the other player. Thus, the two players split the vertices, 
that is, player~1 colors the vertices in~$[1,2]$, while player~2 colors the 
vertices in $[3,6]$. In contrast to this, in the Voronoi game, player~1 is able 
to ``catch up'' with player~2: They both reach the vertices $4$, $5$, 
and~$6$ at the same time. The displayed strategy profile is a Nash 
equilibrium in $\Vor(\mathcal{G},2)$ but not in $\Diff(\mathcal{G},2)$. }
	\label{example}
\end{figure} 

\smallskip
Note that we defined temporal diffusion games and temporal
Voronoi games such that both temporal games played on a temporal graph 
$\mathcal{G}$ with lifetime~1 are equivalent to the (non-temporal) 
game played 
on the static graph $\mathcal{G}_\downarrow$. An example of a temporal 
diffusion game and a temporal Voronoi game is shown in 
\Cref{example}. 
Notably, in the displayed temporal diffusion game each player colors a superset 
of the 
vertices they color in the temporal Voronoi game.
In fact, by definition of the two games, this holds for every temporal graph 
(as in the static case).

\paragraph{Difference games.}
We also define \emph{difference} versions of both temporal diffusion and Voronoi games, as they will prove useful in analyzing the classical versions.
We only define these for $k=2$ players, since there is no single canonical generalization to more players.
The key difference in difference games is that the payoff of player~1 becomes $\Delta(p_1, p_2) := u_1(p_1, p_2) - u_2(p_1, p_2)$ and the payoff of player~2 $\Delta(p_2, p_1) = -\Delta(p_1, p_2)$.
Note that this turns the game into a zero-sum game. 
Additionally, in these games, the second player is forbidden to select the same vertex as the first player --- otherwise she would always be able to force a tie.

We denote the difference versions of temporal diffusion and Voronoi games by $\dDi(\Gg, 2)$ and $\dVor(\Gg, 2)$, respectively.
We also set $\Delta^t(p_1, p_2) := u_1^t(p_1, p_2) - u_2^t(p_1, p_2)$.

It is interesting to note that, in two-player games where all vertices are eventually colored (possibly in gray),
difference games are equivalent to classical games in which the players are each awarded half a point for gray vertices.

\section{Temporal diffusion games} \label{se:TDG}
As already mentioned, we only consider the 2-player setting here and in the rest of the paper.
We start with temporal trees and then move to temporal cycles.

\subsection{Temporal trees} \label{se:TDG_tree}
We first prove that a Nash equilibrium is in general not 
guaranteed to exist in temporal diffusion games, not even on temporal paths. 
Afterwards, we show that as soon as a temporal tree is temporally connected, every game admits a Nash equilibrium.

We start by showing that the temporal diffusion game on the temporal path  
depicted in 
\Cref{example} (where 
every edge of the underlying graph only occurs in one layer) 
does not 
admit a Nash equilibrium. This is in contrast to the 
static case where a Nash equilibrium is guaranteed to exist on every path 
\cite[Theorem 1]{roshan}. 
\begin{theorem} \label{general_path}
	There is a temporal path~$\Pp$
	such that there is no Nash equilibrium in $\Diff(\Pp,2)$.
\end{theorem}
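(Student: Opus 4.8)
The plan is to take the concrete temporal path $\Pp$ from \Cref{example}, i.e.\ $\Pp = ([6],E_1,\dots,E_5)$ with $E_t=\{\{t,t+1\}\}$, and to show directly that its best-response correspondence has no fixed pair. Since each player has only six pure strategies the game is finite, so it suffices to compute every player's best response as a function of the opponent's position and to verify that no profile is a mutual best response.

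The first and most important step is a structural description of the diffusion on $\Pp$. I would first observe that every layer contains exactly one edge, so no vertex ever has two neighbours within a single layer; hence (unless both players start on the same vertex, which trivially yields payoff $0$ and is never a best response) no vertex is ever coloured gray and the resulting coloring is completely deterministic. I would then prove a one-source propagation lemma: a player starting at $v$ colours $v$ at time $0$, colours $v-1$ at time $v-1$, and colours each $v+j$ (for $j\ge 1$) at time $v+j-1$ via the rightward cascade, but reaches no vertex left of $v-1$ --- because the edge $\{v-2,v-1\}$ exists only at time $v-2$, strictly before $v-1$ receives its colour. In words: colour spreads freely to the right but only a single step to the left.

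Combining this for two distinct positions $p<q$, the rightward cascade of the left source reaches every contested vertex $x$ with $p<x\le q-1$ (at time $x-1$) before the right source could grab it by its single leftward step (at time $q-1$), while that cascade is blocked at the right source's vertex $q$; symmetrically the right source keeps all of $\{q,\dots,6\}$. This yields the exact coloring: the left source colours $\{\max(1,p-1),\dots,q-1\}$, the right source colours $\{q,\dots,6\}$, and the vertices $\le p-2$ stay uncolored. I expect this step --- pinning down the coloring, including the boundary cases $p=1$ and $q=p+1$ and the blocking at $q$ --- to be the main obstacle; everything afterwards is bookkeeping.

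From the coloring I read off the payoffs: a player choosing $p$ against an opponent at $q$ gets $q-\max(1,p-1)$ if $p<q$ and $7-p$ if $p>q$. Maximising each branch, the best response against $q$ yields $q-1$ by moving into $\{1,2\}$ (the ``left'' option) and $6-q$ by moving to $q+1$ (the ``right'' option), so the best-response positions for $q=1,\dots,6$ are $2,3,4$ and then $\{1,2\},\{1,2\},\{1,2\}$. A short case check then shows that none of these profiles is mutual --- the best responses form a cycle $1\to 2\to 3\to 4\to\{1,2\}$ with no fixed pair --- so $\Diff(\Pp,2)$ admits no Nash equilibrium.
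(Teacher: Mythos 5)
Your proposal is correct and follows essentially the same route as the paper: the same witness path $\Pp=([6],E_1,\dots,E_5)$ with $E_t=\{\{t,t+1\}\}$, the same characterization that for $p<q$ the left player colors $[\max(1,p-1),\,q-1]$ and the right player colors $[q,6]$ (which the paper asserts without proof and you justify via the propagation lemma), followed by a finite check that no profile is a mutual best response. The only cosmetic difference is that you tabulate the full best-response correspondence where the paper runs a shorter targeted contradiction ($p_2=p_1+1$, then $p_1\in\{1,2\}$, then deviate to vertex~$4$); both verifications are sound.
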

\begin{proof}
	We prove the theorem by showing that for the temporal path~$\mathcal{P}= 
	([6], (E_1,\ldots,E_{5}))$ with lifetime~5 and $E_t = \{ \{t, t+1\}\}$ for $t\in [5]$ (see 
\Cref{example}),
        there is no Nash equilibrium in~$\Diff(\mathcal{P},2)$.
        For the sake of contradiction, 
	assume that~$(p_1, p_2)$ with $p_1<p_2$ is a Nash equilibrium  
	in~$\Diff(\mathcal{P},2)$. Player $1$ colors all vertices 
in~$[\max(1,p_1-1),p_2-1]$ and player $2$ all vertices in $[p_2,n]$.
	
	Clearly, $p_2=p_1+1$ must hold, as otherwise player 2 can color 
	additional vertices by moving to vertex $p_1+1$. Moreover, if $p_1\geq 3$, 
        then player~1 
	can additionally color vertex~$1$ by moving to vertex~$1$. Thus, 
	$p_1\in \{1,2\}$, $p_2=p_1+1$, and thereby~$p_2\in \{2,3\}$. However, this implies 
that player~1 colors at most two 
	vertices and can increase her payoff by moving to vertex~4, thereby 
	coloring three vertices.
\end{proof}

\subsubsection{Temporally connected trees}
\label{sec:superset-trees}
If all vertices can pairwise reach each other (i.e., if the graph is temporally connected), then a Nash 
equilibrium for temporal diffusion games is guaranteed to exist on temporal trees.

An important observation to be made here is that when the two players pick vertices~$p_1$, $p_2$,
then player 1 will color (at least) all the connected components of $\footprint{\Tt} - p_1$ except for the one containing~$p_2$.
Symmetrically, this also holds for player~2.

We can thus observe the following.
\begin{lemma}\label{thm:diff-tree-neighbors}
	Let~$\Tt$ be a temporally connected tree and let $(p_1, p_2)$~be any Nash equilibrium of $\Diff(\Tt, 2)$.
	Then $p_1$~and~$p_2$ must be adjacent.
\end{lemma}
\begin{proof}
	If $p_1$, $p_2$ are not adjacent, then let $x \notin \{p_1, p_2\}$ be any vertex on the path connecting $p_1$~and~$p_2$.
	Without loss of generality, $x \notin U_1(p_1, p_2)$.
	Then $U_1(x, p_2) \supset U_1(p_1, p_2)$, i.e., player 1 can improve by moving to~$x$.
\end{proof}

Note that in the situation where both players start on adjacent vertices, the payoffs are identical to the static diffusion game played on $\footprint{\Tt}$.
From \cref{thm:diff-tree-neighbors}, it is then not hard to see that a Nash equilibrium must always exist and that it will always be located at the \emph{centroid} of $\footprint{\Tt}$
(this is a vertex~$c$ minimizing the maximum size of any component of $\footprint{\Tt} - c$).
\begin{theorem}\label{thm:diff-tree}
	Let $\Tt$ be a temporally connected temporal tree.
	Then $\Diff(\Tt, 2)$ contains a Nash equilibrium which can be computed in linear time.
\end{theorem}

The formal proof of \cref{thm:diff-tree} proceeds exactly as in the static case (\cite[Thm.~4]{roshan}), thus we omit it.
Note that the centroid of a graph can be computed in linear time by a folklore dynamic programming algorithm.

We remark that \cref{thm:diff-tree} can also be extended to disjoint unions of temporally connected trees.
We can then simply apply \cref{thm:diff-tree} to a largest tree.
If the resulting payoff for either player is less than the size of the second-largest tree, then they will simply switch to that tree.
In either case the resulting strategy profile is a Nash equilibrium.

\begin{figure}[t]
			\center 
			\begin{tikzpicture}[scale=0.7]
			\def\starttime{1}
			\gamegrid{8}{2}
			
			\draw[edge]
				\foreach \v[evaluate=\v as \w using int(\v+1)] in {3,...,7} {
					(v-\v-1) -- (v-\w-1)
					(v-\v-2) -- (v-\w-2)
				}
				(v-1-1) -- (v-2-1) -- (v-3-1)
				(v-1-2) -- (v-2-2)
				;
			
			\path[every node/.style={anchor=base}]
				(0, \vertexlabely)
				+(1, 0) node {$1$}
				+(8,0) node {$8$}
				;
			\end{tikzpicture}
			\caption{A monotonically shrinking temporal path $\mathcal{P}$ for 
which neither in $\Diff(\mathcal{P},2)$ nor in $\Vor(\mathcal{P},2)$ 
a Nash equilibrium exists.}
			\label{ex:diffms}
\end{figure}
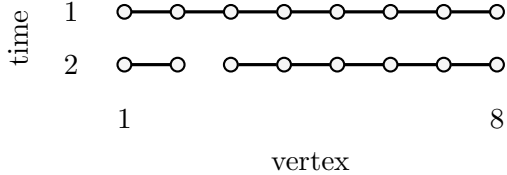
\subsubsection{Monotonically shrinking paths}
We have seen above that a Nash equilibrium is guaranteed 
to exist on temporally connected and thereby also on monotonically growing temporal paths. 
It turns out that enforcing the opposite, that is, the graph is monotonically 
shrinking, does not guarantee the existence of a Nash equilibrium:
\begin{theorem}\label{thm:shrink}
    There is a monotonically shrinking temporal path~$\Pp$ consisting of two 
layers which only differ in one edge
    such that there is no Nash equilibrium in  $\Diff(\mathcal{P},2)$.
\end{theorem}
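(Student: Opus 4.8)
The plan is to analyze the two-layer temporal path $\Pp = ([8], E_1, E_2)$ depicted in \Cref{ex:diffms}, where $E_1 = \{\{i,i+1\} : i \in [7]\}$ is the full underlying path and $E_2 = E_1 \setminus \{\{2,3\}\}$ deletes the single edge $\{2,3\}$. Since $E_2 \subseteq E_1$ and the two layers differ in exactly one edge, $\Pp$ is monotonically shrinking of the required form, so it suffices to show that $\Diff(\Pp,2)$ has no Nash equilibrium. The whole argument rests on one structural observation: the edge $\{2,3\}$ is present only in the first layer, so a color can cross the ``gap'' between the blocks $\{1,2\}$ and $\{3,\dots,8\}$ only during step $1$. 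Concretely, a player seated at vertex $2$ or $3$ at time $0$ can push her color across the gap, reaching the far endpoint already in step~$1$, whereas a player who merely \emph{arrives} at vertex $3$ at some step $\ge 1$ is always ``too late'': by the time her front would cross, the edge is gone. In particular, if both players start in $\{4,\dots,8\}$, then vertices $1$ and $2$ can never be colored and are wasted.

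To carry this out, I would first use that on a path the diffusion process colors each vertex by the color of the source with the smaller diffusion-arrival time (ties yielding gray), so the game reduces to comparing two arrival-time profiles; the only subtlety is that crossing the gap obeys the one-step rule above. Assuming for contradiction that $(p_1,p_2)$ is a Nash equilibrium and taking without loss of generality $p_1 < p_2$ (coincident positions color nothing and are never a best response), I would split into the cases $p_1 = 1$, $p_1 = 2$, $p_1 = 3$, and $p_1 \ge 4$ according to the position of the left player relative to the gap, and within each case compute both payoffs as an explicit function of $p_2$.

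In each case I would then exhibit a profitable deviation, and two patterns recur. First, whenever the left block $\{1,2\}$ is wasted or underused --- most clearly when both players sit in $\{4,\dots,8\}$ --- the player with the smaller payoff can improve, typically by relocating to vertex $2$ or $3$ to seize $\{1,2\}$ while simultaneously injecting her color into $\{3,\dots,8\}$ across the gap in step~$1$ (and, in a few remaining profiles, by shifting within the right block); a short distance comparison shows her count strictly increases. Second, in the near-balanced profiles where the two players would otherwise split the graph $4$--$4$ (for instance $(2,7)$ or $(3,6)$), the right player can slide toward vertex $4$ and thereby capture the entire block $\{4,\dots,8\}$, raising her payoff from $4$ to $5$ --- either by turning the boundary vertex $3$ gray or by conceding only vertex $3$ to the left player. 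Verifying that at least one such deviation is available in every profile completes the contradiction.

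The main obstacle I anticipate is precisely the handful of balanced profiles that superficially look like equilibria, namely those in which each player already colors four vertices; here no player can trivially grab more, and one must track carefully how a single shift creates or removes a contested (gray) vertex and thereby lets the deviating player trade a tie for a strictly larger contiguous block. A secondary point requiring care is the timing distinction between \emph{sitting at} vertex $3$ and merely \emph{reaching} vertex $3$ during the process, since only the former permits crossing the vanishing edge; getting this right is what guarantees that relocating to vertex $2$ or $3$ is genuinely beneficial and that the wasted-block argument is valid. Once these delicate cases are pinned down, the remaining profiles admit easy improving moves, and the finite case check establishes that no Nash equilibrium exists.
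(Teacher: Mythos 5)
Your proposal is correct and takes essentially the same route as the paper: the identical witness instance $([8],E_1,E_1\setminus\{\{2,3\}\})$, the same key observation that the vanishing edge $\{2,3\}$ cuts off $\{1,2\}$ from any player starting in $[4,8]$, and a finite case analysis exhibiting a profitable deviation for every profile (including the delicate $4$--$4$ splits such as $(2,7)$ and $(3,6)$, which you resolve exactly as needed). The paper merely organizes the check differently, by first determining the unique best response to each position and then letting the other player deviate; this is a presentational rather than conceptual difference.
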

\begin{proof}
 Let $\mathcal{P}= 
	([8], (E_1,E_2))$ (of lifetime~2) with $E_1 = \{ \{i, i+1\}\mid i\in [7]\}$ and 
$E_2=E_1\setminus \{\{2,3\}\}$ (see \Cref{ex:diffms}).
	For the sake of contradiction, assume that $(p_1,p_2)$ is a Nash 
        equilibrium in~$\Diff(\mathcal{P},2)$. We distinguish two cases:
        
1) $p_1\in[3]$ or $p_2\in[3]$: Let $p_1=i\in[3]$. Then, $p_2=i+1$, as 
this is the unique best response of player~2. Player~1 can improve by deviating 
to vertex $i+2$.

2) $p_1\in [4,8]$ and $p_2\in [4,8]$:
  If~$p_1=i\in[4,5]$,
  then $p_2=i+1$ is the unique best response of player~2 from the 
relevant interval $[4,8]$.
  Player~1 can improve by choosing vertex~3.
  If~$p_1=6$, then~$p_2=3$ is the unique best response of player~2.
  If~$p_1=i\in[7,8]$, then~$p_2=i-1$ is the unique best response of player~2 
from~$[4,8]$.
  Player~1 can improve by choosing vertex~3.
\end{proof}

Intuitively, the reason why a disappearing edge is enough to 
prevent the existence of a Nash equilibrium on a temporal path is that players 
may
want to play in the immediate surrounding of such a disappearing edge, in  
order to color some part of the temporal path that otherwise remains uncolored 
(in \Cref{ex:diffms}, these are the vertices $\{1,2\}$). However, if the 
disappearing edge is not located around the center, then the player close to 
this 
edge is at risk of 
loosing many vertices to the other player. 
Such possibly ``lost'' vertices do not appear in monotonically growing temporal graphs where edges are 
not allowed to disappear.

\subsection{Temporal cycles} \label{se:TDG_cycle}
In this section, we prove that in contrast to paths, a Nash 
equilibrium may fail to exist on temporally connected cycles and monotonically shrinking cycles. However, enforcing 
that edges do not disappear over time is enough to 
guarantee the existence of a Nash equilibrium.
\subsubsection{Temporally connected cycles}
The guaranteed existence of a Nash equilibrium on temporally connected  
trees (\cref{thm:diff-tree}) does not extend to temporally connected 
cycles despite the fact that still all vertices 
will ultimately be colored. This can be shown using the graph depicted in 
\Cref{example} with an additional layer connecting all vertices to a cycle 
and a similar argument as in \Cref{general_path}.

\begin{theorem}\label{sup_cycle}
	There is a temporally connected temporal cycle $\mathcal{C}$
	such that there is no Nash equilibrium in $\Diff(\mathcal{C},2)$.
\end{theorem}
\begin{proof}
	Let $\mathcal{C}= ([n],(E_1,\ldots,E_{n}))$ be the temporal cycle of size~$n 
	\geq 6$ with $E_t = \{ \{t, t+1\}\}$ for $t<n$ and $E_{n}=\{\{n,1\}\}\cup 
	\{\{i, i+1\}\mid i \in [n -1 ]\}$ of lifetime $n$.
	We prove the statement by showing that there is no Nash equilibrium on 
	$\mathcal{C}$.
	Let~$(p_1, p_2)$ be a strategy profile in $\Diff(\mathcal{C},2)$ with~$p_1 
	< p_2$. Then, player~$1$ colors all 
	vertices in~$[p_1,p_2 - 1]$ and player~$2$ colors all vertices in~$[p_2, 
	n]$. If~$p_1 > 1$, then player~$1$ colors vertex~$p_1 -1$. If~$p_1 >2$, 
	then the vertices~$[1,p_1 - 2]$ are distributed evenly between the players 
	with one vertex being colored gray if the number of vertices in~$[1,p_1 - 
	2]$ is odd. 
	We consider two different cases. 
	\begin{enumerate}
		\item First, assume that~$p_2 \neq p_1 +1$. By moving to vertex $p_1 
		+1$ player $2$ colors  more vertices than before. 
		\item Second, assume that~$p_2 = p_1 + 1$. If~$p_1 \geq 3$, then at 
		least one vertex in~$[1,p_1]$ is colored gray or by player~$2$.
		Thus, player $1$ can color additional vertices by moving to vertex 
		$1$. 
		Otherwise, it holds that~$p_1 \leq 2$. In this case, player~$1$ colors 
		at most two vertices. Since we assumed that~$p_2 = p_1 + 1$, it follows 
		that~$p_2 \leq 3$. By moving to vertex $4$ player~$1$ colors 
		all vertices in~$[4,n]$. Since~$n \geq 6$, these are at least three 
		vertices, so that vertex~$4$ is better for player~$1$ than her 
		current vertex.
	\end{enumerate}
\end{proof}

\subsubsection{Monotonically shrinking cycles}
The example in \Cref{ex:diffms} can be modified to show that 
enforcing edges to only disappear over time is not enough to guarantee the 
existence of a Nash equilibrium on a temporal cycle (as in 
the case of temporal paths). 
\begin{theorem}\label{thm:shrinkc}
	There is a monotonically shrinking temporal cycle~$\mathcal{C}$ with lifetime~2
	such that there is no Nash equilibrium in $\Diff(\mathcal{C},2)$.
\end{theorem}
\begin{figure}[t]
			\center 
			\begin{tikzpicture}[scale=0.7]
			\def\starttime{1}
			\gamegrid{11}{2}
			
			\draw[edge]
				\foreach \v[evaluate=\v as \w using int(\v+1)] in {4,...,8} {
					(v-\v-1) -- (v-\w-1)
					(v-\v-2) -- (v-\w-2)
				}
				(v-1-1) -- (v-2-1) -- (v-3-1) -- (v-4-1)
				(v-9-1) -- (v-10-1) -- (v-11-1)
				(v-2-2) -- (v-3-2)
				(v-1-1) -- +(-0.3, 0)
				(v-11-1) -- +(0.3, 0)
				;
			\draw[edge,dotted]
				(v-1-1) ++(-0.3,0) -- +(-0.4,0)
				(v-11-1) ++ (0.3,0) -- +(0.4,0)
				;
			
			\path[every node/.style={anchor=base}]
				(0, \vertexlabely)
				+(1, 0) node {$1$}
				+(4, 0) node {$4$}
				+(11,0) node {$11$}
				;
			\end{tikzpicture}
			\caption{A monotonically shrinking temporal cycle $\mathcal{C}$ for 
which no Nash equilibrium in $\Diff(\mathcal{C},2)$
exists.}
			\label{ex:diffmsc}
\end{figure}
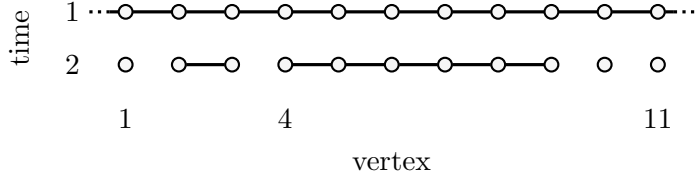

\begin{proof}
 We prove the theorem by considering a temporal cycle (see \Cref{ex:diffmsc}) similar to the 
monotonically shrinking temporal path without Nash equilibrium used in 
\Cref{thm:shrink}.

Specifically, et $\mathcal{C}=([11], (E_1,E_2))$ (of lifetime $2$) with $E_{1}=\{\{i,i+1\}\mid i\in [10] 
\}\cup \{\{11,1\}\}$ and 
$E_{2}=E_{1}\setminus\{\{1,2\},\{3,4\},\{9,10\},\{10,11\},\{11,1\}\}$. For the 
sake of contradiction, let $(p_1,p_2)$ be a Nash equilibrium in 
$\Diff(\mathcal{C},2)$. 

\begin{compactenum}
\item $p_1\in[2]$ or $p_2\in[2]$: Let $p_1=i\in[2]$. Then there exist two best 
responses of player 2, that is vertex $3$ and vertex $10$: If $p_2=3$, then 
player 1 can improve by deviating to vertex $4$. If $p_2=10$, then player $1$ 
can improve by moving to vertex $9$.
\item $p_1\in[3,4]$ or $p_2\in[3,4]$: Let $p_1=i\in[3,4]$. Then, $p_2=i+1$ is 
the unique best response of player 2. Player~1 can improve by moving to vertex 
$i+2$. 
\item $p_1=10$ or $p_2=10$: Let $p_1=10$. Then, $p_2=9$ is the unique best 
response of player 2. Player 1 can improve by moving to vertex $8$. 
\item $p_1=11$ or $p_2=11$: Let $p_1=11$. Then, $p_2=3$ or $p_2=4$ are the best 
responses of player 2. In both cases, Player 1 can improve by moving to vertex 
$5$. 
\item $p_1\in [5,9]$ and $p_2\in [5,9]$: 
  If~$p_1=i\in[5,6]$,
  then $p_2=i+1$ is the unique best response of player~2 from $[5,9]$.
  Player~1 can improve by choosing vertex~4.
  If~$p_1=7$, then~$p_2=4$ is the unique best response.
  If~$p_1\in[8,9]$, then the best responses of player~2 from~$[5,9]$ 
is~$p_2=i-1$.
  Player~1 can improve by choosing vertex~4.
\end{compactenum} 
\end{proof}
      
\subsubsection{Monotonically growing cycles} \label{subsub:moncycles}
If we require that edges do not disappear over time,
then a Nash equilibrium is again guaranteed to exist: 
\begin{theorem}\label{mono_cycle}
	On every monotonically growing 
	temporal cycle~$\mathcal{C} = ([n],(E_i)_{i\in [\tau]})$ a Nash equilibrium 
	in $\Diff(\mathcal{C},2)$ exists and
	can be found 
	in~$\mathcal{O}(\tau \cdot n)$ time.
\end{theorem}

Proving \cref{mono_cycle} will be the goal of this subsection.
We split the proof into three parts.
In Part~I, we start by first studying \diDiff games which turn out to be simpler.
In Part~II, we formally analyze the relationship between Nash equilibria in \diDiff games and Nash 
equilibria in temporal diffusion games. Generally 
speaking, we prove that for every monotonically growing 
temporal cycle $\mathcal{C}$, at least one Nash equilibrium of those found in 
$\dDi(\mathcal{C},2)$ is also a Nash equilibrium in $\Diff(\mathcal{C},2)$. 
We put together all our results in Part~III, where we describe 
an algorithm that constructs a Nash equilibrium for each monotonically growing 
temporal cycle.

The relevance of \diDiff{} games for finding Nash equilibria of temporal diffusion games can be motivated by the following observation.
\begin{observation}\label{cycle_24_l2}
For every non-trivial strategy profile $(p_1,p_2)$ and monotonically growing temporal cycle $\mathcal{C}$, no vertex is uncolored and at 
most two vertices are colored gray in $\Diff(\mathcal{C},2)$.
\end{observation}
\begin{proof}
Since~$\mathcal{C}$ is monotonically growing, layer~$\tau$ of~$\mathcal{C}$ 
is 
a cycle. Consequently, the players can eventually spread their color on the cycle
until they are stopped by a vertex that is already 
colored. Since each player starts 
from only
one position, vertices colored by the different players ``meet'' at exactly two 
places. At each place, at most 
one 
vertex can be colored gray. Clearly, no uncolored vertices remain. 
\end{proof}

In the remainder of this subsection,
we will assume that the given monotonically growing temporal 
cycle $\mathcal{C} = ([n], (E_i)_{i\in [\tau]})$ has $n > 1$~vertices
and lifetime~$\tau>1$ (otherwise insert an edgeless layer at the beginning).
Moreover, we may assume by symmetry that the edge $\{n,1\}$ only appears in layer~$\tau$, i.e., $\{n,1\}\notin E_{\tau-1}$.
We denote by
$\stopForest{} = \stopForest(\Cc)$
the temporal linear forest (i.e., disjoint union of paths) obtained from~$\Cc$ by replacing layers~$\tau, \tau+1, \dots$ by empty layers.
Recall that $\Omega_{\Gg}^{\ell}(v)$ is the set of vertices reachable from~$v$ until time~$\ell$ in $\Gg$.
Note that
\[\Omega_{\stopForest}^t(v) = \Omega_{\stopForest}^{\min\{t, \tau-1\}}(v) = \Omega_{\Cc}^{\min\{t, \tau-1\}}(v)\]
holds for all vertices~$v$ and times~$t$.
For the sake of brevity we will abbreviate $\Omega_\Cc^{\tau-1}(v) = \Omega_{\stopForest}^{\tau-1}(v)$
as $\Omega(v)$.

Recall that for two vertices $u,v\in V$ 
we say that \emph{$u$ reaches $v$ until step $\ell$} if $\td(u,v)\leq \ell$.

\paragraph{Part~I: Temporal difference diffusion games.}
 We show how to find a Nash equilibrium in a \diDiff 
game on a monotonically growing temporal cycle.

We can simplify this problem by considering \diDiff games on 
$\stopForest{}$ due to the following lemma.
\begin{lemma}\label{cycle_22_l1}
A strategy 
profile is a Nash equilibrium in $\dDi(\mathcal{C},2)$ if and only 
if it is a Nash equilibrium in $\dDi(\stopForest(\mathcal{C}),2)$.
\end{lemma}
\begin{proof}
Since each player starts spreading her color from one position and 
since layer~$\tau$ of~$\mathcal{C}$ is a cycle,
both players color the same 
number of vertices from step~$\tau$ on. 
It follows that~$\Delta^{\tau-1}(u, v) = \Delta(u,v)$ for 
all~$u,v 
\in [n]$ in $\mathcal{C}$. Consequently, in $\mathcal{C}$, a player can improve~$\Delta(p_1,p_2)$ 
by changing 
her 
position if and 
only if she can improve~$\Delta^{\tau - 1}(p_1,p_2)$ by changing her 
position.
Clearly, $\Delta^{\tau -1}$ in $\mathcal{C}$ is exactly the payoff in $\dDi(\stopForest(\Cc), 2)$.
\end{proof}
Note that when we use $\Delta^{\tau-1}(u,v)$ in the following we will not always state explicitly the graph on which we play, as this will always be $\stopForest(\Cc)$ (and we anyway have that $\Delta^{\tau-1}(u,v)$ in $\Cc$ is equal to $\Delta^{\tau-1}(u,v)$ in $\stopForest(\Cc)$).

Thus, in the following we focus on \diDiff games on $\stopForest(\Cc)$.
In preparation, we prove two elementary observations about reachability in temporal forests.
\begin{observation}\label{cycle_23_l1}
  Let~$u,v,\alpha \in [n]$ be three vertices of $\stopForest{}$ and $t \in \NN$.
  If $\at(u,\alpha)\le t$ and $\at(v,u)\le t - \abs{\alpha- u}$, then~$\at(v,\alpha)\le t$.
\end{observation}
\begin{proof}
	If $\at(u, \alpha) < \at(v, \alpha)$, then every edge between $u$ and $\alpha$ must already be present
	when $\alpha$¸ is reached from~$v$.
	Thus we then have $\at(v,\alpha) \le \at(v,u) + \abs{\alpha -u}\leq t-|\alpha-u|+|\alpha-u|=t$.
\end{proof}

\begin{observation}\label{cycle_23_l2}
	In $\stopForest{}$ it holds for all vertices~$u \leq w < v$ that
	$\at(v,w)\le \at(u,v)+ \abs{v-w}-1$.
\end{observation}\begin{proof}
	All edges between~$u$~and~$v$ (and thus between~$w$~and~$v$) are present at time~$\at(u, v)$ and all later times.
\end{proof}

It turns out that there exists a special type of 
vertices (called \emph{\nc} vertices) from which a Nash equilibrium in $\stopForest{}$ can be easily constructed.
These \nc vertices are defined as 
follows.
(Recall that a \emph{centroid} of a path is a vertex located in the middle of it; if the path has an even number of vertices then both middle vertices are centroids.)
\begin{definition}
A vertex~$v$ is called \emph{\nc} if it is a centroid of~$\Omega(v)$
and maximizes the cardinality of that set.
\end{definition}
For an example of a \nc vertex, see \Cref{cycle_22_fig10}.
We first show that at least one \nc vertex exists in $\stopForest{}$. For this, we use the following 
lemma.

\begin{lemma}\label{cycle_22_l8}
Let $v$ be any vertex and let~$m$ be a central vertex of~$\Omega(v)$.
Then, $\Omega(m) \supseteq \Omega(v)$.
\end{lemma}
\begin{proof} 
	Let~$\Omega(v) =: [\alpha, \beta]$. We can assume that~$v < m$ by symmetry.
	It suffices to show that $m$ reaches $\alpha$ and $\beta$ until time~$\tau-1$.
	In order to show this, for technical reasons, in the following, we work on the monotonically growing temporal linear 
forest $\forest{}$ obtained from~$\stopForest{}$ by replacing the empty layers~$\tau, \tau+1, \dots$ by layer~$\tau-1$ (note that $\forest$ and $\stopForest$ are identical in the first $\tau-1$ layers and thus in particular $m$ reaches the same vertices until step $\tau-1$ in both graphs). 
	By \cref{cycle_23_l2}, in $\forest{}$
	\begin{align}
	\at(m,v) \leq \at(v,m) + \abs{v-m}-1 \label{cycle_23_eq1}.
	\end{align}
	Since~$v$ reaches~$\beta$ until time~$\tau-1$ and 
	since~$v$ has to pass~$m$ in order to reach~$\beta$, we conclude that~$v$ 
	reaches~$m$ until time~$\tau-1 - \abs{m-\beta}$.
	Applying this to \Cref{cycle_23_eq1}, we get
	\begin{align}
	\at(m,v) \leq \tau -1 - \abs{m-\beta} + \abs{v-m}-1 \label{cycle_23_eq2}.
	\end{align}
	Since~$m$ is a central vertex of~$[\alpha, \beta]$, it holds 
	that~$\abs{m-\beta}+1 \geq \abs{m-\alpha}$.
	Applying this to \Cref{cycle_23_eq2}, we get
	\begin{align*}
	\at(m,v) &\leq \tau -\abs{\alpha-m} + \abs{v-m} -1 \\
	&= \tau - \abs{\alpha - v} -1. \label{cycle_23_eq3} 
	\end{align*}
	
	Since~$m$ reaches~$v$ until time~$\tau - \abs{\alpha-v}$, we 
	conclude 
	by \cref{cycle_23_l1} that~$m$ reaches $\alpha$ until time~$\tau-1$.
	Also $m$ clearly reaches $\beta$ until time~$\tau-1$ since $v$ does so and $v < m \leq \beta$.
\end{proof}

From \cref{cycle_22_l8}, the guaranteed existence 
of a \nc vertex directly follows:

\begin{lemma}\label{cycle_22_l9}
There exists at least one \nc vertex in $\stopForest{}$.
\end{lemma}
\begin{proof}
We construct a \nc vertex as follows. 
Let
\[R := \max_{v \in [n]} \; \abs{\Omega(v)}\]
be the maximum number of vertices reachable from any vertex 
in~$\stopForest{}$.
Let~$v\in [n]$ be a vertex that reaches $R$~vertices in~$\stopForest$
and let~$m$ be a central vertex of~$\Omega(v)$. By 
\Cref{cycle_22_l8}, $\Omega(m) \supseteq \Omega(v)$
and since the latter set is maximal, equality must hold.
Consequently,~$m$ is a \nc{} vertex.
\end{proof}

Intuitively, it seems to be a good strategy for a player to play on a \nc 
vertex $v$, as she colors as many vertices as possible in 
$\dDi(\stopForest{},2)$ in the absence of a second player ``stealing'' vertices from her and the other player can 
``steal'' at most half of these vertices (by playing 
adjacent to her). In fact, we will show that if a player plays on a \nc vertex, 
she is guaranteed to color at least as many vertices as the other player in 
$\dDi(\stopForest{},2)$. 

To prove this, we need the following lemma about how the sets of
vertices reachable by the players determine their payoffs.

\begin{lemma}\label{lem:payoff-new}
	Let $u,v\in [n]$ and 
	$[a_u, b_u] := \Omega(u)$ and $[a_v, b_v] := \Omega(v)$.
	If $a_u < a_v \leq b_u < b_v$,
	then in $\dDi(\stopForest{}, 2)$ we have
	$U_1(u, v) = [a_u, \ceil{x}-1]$
	and $U_2(u, v) = [\floor{x}+1, b_v]$
	where $x = (b_u + a_v)/2$.
\end{lemma}
\begin{proof}
	Since $a_v \leq b_u < b_v$, it follows that the edge $\{b_u, b_u+1\}$ appears no later than time~$\tau-1$.
	Thus, we have $\td(u, b_u) = \tau-1$ and analogously $\td(v, a_v) = \tau-1$.
	Also, we clearly have $\td(u, b_u-1) \leq \tau-2$ and $\td(v, a_v+1) \leq \tau-2$.
	If $a_v = b_u$ or $a_v + 1 = b_u$, then we are done.
	Otherwise, $a_v + 1 \leq b_u-1$ and $a_v+1$ and $b_u-1$ are the furthest vertices reached by $v$ resp.\ $u$ until time~$\tau-2$.
	In that case delete layer~$\tau-1$ from~$\stopForest{}$ and recursively apply this lemma
	to see that $U^{\tau-2}_1(u, v) = [a_u', \ceil{x'}-1]$
	and $U^{\tau-2}_2(u,v) = [\floor{x'}+1, b_v']$
	where $a_u' \leq u$, $b_v' \geq v$, and
	$x' = (b_u-1 + a_v + 1)/2 = x$.
	From this the claim follows.
\end{proof}

\begin{lemma}\label{cycle_22_l5}
Let $u<v\in [n]$
such that neither of $\Omega(u)$ and $\Omega(v)$ contains the other.
Then $\Delta^{\tau-1}(u, v)  = \abs{\Omega(u)} - \abs{\Omega(v)}$. 
\end{lemma}
\begin{proof}
	If $\Omega(u)\cap \Omega(v) = \emptyset$, then each 
	player colors exactly the vertices the player reaches and the statement follows.
	Thus, it remains to consider the case~$\Omega(u) \cap \Omega(v) \neq \emptyset$.
	Using the notation of \cref{lem:payoff-new}, this and $u<v$ implies that $a_u<a_v\leq b_u<b_v$. Thus, applying \cref{lem:payoff-new},
	we have that 
	\begin{align*}
		\Delta^{\tau-1}(u, v) &= \ceil{x - 1} - a_u + 1 - (b_v - \floor{x + 1} + 1)
		\\ &= 2x - a_u - b_v
		\\ &= b_u + a_v - a_u - b_v 
		\\ &= \abs{\Omega(u)} - \abs{\Omega(v)}.
		\qedhere
	\end{align*}
\end{proof}

Using \Cref{cycle_22_l5}, we are now ready to show that if a player 
plays on a \nc vertex, then the other player can color at most the same number of vertices.

\begin{lemma}\label{cycle_22_l6}
If~$u$ is a \nc vertex, then~$\Delta^{\tau-1}(u,v)\geq 0$ for all $v\in [n]$. 
\end{lemma}
\begin{proof}
If $u = v$ then the result is obvious.
If $u\neq v$, first consider the case that
that~$\Omega(v) \subseteq \Omega(u)$. Since~$u$ is a central vertex 
of~$\Omega(u)$, player~$1$ colors at 
least half the vertices of~$\Omega(u)$. Thereby, player~$2$ cannot color more 
vertices than player~$1$, thus~$\Delta^{\tau-1}(u,v) \geq 0$.

Otherwise, as~$u$ is a \nc vertex,
$\Omega(u)$ and $\Omega(v)$ must be incomparable.
In this case, it holds that~$\Delta^{\tau-1}(u,v) = 
|\Omega(u)| - |\Omega(v)|$ by \Cref{cycle_22_l5}. Since~$u$ is a \nc 
vertex,~$v$ cannot reach more vertices than~$u$, that is~$\Delta^{\tau-1}(u,v) \geq 0$.
\end{proof}

In the following, we construct a Nash equilibrium using \nc vertices.
With \Cref{cycle_22_l6}, it is easy to show that each strategy profile 
where both players play on different \nc vertices is a Nash equilibrium:

\begin{lemma}\label{cycle_22_l7}
 If~$u$ and~$v$ are different \nc vertices, then~$(u,v)$ is a Nash 
equilibrium in $\dDi(\stopForest{},2)$.
\end{lemma}
\begin{proof}
Since~$u$ and~$v$ are both \nc vertices, we conclude by 
\Cref{cycle_22_l6} that~$\Delta^{\tau-1}(u,v)= 0$.
Additionally, by \cref{cycle_22_l6}, 
no player can improve her payoff in $\dDi(\stopForest{},2)$ by moving to a 
different position. 
As a result,~$(u,v)$ is a Nash equilibrium.
\end{proof}

Next, we show that if only one \nc vertex exists, then the \nc vertex and 
any vertex adjacent to it form a Nash equilibrium (see \Cref{cycle_22_fig10} 
for an example). In 
order to prove that such a strategy profile is always a Nash equilibrium, we 
need the following lemma. 

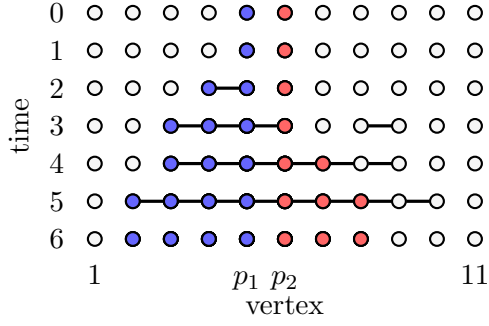
\begin{figure}[t]
\centering
\begin{tikzpicture}[scale=0.5]
	\gamegrid{11}{6}
	\def \p {5}
	\def \q {6}
	
	\draw[edge]
		(v-4-2) -- (v-5-2)
		(v-3-3) -- (v-4-3) -- (v-5-3) -- (v-6-3)
		(v-8-3) -- (v-9-3)
		\foreach \v [evaluate={\v as \w using int(\v+1)}] in {3,...,8} {
			(v-\v-4) -- (v-\w-4)
			(v-\v-5) -- (v-\w-5)
		}
		(v-2-5) -- (v-3-5)
		(v-9-5) -- (v-10-5)
		;
	
	\foreach \t in {0,...,6} {
		\colorvertex{\p}{\t}{vertexP}
		\colorvertex{\q}{\t}{vertexQ}
	}
	\foreach \t in {2,...,6} {
		\colorvertex{4}{\t}{vertexP}
	}
	\foreach \t in {3,...,6} {
		\colorvertex{3}{\t}{vertexP}
	}
	\foreach \t in {5,6} {
		\colorvertex{2}{\t}{vertexP}
		\colorvertex{8}{\t}{vertexQ}
	}
	\foreach \t in {4,5,6} {
		\colorvertex{7}{\t}{vertexQ}
	}
	
	\path[every node/.style={anchor=base}]
		(0, \vertexlabely)
		+(1, 0) node {$1$}
		+(\p, 0) node {$p_1$}
		+(\q, 0) node {$p_2$}
		+(11,0) node {$11$}
		;
	
\end{tikzpicture}
\captionof{figure}{A \diDiff game on $\stopForest{}$
where only one \nc vertex exists. The \nc vertex is 
vertex~$5$ which is the position of player~$1$. Player~$2$ colors only one 
vertex less than player~$1$ by playing directly next to her on vertex~$6$. This strategy profile is a Nash equilibrium.}
\label{cycle_22_fig10}
\end{figure} 

\begin{lemma}\label{cycle_22_l10}
Assume that $m\in [n]$ is a central vertex of $[\alpha, \beta] \subseteq \Omega(m) $
and that the number of 
vertices in~$[\alpha, \beta]$ is odd and larger than~1.
Then $\Omega(m+1) \supseteq [\alpha +1 , \beta]$.
\end{lemma}
\begin{proof}
    Clearly $\Omega(m+1) \supseteq [m+1, \beta]$.
	As~$m$ is the unique central vertex 
	of~$[\alpha, \beta]$, it holds that~$\abs{m+1 - \beta} = \abs{\alpha +1- m}$.
	Since~$m$ reaches~$\beta$ until time~$\tau-1$, it follows that~$m$ reaches~$m+1$ (and thus also $m+1$ reaches~$m$)
	no later than at time
	\[
		\tau -1 - \abs{m+1 - \beta} = \tau-1 - \abs{\alpha+1 - m}.
	\]
	From this we can derive by \cref{cycle_23_l1} that $m+1$~reaches~$\alpha+1$ until time~$\tau-1$.
\end{proof}

Using \Cref{cycle_22_l10}, we show that if
player~$1$ 
plays on the unique \nc vertex and player~$2$ plays next to 
player~$1$, then player 1 colors exactly one vertex more than player 2. Using 
this, 
we prove that the strategy profile is a Nash equilibrium.
\begin{lemma}\label{cycle_22_l11}
If there is a unique \nc vertex~$u$, then $(u, u+1)$ is a Nash equilibrium in 
$\dDi(\stopForest{},2)$. 
\end{lemma}

\begin{proof}
Let $[\alpha_u, \beta_u] := \Omega(u)$.
Then~$R:=|\Omega(u)|$ is the maximum number of 
vertices reachable from any vertex in~$\stopForest{}$.
Note that $R > 1$ since we assume $n > 1$.

Observe that $u < n$ (as $u$ is the unique nice vertex) and write $v := u + 1$.
We start by computing $\Delta^{\tau-1}(u,v)$.
Since $u$ is the only \nc vertex, by \Cref{cycle_22_l10}, $\Omega(v) \supseteq [\alpha_u + 1, \beta_u]$.
We have $\beta_u + 1 \notin \Omega(v)$ or $v$ would be a \nc{} vertex.
As $R$ is odd, player~2 colors the~$\frac{R-1}{2}$ vertices in~$[v,\beta_u]$
and player~1 colors the~$\frac{R-1}{2}+1$ vertices in~$[\alpha_u,u]$.
That is, $\Delta^{\tau-1}(u,v)= 1$. 

Suppose now for the sake of contradiction that player~$2$ can improve her 
payoff by moving from $v$ to some vertex $v' \neq u$ with $\Delta^{\tau-1}(u,v') < 1$. 
We consider two cases. 
\begin{enumerate}
\item First, assume that~$\Omega(u)$ and $\Omega(v')$ are incomparable.
By \Cref{cycle_22_l5}, it holds 
that~$\Delta^{\tau-1}(u,v') = |\Omega(u)| - |\Omega(v')|$. 
Since there is only one \nc vertex, it needs to hold 
that~$|\Omega(v')| < R$. It follows 
that~$\Delta^{\tau-1}(u,v') \geq 1$ leading to a contradiction.
\item Otherwise, it holds that~$\Omega(u) \subseteq \Omega(v')$ or~$\Omega(v') \subseteq \Omega(u)$. 
Since~$|\Omega(u)| = R$, it cannot hold that~$\Omega(u) \subset \Omega(v')$. 
Consequently,~$\Omega(v') \subseteq \Omega(u)$.  Since $u$ is the central vertex of~$\Omega(u)$,
player~1 colors at least~$\frac{R-1}{2}+1$ vertices and player~2 colors at most~$\frac{R-1}{2}$
vertices, that is,~$\Delta^{\tau-1}(u,v') \geq 1$, which is a contradiction.
\end{enumerate}

For the sake of contradiction, assume that player~$1$ can improve her 
payoff by moving from $u$ to $u'$, that is,~$\Delta^{\tau-1}(u',v) >
1$. 
We consider two cases. 
\begin{enumerate}
\item First, assume again that $\Omega(u')$ and $\Omega(v)$ are incomparable.
By \Cref{cycle_22_l5}, it holds 
that~$\Delta^{\tau-1}(u',v) = |\Omega(u')| - |\Omega(v)|$.
We showed before that~$|\Omega(v)| \ge R -1$. Since~$R$ 
is the maximum number of vertices reachable from any vertex in~$\stopForest$, 
it holds that $|\Omega(u')|  \leq R$.
Hence,~$\Delta^{\tau-1}(u',v) = |\Omega(u')| - |\Omega(v)| \leq 1$
leading to a contradiction.

\item Otherwise, it holds that~$\Omega(u') \subseteq \Omega(v)$ or~$\Omega(v) \subseteq \Omega(u')$.
First, assume that~$\Omega(u') \subseteq \Omega(v)$.
Recall that we have argued above that $[\alpha_u+1,\beta_u]\subseteq \Omega(v)$.
Since~$v=u+1$, $v$ reaches $\frac{R-1}{2}-1$ vertices to the right of $v$ 
and at least~$\frac{R-1}{2}$ vertices to the left of~$v$ in $\stopForest{}$. Thus, 
independent of whether~$u'$ is left or right of $v$, player~2 colors at 
least $\frac{R-1}{2}$ vertices and, as $\Omega(u') \subseteq \Omega(v)$, player~1
colors at most   $\frac{R-1}{2}+1$ vertices. It follows  that~$\Delta^{\tau-1}(u',v) \leq 1$. 
Otherwise,~$\Omega(v) \subset \Omega(u')$, which implies that~$\Omega(v)=[\alpha_{u}+1,\beta_u]$ and $\Omega(u')=\Omega(u)$ since~$u$ is the unique \nc vertex.
Again, player~2 colors at least $\frac{R-1}{2}$ vertices and player~1 colors at most~$\frac{R-1}{2}+1$ vertices.
\qedhere
\end{enumerate}
\end{proof}

We now have everything at hand to find a Nash 
equilibrium in $\dDi(\stopForest{},2)$.
By \cref{cycle_22_l1}, this result can be directly
extended to 
\diDiff games on monotonically growing temporal cycles.

\begin{lemma}\label{cycle_22_l14}
If~$u$ and~$v$ are different \nc vertices, 
then~$(u,v)$ is a 
Nash equilibrium in $\dDi(\mathcal{C},2)$.

Otherwise, let~$v$ be the unique \nc vertex. 
Then,~$(v, v+1)$ is a Nash equilibrium in $\dDi(\mathcal{C},2)$.
\end{lemma}
\begin{proof}
By \cref{cycle_22_l9,cycle_22_l7,cycle_22_l11},
the claims hold for~$\dDi(\stopForest{},2)$.
By \cref{cycle_22_l1}, this result transfers to~$\dDi(\Cc, 2)$.
\end{proof}

\paragraph{Part~II: From difference diffusion games to diffusion 
games.}
In Part~I, we described how to find Nash equilibria in \diDiff 
games on monotonically growing temporal cycles based on the notion of \nc 
vertices. We now prove that, for each monotonically growing temporal cycle 
$\mathcal{C}$, at least one Nash equilibrium in 
$\dDi(\mathcal{C},2)$ described in \Cref{cycle_22_l14} is also a Nash 
equilibrium in $\Diff(\mathcal{C},2)$.

To this end, we prove that in several cases a Nash equilibrium $(u,v)$ 
in $\dDi(\mathcal{C},2)$ colors at most one vertex in gray. This directly 
implies that $(u,v)$ is also a Nash equilibrium in $\Diff(\mathcal{C},2)$ 
as proven in the following lemma: 
\begin{lemma}\label{cycle_24_l1}
If $(u,v)$ is a Nash equilibrium in 
$\dDi(\mathcal{C},2)$
which results in at most one gray vertex,
then $(u,v)$ is a Nash equilibrium in 
$\Diff(\mathcal{C},2)$.
\end{lemma}
\begin{proof}
Without loss of generality and for the sake of contradiction, assume that 
player~$1$ gets a higher payoff by moving to vertex~$u'$. Since~$(u, 
v)$ is a Nash equilibrium in $\dDi(\mathcal{C},2)$, the payoff of player~$2$ 
increases by at 
least the same amount. However, since all but at most one vertex is colored by 
one of 
the 
players for~$(u,v)$ in $\Diff(\mathcal{C},2)$, it is not possible that both 
players color an additional vertex.
\end{proof}

We observe that if the two players play next to each other, then 
at most one vertex is colored gray.
This yields two easy cases based on \Cref{cycle_22_l14} in which a Nash 
equilibrium exists: 
\begin{lemma} \label{le:27}
\begin{itemize}
 \item[1.] If $v$ is the unique \nc vertex, then 
$(v,v+1)$ is 
a Nash equilibrium in $\Diff(\mathcal{C},2)$. 
\item[2.] Let $\Omega$ be a maximum-size set of vertices reachable from a 
vertex in $\stopForest{}(\mathcal{C})$. If $|\Omega|$ is even,
then its two central vertices form a Nash equilibrium in 
$\Diff(\mathcal{C},2)$. 
\end{itemize}
\end{lemma}
\begin{proof}
  In both cases the players play on adjacent vertices and the resulting 
  strategy profile is a Nash equilibrium in $\dDi(\mathcal{C},2)$ by 
  \cref{cycle_22_l14}.
  As the players start on adjacent vertices, there exists at most one gray vertex.
  Thus \cref{cycle_24_l1} implies the claim. 
\end{proof}

It remains to consider the case where there exist two \nc vertices in 
$\stopForest(\mathcal{C})$
and the size of the maximum set of vertices reachable by any vertex in 
$\stopForest(\mathcal{C})$ is odd. In this 
case, there do not 
always exist two adjacent \nc vertices.
However, we can prove that in this case there either exists a pair of \nc vertices such 
that at 
most one vertex is colored gray if the players select them or all pairs of \nc 
vertices are Nash equilibria. To be able to make this 
distinction, 
we define the distance between two vertices as follows:
\begin{definition}
Let~$u,v \in [n]$ with~$u<v$. The distances between~$u$ and~$v$ are 
defined 
as~$d_1(u,v) := v-u$ and~$d_2(u,v) := n - (v - u)$.
\end{definition}

We can show that we get two gray vertices on a monotonically 
growing 
temporal cycle $\mathcal{C}$ where the maximum number of vertices reachable 
from any vertex in~$\stopForest{}(\mathcal{C})$ is odd if and only if the 
distances between the 
positions of the players are even in both directions:

\begin{lemma}\label{cycle_24_l4}
Let $u, v$ be two distinct \nc vertices,
each reaching an odd number of vertices in~$\stopForest$.
Then, the strategy profile $(u,v)$ in $\Diff(\mathcal{C},2)$
results in two gray vertices if and only if
both~$d_1(u,v)$ and~$d_2(u,v)$ are even.
\end{lemma}
\begin{proof}
Say $u < v$.
We will show that one vertex of $[u, v]$ is colored gray if and only if $d_1(u, v)$ is odd.
The same arguments can then be also applied to the complement part of $\Cc$.

By assumption, $\abs{\Omega(u)} = \abs{\Omega(v)}$ is odd and $u, v$ are the central vertices of $\Omega(u)$, $\Omega(v)$ respectively.
In particular, the sets
$\Omega'_u := \Omega(u) \cap [u, v]$
and
$\Omega'_v := \Omega(v) \cap [u, v]$
have the same size.

Now if $\Omega'_u \cap \Omega'_v \neq \emptyset$, then the claim follows from \cref{lem:payoff-new} (with the possibly gray vertex being~$x$).
Otherwise, until time~$\tau -1$ the two players color exactly $\Omega'_u$ resp.\ $\Omega'_v$.
At each subsequent time step each player reaches exactly one new vertex from the set~$[u, v] \setminus (\Omega(u) \cup \Omega(v))$.
If $d_1(u, v)$ is odd, then this set has odd size, implying that eventually its central vertex will be colored gray.
Otherwise, if $d_1(u,v)$ is even, then this set has even size so no vertex will be colored gray.
\end{proof}

From \Cref{cycle_24_l4} we can derive the following:
\begin{lemma}\label{le:31}
Let $u, v$ be two distinct nice vertices, each reaching an odd number of vertices in~$\stopForest{}$.
If $d_1(u,v)$ or $d_2(u,v)$ are odd, 
then $(u,v)$ is a Nash 
equilibrium in $\Diff(\mathcal{C},2)$.
\end{lemma}
\begin{proof}
	By \Cref{cycle_24_l4}, we can conclude that the strategy profile $(u,v)$ 
	results in only at most one vertex being colored in gray in 
	$\Diff(\mathcal{C},2)$. Moreover, by \Cref{cycle_22_l14}, $(u,v)$ is a Nash 
	equilibrium in $\dDi(\mathcal{C},2)$. Applying \Cref{cycle_24_l1}, it 
	follows that $(u,v)$ is a Nash equilibrium in $\Diff(\mathcal{C},2)$.
\end{proof}

By \cref{le:27,le:31} it only remains to consider the case 
where there exist two or more \nc{} vertices, each reaching an odd number of vertices in~$\stopForest{}$
and where all pairs of \nc vertices have even distance in both directions. We 
will prove that in this case, in fact, every pair of \nc vertices is a Nash 
equilibrium: 

\begin{lemma}\label{cycle_24_l5}
Let the \nc vertices be $v_1, \dots, v_z$ where $z \ge 2$.
Assume that each of them reaches an odd number of vertices in $\stopForest{}$
and that for all~$i\neq j \in [z]$, 
the distances~$d_1(v_i,v_j)$ and~$d_2(v_i,v_j)$ are even. 
Then,~$(v_i,v_j)$ is a Nash equilibrium in $\Diff(\mathcal{C},2)$ for all $i\neq j \in [z]$.
\end{lemma}
\begin{proof} 
Fix some pair of \nc vertices $u, v$. 
For the sake of contradiction, assume that~$(u,v)$ is not a Nash 
equilibrium in $\Diff(\mathcal{C},2)$. Without loss of generality, assume that 
player~$1$ can increase her payoff by $x$ when moving to vertex~$u'$. 

First, observe that by \Cref{cycle_24_l4} and \Cref{cycle_24_l2}, the 
strategy profile $(u,v)$ results in exactly two vertices being colored in gray 
and 
no vertices being uncolored in  $\Diff(\mathcal{C},2)$.

Since~$u$ and~$v$ are both \nc vertices and both 
players color the same number of vertices in $\Diff(\mathcal{C},2)$ after step 
$\tau-1$, we 
conclude by 
\Cref{cycle_22_l14} that~$(u,v)$ is a Nash equilibrium in 
$\dDi(\mathcal{C},2)$ and by \Cref{cycle_22_l6} that $\Delta(u,v)=0$ in 
$\Diff(\mathcal{C},2)$.
As $(u,v)$ is a Nash equilibrium in $\dDi(\mathcal{C},2)$, no player can deviate to color more vertices than the other player. Thus, for player $1$ to be able to increase her payoff by $x$ by moving to $u'$, also the payoff for 
player~$2$ for $(u',v)$ has to increase by~$x$.
This directly implies that $x=1$, as only two vertices 
are colored gray for the strategy profile~$(u,v)$. Moreover, this implies that 
$\Delta(u',v)=\Delta(u,v)=0$ in $\Diff(\mathcal{C},2)$  
and that
the strategy profile~$(u',v)$ results in
all vertices being colored with color $1$ or $2$ in $\Diff(\mathcal{C},2)$.

We distinguish two cases both leading to a 
contradiction to the observations from above.
\begin{enumerate}
\item First, assume that~$\Omega(u') \subseteq \Omega(v)$ or~$\Omega(v) \subseteq 
\Omega(u')$. Since $v$~is \nc{}, this can only mean that~$\Omega(u') \subseteq 
\Omega(v)$. As player~$2$ plays on the only central vertex of~$\Omega(v)$ and 
since both 
players color the same number of vertices in $\Diff(\mathcal{C},2)$ after step 
$\tau-1$, it follows that player~$2$ colors more vertices than 
player~$1$ in $\Diff(\mathcal{C},2)$. This contradicts that~$\Delta(u',v) = 0$  
in $\Diff(\mathcal{C},2)$.
\item Otherwise, $\Omega(u')$ and $\Omega(v)$ are incomparable. 
By \Cref{cycle_22_l5} and since both 
players color the same number of vertices in $\Diff(\mathcal{C},2)$ after step 
$\tau-1$, it holds 
that~$0 = \Delta(u',v) = |\Omega(u')| - 
|\Omega(v)|$ in $\Diff(\mathcal{C},2)$.
Let~$m$ be the central vertex of~$\Omega(u')$. By 
\Cref{cycle_22_l8}, $\Omega(m) = \Omega(u')$.
Thus, by \Cref{cycle_22_l5}, it holds that 
$\Delta^{\tau-1}(m,v) = |\Omega(u')| - |\Omega(v)|=0$.
As the edge~$\{1,n\}$ appears only in layer~$\tau$, at most one vertex can be colored 
gray in the first $\tau-1$ steps. As at most one vertex can be colored gray after 
$\tau-1$ steps, as $\Delta^{\tau-1}(u',v) = 
\Delta^{\tau-1}(m,v) =0$, and as for both $(u',v)$ and $(m,v)$ exactly the 
vertices 
in $\Omega(u') \cup \Omega(v)$ are colored after $\tau-1$ steps, it follows 
that the coloring of the vertices after $\tau-1$ steps for 
$(m,v)$ and for $(u',v)$ in $\Diff(\mathcal{C},2)$ needs to be the same. This 
also directly implies that the coloring of all vertices for $(m,v)$ 
and for $(u',v)$ at the end of $\Diff(\mathcal{C},2)$  needs to be the same. 
Since the strategy profile~$(u',v)$ results in no vertices being colored gray, 
the same holds for the strategy profile~$(m,v)$. However,~$m$ and~$v$ are 
both \nc vertices. By our initial assumption, both $d_1(m,v)$ 
and $d_2(m,v)$ are even. Thus, by \Cref{cycle_24_l4}, it follows that the 
strategy profile $(m,v)$ results in two vertices being colored gray, which 
contradicts our previous observation.
\end{enumerate}
Thus,~$(u, v)$ is a Nash equilibrium.
\end{proof}

Since we have exhausted all cases, we showed that there is a Nash equilibrium in 
every temporal diffusion game on a monotonically growing temporal cycle.
In the next part, we summarize our result in an algorithm.

\paragraph{Part~III: The algorithm.}
\begin{algorithm}[t] \label{al:1}
	\caption{Compute a Nash equilibrium on a monotonically growing 
	cycle.}\label{algo}
	\begin{minipage}{\textwidth}
	\textbf{Input}: Monotonically growing temporal 
	cycle~$\mathcal{C} =([n],(E_i)_{i\in [\tau]})$.
		\\
	\textbf{Output}: A Nash equilibrium in 
$\Diff(\mathcal{C},2)$.
	\\[1em]
	$R:= $ maximum number of vertices that are reachable from any 
	vertex in~$\stopForest{}(\mathcal{C})$. 
	\begin{enumerate}[\hspace{0pt}(1)]
			\item\label{algo1} If~$R$ is even, then return the two central 
vertices of a size-$R$ vertex set that is reachable from some vertex 
in~$\stopForest{}(\mathcal{C})$. 
			\item\label{algo2} If~$R$ is odd, then let~$v_1, v_2, \dots, v_z$ be the 
			\nc{} vertices.
			\begin{enumerate}
				\item\label{algo2b} If~$z \geq 2$:
				\begin{enumerate}
					\item\label{algo2bi} If there is a pair~$v_i\neq v_j$ such 
					that~$d_1(v_i, v_j)$ or~$d_2(v_i, v_j)$ is odd, then 
					return~$(v_i,v_j)$. 
					\item\label{algo2bii} Otherwise, return $(v_1,v_2)$. 
				\end{enumerate}
				\item\label{algo2a} Otherwise, return $(v_1,v_1+1)$.
			\end{enumerate}
		\end{enumerate}
	\end{minipage}
\end{algorithm}

We claim that \Cref{algo} returns a Nash equilibrium for every temporal diffusion game on a 
monotonically growing temporal cycle $\mathcal{C} =([n],(E_i)_{i\in [\tau]})$ in $\mathcal{O}(n\cdot \tau)$ time.
We show the correctness of \Cref{algo} in \Cref{cycle_25_l1} and prove its 
running time in \Cref{cycle_25_l2}.

\begin{lemma}\label{cycle_25_l1}
\Cref{algo} returns a Nash equilibrium for every temporal diffusion game on a 
monotonically growing temporal 
cycle~$\mathcal{C}$.
\end{lemma}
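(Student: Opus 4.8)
The plan is to prove correctness by a case analysis that mirrors the branching structure of \Cref{algo}, dispatching each possible output to the matching existence result from Parts~I and~II. Before that, I would record one preliminary correspondence: the objects the algorithm manipulates, namely central vertices of size-$R$ reachable sets, are exactly the \nc vertices used in those results. Indeed, if $\Omega$ is a reachable set of maximum size $R$ and $m$ is a central vertex of $\Omega$, then \Cref{cycle_22_l8} shows $m$ reaches all of $\Omega$; since $R$ is the maximum, $m$ reaches precisely the vertices of $\Omega$, so $m$ is a \nc vertex. Conversely, every \nc vertex is the central vertex of its (size-$R$) reachable set. \Cref{cycle_22_l9} guarantees at least one \nc vertex exists, so the algorithm always returns a well-defined profile.

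With this in place, I would handle the four output branches in turn. When $R$ is even, the returned pair consists of the two central vertices of a maximum reachable set $\Omega^*$ with $|\Omega^*| = R$ even, so the second item of \Cref{le:27} immediately certifies a Nash equilibrium in $\Diff(\mathcal{C},2)$. When $R$ is odd, I would first argue that the list $v_1,\dots,v_z$ produced in the odd branch is exactly the set of \nc vertices: for odd $R$ each size-$R$ set has a unique central vertex, distinct size-$R$ sets give distinct central vertices (two size-$R$ sets sharing a central vertex would coincide, since that vertex reaches only $R$ vertices), and by the correspondence above each such $v_i$ is a \nc vertex while every \nc vertex appears among them. Hence $z$ genuinely equals the number of \nc vertices. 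Then: if $z \ge 2$ and some pair $v_i \ne v_j$ has $d_1(v_i,v_j)$ or $d_2(v_i,v_j)$ odd, \Cref{le:31} applies to the returned $(v_i,v_j)$; if $z \ge 2$ but all pairwise distances are even in both directions, this is precisely the hypothesis of \Cref{cycle_24_l5}, which certifies $(v_1,v_2)$; and if $z = 1$, the unique \nc vertex is $v_1$, so the first item of \Cref{le:27} shows $(v_1,v_1+1)$ is a Nash equilibrium. Since the branches are determined by the parity of $R$, the value of $z$, and the parities of the pairwise distances, they are mutually exclusive and exhaustive, so the algorithm returns a Nash equilibrium on every input.

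Most of this is a one-to-one dispatch to an already-proven lemma, so the routine parts are the lemma invocations themselves. I expect the only genuine obstacle to be making the preliminary bookkeeping airtight: establishing the identification between ``central vertex of a maximum reachable set'' and ``\nc vertex,'' and in particular proving that the count $z$ really equals the number of \nc vertices, so that the $z=1$ branch correctly captures the unique-\nc-vertex case required by the first item of \Cref{le:27} and the $z \ge 2$ branches correctly feed \Cref{le:31} and \Cref{cycle_24_l5}. Once that correspondence is nailed down, every downstream case follows immediately from the cited results.
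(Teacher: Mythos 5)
Your proposal is correct and follows essentially the same route as the paper: a branch-by-branch dispatch of \Cref{algo}'s outputs to \Cref{le:27}, \Cref{le:31}, and \Cref{cycle_24_l5}. The only difference is that you make explicit the identification between central vertices of maximum reachable sets and \nc vertices (via \Cref{cycle_22_l8}), which the paper's proof leaves implicit; this is a harmless and indeed welcome addition.
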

\begin{proof}
Note first that by \Cref{cycle_22_l14}, every strategy profile returned 
by \Cref{algo} is a Nash equilibrium in $\dDi(\mathcal{C},2)$. In the 
following, we iterate 
over all cases of \Cref{algo} and argue that \Cref{algo} always returns a Nash 
equilibrium in $\Diff(\mathcal{C},2)$.

In Case~\ref{algo1} and 
Case~\ref{algo2a}, \Cref{le:27} applies.  
Otherwise, a strategy profile returned by \Cref{algo} falls under 
Case~\ref{algo2b}. Consequently, there are at least two \nc vertices.
In Case~\ref{algo2bi}, the players play on 
\nc vertices with odd distance in at least one direction. Thus, 
\Cref{le:31} 
applies.  Considering Case~\ref{algo2bii}, all pairs of
\nc vertices have even distance in both 
directions. Thus, \Cref{cycle_24_l5} applies. 
\end{proof}

\begin{lemma}\label{cycle_25_l2}
Let~$\mathcal{C}=([n],(E_i)_{i\in [\tau]})$ be a monotonically growing temporal 
cycle of 
size~$n$. 
\Cref{algo} on~$\mathcal{C}$ runs in~$\bigO(n \cdot \tau)$ time.
\end{lemma}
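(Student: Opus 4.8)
The plan is to show that the dominant cost of \Cref{algo} is computing, for every vertex, the set of vertices it reaches in $\mathcal{F}(\mathcal{C})$, and that each such set can be obtained in $\bigO(\tau)$ time; all subsequent bookkeeping will then fit into $\bigO(n)$ time, giving the claimed $\bigO(n\cdot \tau)$ bound.

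First I would precompute the reachability data. Since $\mathcal{F}(\mathcal{C})$ is monotonically growing, each edge $\{i,i+1\}$ of $\mathcal{F}(\mathcal{C})_\downarrow$ has a well-defined first appearance time; recording these times in an array takes one pass over the layers $E_1,\dots,E_{\tau-1}$ and hence $\bigO(n\cdot\tau)$ time, after which each edge's first appearance can be looked up in $\bigO(1)$ time. Next, for a fixed vertex $v$, I would compute its reachable interval $\Omega_v=[\alpha_v,\beta_v]$ (it is an interval, as already used in \Cref{cycle_22_l3}) by extending outward from $v$: because a strict temporal walk must use strictly increasing times and $\mathcal{F}(\mathcal{C})_\downarrow$ is a disjoint union of paths, reaching the $j$-th vertex to the right of $v$ requires arrival time at least $j$, so the arrival times along each direction satisfy a simple recurrence (each successive arrival time is the maximum of the previous one plus one and the relevant edge's first appearance time). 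Extending as long as the arrival time stays at most $\tau$ yields $\Omega_v$ and, in particular, shows $|\Omega_v|\le 2\tau+1=\bigO(\tau)$; with the precomputed lookup table this costs $\bigO(|\Omega_v|)=\bigO(\tau)$ per vertex, hence $\bigO(n\cdot\tau)$ over all $n$ vertices. Along the way I would record $|\Omega_v|$ and the central vertex (or vertices) of $\Omega_v$.

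With all the $\Omega_v$ in hand, $R=\max_v|\Omega_v|$ and the list of size-$R$ reachable sets together with their central vertices $v_1,\dots,v_z$ are obtained in $\bigO(n)$ time. The only step that is not immediately $\bigO(n)$ is Case~\ref{algo2bi}, where we must decide whether some pair $v_i\neq v_j$ has $d_1(v_i,v_j)$ or $d_2(v_i,v_j)$ odd. Here I would use the identity $d_1(u,v)+d_2(u,v)=n$: if $n$ is odd, then exactly one of the two distances is odd for every pair, so any pair qualifies; if $n$ is even, then $d_1$ and $d_2$ have equal parity and a pair has an odd distance precisely when its two central vertices have different parities, which is detectable by a single linear scan of $v_1,\dots,v_z$. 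This avoids the naive $\bigO(z^2)$ comparison, which could exceed the budget when $\tau$ is small, and keeps Case~\ref{algo2bi} (and every other case, each of which only selects and returns a constant number of central vertices) within $\bigO(n)$.

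The main obstacle is the per-vertex reachability computation: the key observations that make it cheap are that each $\Omega_v$ is an interval, that it contains only $\bigO(\tau)$ vertices (because strictness forces the arrival time to grow at least linearly with path-distance), and that edge first appearance times can be tabulated once for $\bigO(1)$ lookups. Given these, summing the three phases yields $\bigO(n\cdot\tau)+\bigO(n\cdot\tau)+\bigO(n)=\bigO(n\cdot\tau)$.
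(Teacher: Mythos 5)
Your proposal is correct and follows essentially the same approach as the paper's proof: compute each vertex's reachable interval in $\bigO(\tau)$ time, extract $R$ and the central vertices, and resolve Case~\ref{algo2bi} via the parity of $n$ together with a single linear scan over the \nc vertices (your parity-of-central-vertices test is equivalent to the paper's computation of $d_1(v_1,v_i)$ for all $i$). You additionally justify why the per-vertex reachability computation costs only $\bigO(\tau)$ (interval structure, strictness bounding $|\Omega_v|$ by $\bigO(\tau)$, tabulated first-appearance times), a step the paper merely asserts.
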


\begin{proof}
First, we need to compute~$R$. To do so, we compute for each vertex 
$v$ the set $\Omega(v)$ of vertices reached in $\stopForest{}(\mathcal{C})$. This 
can 
be done in $\mathcal{O}(\tau)$ time for each vertex.
Subsequently, we compute the central vertices of each distinct set of vertices 
of size $R$ reachable by some vertex, which can each be done in constant time. 

Additional computation is only needed in Case~\ref{algo2b}, where we know that 
$R$ is odd and~$z \geq 2$. If the number~$n$ of vertices in~$\mathcal{C}$ is 
odd, then, for each $u,v\in [n]$,
either~$d_1(u,v)$ or~$d_2(u,v)$ is odd. Consequently, if~$n$ is odd, 
then~$(v_1,v_2)$ is a solution for Case~\ref{algo2bi}. Otherwise,~$n$ is even. 
It follows that for all~$v_i,v_j$ with~$i \neq j \in [z]$,~$d_1(v_i,v_j)$ is 
even if and only if~$d_2(v_i,v_j)$ is even. Thus, it is enough to consider the 
distances between two \nc vertices in one direction. Without loss of 
generality, assume that~$v_1 < v_i$ with~$i \in [2,z]$. We compute all 
distances~$d_1(v_1,v_i)$. If~$d_1(v_1,v_i)$ is odd, then we found a solution. 
Otherwise,~$d_1(v_1,v_i)$ is even for all~$i \in [2,z]$. 
However, this directly implies that~$d_1(v_j,v_i)$ is even for all~$i\neq j \in 
[2,z]$. Thus, all 
\nc vertices have even distance in both directions, so that 
Case~\ref{algo2bii} holds. We summarize that for Case~\ref{algo2bi} and 
Case~\ref{algo2bii} we only have to compute the distances between~$v_1$ and all 
other \nc vertices~$v_i$ with~$i \in [2,z]$. This can be done 
in~$\mathcal{O}(n)$ time. 

Altogether, \Cref{algo} runs in~$\mathcal{O}(\tau \cdot n)$ time.
\end{proof}

\noindent\Cref{mono_cycle} now directly follows from \Cref{cycle_25_l1} and 
\Cref{cycle_25_l2}.

\section{Temporal Voronoi games}\label{se:vor}

In this section, we study temporal Voronoi games.
In contrast to temporal diffusion games, here, the color of a vertex $v$ is 
determined solely by the temporal distances from the players' positions to $v$.

\subsection{Monotonically shrinking paths and cycles}\label{se:vor1}

In \Cref{se:prel}, we observed that temporal diffusion games and temporal 
Voronoi games might already differ on a simple temporal path. In contrast to 
this, both games are equivalent on 
monotonically shrinking forests  and cycles,
as no foremost walk ever needs to wait at any vertex in these graphs.

\begin{lemma}\label{thm:shrinking_equivalence}
Let $\mathcal{G}= (V,(E_i)_{i\in [\tau]})$ be a monotonically shrinking forest or cycle and let
$p_1,p_2\in V$. For the strategy profile $(p_1,p_2)$, the final coloring of the 
vertices is the same in $\Diff(\mathcal{G},2)$ and in $\Vor(\mathcal{G},2)$.
\end{lemma}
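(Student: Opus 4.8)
The plan is to funnel both games into one scalar per ordered pair of vertices, namely the temporal distance $\td$, and to show that in a monotonically shrinking graph the diffusion front advances at exactly unit speed, i.e.\ at the speed measured by $\td$. First I would record the structural fact behind the hint ``no foremost walk ever needs to wait'': since $E_{t+1}\subseteq E_t$, an edge present at some time is present at every earlier time, so a strict temporal walk can only be blocked, never usefully delayed. Concretely, on a path let $e_1,\dots,e_d$ be the edges of the unique route from $u$ to a vertex $v$ at underlying distance $d=d(u,v)$. A strict walk traversing $e_k$ as its $k$-th edge must do so at a time $t_k\ge k$; by monotonic shrinking $e_k\notin E_k$ forces $e_k\notin E_t$ for all $t\ge k$, so $v$ is reachable from $u$ iff $e_k\in E_k$ for every $k\le d$, and in that case the foremost arrival time is exactly $d$. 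Hence $\td(u,v)=d(u,v)$ when $v$ is reachable and $\td(u,v)=\infty$ otherwise. For a cycle one takes the minimum of this quantity over the two routes around the cycle, and a linear forest decomposes into independent path components, so the path analysis suffices there.

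Next I would prove the core lemma that the diffusion process inherits this same speed: started from $u$ alone, the diffusion front colors $v$ at exactly step $\td(u,v)$, and never if $v$ is unreachable. This goes by induction on the step number: in step $t$ the front spreads across $E_t$ by exactly one edge, so it colors the far endpoint of $e_k$ at step $k$ iff $e_k\in E_k$ and the near endpoint was colored at step $k-1$. This is the very same condition $e_k\in E_k$ (for all $k\le d$) that governs reachability of the strict walk; moreover, once an edge is absent it stays absent, so the front halts permanently exactly where the walk would be blocked. The conventions for times past $\tau$ match in both games (the diffusion ``continues on $G_\tau$'', and the walk definition uses $E_\tau$ for $t_i>\tau$), so the correspondence $\text{front arrival}=\td$ extends beyond step $\tau$ as well.

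With the core lemma in hand I would translate the two-player colorings into arrival times. In $\Diff(\mathcal{G},2)$ a vertex $v$ receives color $i$ precisely when player $i$'s front reaches it strictly before player $j$'s, and gray when the two fronts meet at $v$ simultaneously; the only additional phenomenon is blocking, but a vertex that blocks player $i$'s front is already colored by $j$ (or gray), hence was reached by $j$ no later than by $i$, so along every route the fate of $v$ is decided by the earliest arrival time. Since player $i$'s arrival time at $v$ equals $\td(p_i,v)$, the diffusion rule becomes exactly ``$v$ gets color $i$ iff $\td(p_i,v)<\td(p_j,v)$, gray on a tie,'' which is the definition of the Voronoi coloring, and vertices unreachable from both players stay uncolored in both games. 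On a path the two fronts meet at a single boundary between $p_1$ and $p_2$, which makes the blocking argument immediate. The main obstacle I anticipate is the cycle: each front spreads in both directions and may be obstructed on one side by the opponent, so I would have to argue carefully that the minimum over the two directions taken by $\td$ is mirrored by the front reaching $v$ from whichever side is faster and unobstructed, with simultaneous two-sided meetings accounting for precisely the (at most two) gray vertices.
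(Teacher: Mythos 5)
Your proposal is correct, and it rests on the same key insight as the paper's proof --- that in a monotonically shrinking graph no foremost walk ever needs to wait, so the temporal distance equals the hop count and the diffusion front advances at unit speed --- but it is organized quite differently. The paper argues via two containments: the inclusion ``colored $i$ in $\Vor(\mathcal{G},2)$ implies colored $i$ in $\Diff(\mathcal{G},2)$'' holds on \emph{every} temporal graph (this is recycled from the preliminaries), and for the converse it observes that if $v$ gets color $i$ in the diffusion game, then the unique simple route from $p_i$ to $v$ avoiding $p_j$ is traversed without waiting, so its edge count equals both the front's arrival time and $\td(p_i,v)$ and must be smaller than $\td(p_j,v)$. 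You instead prove an explicit single-source lemma (front arrival time $=\td=$ underlying distance, with reachability characterized by $e_k\in E_k$ for all $k\le d$) and then match the two coloring rules head-on; this is longer but more self-contained, and it makes explicit the fact that the paper's terser argument uses only implicitly, namely that any route passing through a vertex already claimed by the opponent is at least as slow as the opponent's own route from there (formally: a walk from $p_i$ through $p_j$ to $v$ arrives no earlier than $\td(p_i,p_j)+\td(p_j,v)>\td(p_j,v)$, using monotone shrinking to re-time the tail so that it starts at step $1$). Two small points you should still nail down in a full write-up: the cycle case, which you rightly flag, reduces exactly to that re-timing observation, since it shows the minimum over the two routes defining $\td(p_i,v)$ is always attained by a $p_j$-avoiding route whenever player $i$ wins $v$; and gray vertices, which do \emph{not} propagate in the diffusion game, so you should note that the (at most one per meeting point) simultaneously reached vertices never cut off any further vertex that would be colored in the Voronoi game --- on paths and cycles nothing unclaimed lies beyond a meeting point, so this is immediate, but it deserves a sentence.
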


\begin{proof}
	As already noted in \Cref{se:prel}, a vertex $v$ colored with color 
$i\in[2]$ in $\Vor(\Gg,2)$ is colored the same in $\Diff(\Gg,2)$,
as $p_i$ reaching $v$ first implies that $p_i$ also reaches every vertex on a 
foremost walk from $p_i$ to $v$ first.

To see that on monotonically shrinking  forest and cycles the 
converse also holds,
assume that $v$ gets colored with color $1$ in $\Diff(\Gg,2)$.
Note that there is exactly one temporal walk from $p_1$ to $v$ which does not 
use vertex~$p_2$ and no vertex repeatedly.
Since no foremost walk ever needs to wait in $\mathcal{G}$ (as no new edges appear over time), this temporal walk
must have fewer 
edges than any temporal walk from $p_2$ to $v$.
For the same reason, the walk from $p_1$ to $v$ consists of exactly $\td(p_1, 
v)$ edges. Thus,~$\td(p_1, 
v)<\td(p_2, v)$.
\end{proof}

In particular, using \cref{thm:shrinking_equivalence}, we can transfer \cref{thm:shrink} and \cref{thm:shrinkc} to temporal 
Voronoi games:

\begin{corollary}\label{thm:shrinkVor}
    There is a monotonically shrinking temporal path~$\Pp$ and a monotonically 
shrinking temporal cycle $\Cc$ both consisting of two 
layers
    such that there is no Nash equilibrium in $\Vor(\Pp,2)$ and no Nash 
equilibrium in $\Vor(\Cc,2)$.
\end{corollary}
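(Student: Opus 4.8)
The plan is to avoid building any new gadget and instead recycle the examples already constructed for the diffusion game, routing everything through the coloring-equivalence of \Cref{thm:shrinking_equivalence}. Concretely, I would take $\Pp$ to be exactly the monotonically shrinking two-layer temporal path from \Cref{thm:shrink} and $\Cc$ to be the monotonically shrinking two-layer temporal cycle from \Cref{thm:shrinkc}. Both are monotonically shrinking (a path is in particular a linear forest, and the cycle is a cycle), so \Cref{thm:shrinking_equivalence} applies verbatim to each of them.

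The key observation driving the argument is that each player's payoff is a function of the final coloring alone: for any strategy profile $(p_1,p_2)$, player $i$'s payoff equals $u_i(p_1,p_2)=|U_i(p_1,p_2)|$, the number of vertices receiving color $i$ in the final coloring. Since \Cref{thm:shrinking_equivalence} guarantees that, \emph{for every} strategy profile, the final coloring is identical in $\Diff(\cdot,2)$ and $\Vor(\cdot,2)$ on these graphs, the payoff functions $u_1$ and $u_2$ coincide across the two games for all profiles. Hence a deviation strictly improving a player's payoff in one game does so in the other, a vertex is a best response for player $i$ in the Voronoi game if and only if it is a best response in the diffusion game, and therefore a profile is a Nash equilibrium in $\Vor(\Pp,2)$ (resp. $\Vor(\Cc,2)$) if and only if it is one in $\Diff(\Pp,2)$ (resp. $\Diff(\Cc,2)$).

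Combining these two facts finishes the proof: \Cref{thm:shrink} shows that $\Diff(\Pp,2)$ has no Nash equilibrium and \Cref{thm:shrinkc} shows the same for $\Diff(\Cc,2)$, so by the equivalence of Nash-equilibrium sets just established, neither $\Vor(\Pp,2)$ nor $\Vor(\Cc,2)$ admits a Nash equilibrium. Both $\Pp$ and $\Cc$ consist of two layers by construction, which yields the stated corollary.

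There is essentially no hard technical step here, since all the real work is already packaged into \Cref{thm:shrinking_equivalence} and the two nonexistence theorems; the only point requiring a moment of care is that the equivalence in \Cref{thm:shrinking_equivalence} holds for \emph{all} strategy profiles simultaneously, which is exactly what is needed to transport the global Nash-equilibrium structure (and not merely the payoff at a single profile) from the diffusion game to the Voronoi game. I would state this ``payoffs agree on every profile $\Rightarrow$ same best responses $\Rightarrow$ same equilibria'' implication explicitly to make the transfer airtight.
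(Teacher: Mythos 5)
Your proposal is correct and matches the paper's own argument: the corollary is stated as an immediate consequence of \Cref{thm:shrinking_equivalence} applied to the constructions from \Cref{thm:shrink} and \Cref{thm:shrinkc}, which is exactly what you do. Your explicit remark that the profile-wise coloring equivalence yields identical payoff functions and hence identical Nash-equilibrium sets is the (implicit) content of the paper's one-line justification.
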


\subsection{Temporally connected paths and cycles}\label{se:vor2}
In contrast 
to temporal diffusion games, a 
Nash equilibrium in a temporal Voronoi game on a temporally connected 
path may fail to exist. In fact, 
the underlying dynamics of temporal Voronoi games on temporally connected paths 
might be quite intriguing and far more complex than for temporal diffusion 
games (as highlighted in the next subsection).

\begin{theorem}\label{voronoi_nonex}
There is a temporally connected path~$\Pp$ and a temporally connected cycle~$\Cc$ 
such 
that there is no Nash equilibrium in $\Vor(\Pp,2)$ and no Nash 
equilibrium in $\Vor(\Cc,2)$.
\end{theorem}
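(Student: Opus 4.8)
The plan is to prove the theorem by exhibiting two explicit temporal graphs and showing, via a finite case analysis over all strategy profiles, that no Nash equilibrium exists. Since \Cref{voronoi_thm} guarantees a Nash equilibrium on every monotonically growing temporal path, the path I construct must be superset but \emph{not} monotonically growing: I would schedule the edge-appearance times so that some edges appear and vanish, and only the final layer restores the full path (thereby enforcing the superset property and making $\td(s,v)$ finite for every pair $s,v$). The design goal is to engineer the temporal-distance function so that the induced Voronoi partition reacts to deviations in a cyclic fashion, i.e.\ so that the best-response correspondence has no fixed point.

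Concretely, I would first fix a small superset temporal path $\Pp=([n],(E_i)_{i\in[\tau]})$ and compute, for every start vertex $s\in[n]$, the full arrival-time vector $(\td(s,1),\dots,\td(s,n))$. Because strict foremost walks may have to wait at a vertex until the next edge appears, and because the edge order is non-monotone, these vectors are asymmetric ($\td(u,v)\neq\td(v,u)$), which is exactly the feature absent in the static and monotonically growing cases. Given a profile $(p_1,p_2)$, each vertex $v$ is awarded to the player with the strictly smaller value of $\td(\cdot,v)$, ties become gray, and the payoffs are read off directly from these vectors. I would then argue that every profile admits a profitable deviation: grouping profiles by the region containing $p_1$ and $p_2$ and using player-relabelling symmetry (and a left--right reflection symmetry of $\Pp$, if the construction is made symmetric) to collapse the case analysis, in each case I would name an explicit target vertex to which one player strictly increases her count of strictly-closest vertices.

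For the cycle I would close $\Pp$ into a superset temporal cycle $\Cc$ by letting the extra edge $\{1,n\}$ appear (for instance only in the last layer, which keeps the graph superset), and either rerun the same bookkeeping or, when the closing edge appears late enough not to create new short foremost walks between the contested vertices, inherit the deviations from the path argument. The required adaptation is that on a cycle a single start vertex reaches its neighbourhood from two sides, so the arrival-time vectors and the shape of each player's Voronoi region change; I would recompute $\td(s,\cdot)$ in the presence of $\{1,n\}$ and repeat the deviation analysis.

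The main obstacle is the reachability bookkeeping itself: verifying that \emph{no} profile is an equilibrium requires the arrival-time vectors from all $n$ start vertices, and in a non-monotone superset graph these depend subtly on both the strictness requirement and on waiting for not-yet-present edges. The genuinely creative step is choosing the edge schedule so that the best-response dynamics are truly cyclic rather than merely having a few isolated ``bad'' profiles; once the schedule is fixed, the remaining work is the tedious but routine case analysis showing every profile has a deviation.
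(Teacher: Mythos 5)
Your proposal is a plan for finding a counterexample, not a proof: the theorem is an existence statement whose entire content is the exhibited temporal path and cycle together with the verification that no profile is an equilibrium, and you supply neither. You correctly identify that the path must be superset but not monotonically growing (because of \cref{voronoi_thm}), but you explicitly defer ``the genuinely creative step'' of choosing an edge schedule whose best-response dynamics have no fixed point, and you never carry out the case analysis. As written, nothing in the proposal certifies that such a schedule exists, so the argument cannot be checked or completed without doing the missing work.

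It is also worth pointing out that the paper avoids essentially all of the work you budget for. It reuses the monotonically shrinking path $\Pp=([8],E_1,E_2)$ with $E_2=E_1\setminus\{\{2,3\}\}$ from \cref{thm:shrink}: by \cref{thm:shrinking_equivalence} the Voronoi and diffusion games coincide on monotonically shrinking paths, so the already-verified case analysis of \cref{thm:shrink} shows $\Vor(\Pp,2)$ has no Nash equilibrium. The only new step is to make this graph superset by repeating the last layer a few times (so that every vertex is either already colored or reached by both players simultaneously) and then appending one final layer that is the full path (respectively, the full cycle), arguing that this cannot increase any player's payoff in any profile and hence preserves the non-existence of equilibria. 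If you want to salvage your approach, the cleanest route is exactly this: rather than engineering a fresh schedule and recomputing all $n$ arrival-time vectors, lift an existing non-superset counterexample and prove a payoff-preservation claim for the appended layers.
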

\begin{proof}
Let $\Pp=([8], (E_1,E_2))$ be the temporal path from \Cref{ex:diffms}.
By \cref{thm:shrinking_equivalence} and the proof of \cref{thm:shrink}, there 
is no Nash equilibrium in  
$\Vor(\Pp, 2)$.
Note that for all sufficiently large~$N$,
modifying layer~$N$ of~$\Pp$ does not affect the dynamics of the Voronoi game:
At that point in time,
either every vertex has already been reached by some player,
or both players have reached the same set of vertices,
therefore any vertices left uncolored can only become gray.
In particular, we may replace layer~$N$ and all subsequent layers by a complete path or cycle,
which proves the claim.
\end{proof}

\subsection{Monotonically growing trees}
\label{sec:voronoi_growing}\label{se:vor3}
We now turn to monotonically growing temporal trees and prove that every temporal Voronoi game on such a tree admits a Nash equilibrium.
In the next \Cref{sec:voronoi-cycles}, we reuse some ideas presented here to also prove a result about temporal Voronoi games on monotonically growing cycles. 
Specifically, this subsection is devoted to proving the following theorem:

\begin{restatable}{theorem}{voronoithm}\label{thm:voronoi-tree}
On every monotonically growing temporal tree $\Tt$ a Nash equilibrium in 
$\Vor(\Tt, 2)$ exists and can be found in $\bigO(n^2)$ time.
\end{restatable}

In order to prove \cref{thm:voronoi-tree}, we first analyze the best responses of a player to a given strategy of the other player. For this, we introduce the concept of boundaries.

We call an edge~$B = \{b, b'\} \in E(\Gg)$ a \emph{boundary} of a vertex~$v$
if a player positioned at~$v$ can cross~$B$ at the earliest possible time,
i.e., if ${\min\{\at(v, b), \at(v, b')\} < \appear{B}}$.
To better distinguish them from vertices, we will use capital letters for boundaries.

Let $\Bb$ be the set of all boundaries of some vertex~$v$.
Then the connected components of $\footprint{\Gg} - \Bb$ are called the \emph{boundary components} of~$v$.
For any vertex~$w$, we define~$\J_v(w)$ to be the boundary component of~$v$ that contains~$w$.
Note that $\J_v(v) = \{v\}$.
The following observation is the basis for our study of Nash equilibria.
Note that it does not only hold for monotonically growing trees, but all temporal graphs.
Compare also~\cref{fig:boundary}.

\begin{lemma}\label{thm:boundary-components}
Let $v$, $w$ be any two vertices of a temporal graph~$\Gg$.
Then $U_1(w, v) \subseteq \J_v(w)$.
\end{lemma}
\begin{proof}
To reach any vertex~$x \notin \J_v(w)$, the player positioned at~$w$ must cross some boundary of~$v$.
Since the player at~$v$ can cross that boundary at the same time or earlier,
$w$~cannot reach any vertices beyond that boundary before~$v$ does.
\end{proof}

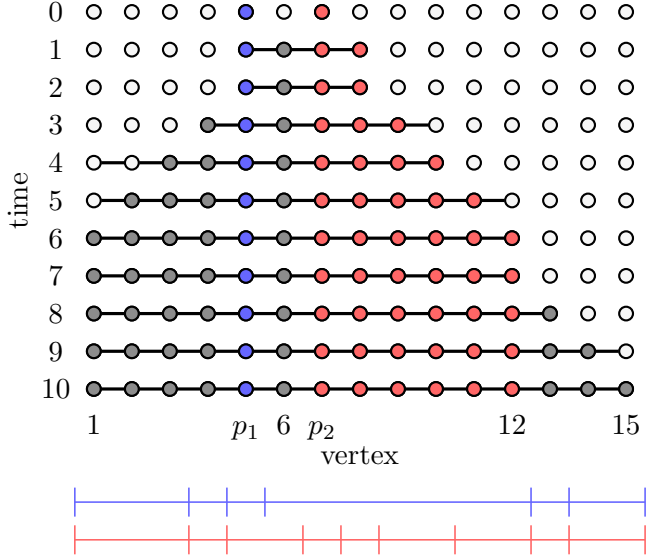
\begin{figure}[t]
	\centering
	\begin{tikzpicture}[scale=0.5]
		\gamegrid{15}{10}
		
		\def \time {10}
					
		\def \p {5}
		\def \q {7}
		
		\draw[edge]
			\foreach \t in {1,...,\time} {
				(v-\p-\t) -- (v-6-\t) -- (v-\q-\t) -- (v-8-\t)
			}
			\foreach \t in {3,...,\time} {
				(v-4-\t) -- (v-\p-\t)
				(v-8-\t) -- (v-9-\t) -- (v-10-\t)
			}
			\foreach \t in {4,...,\time} {
				(v-1-\t) -- (v-2-\t) -- (v-3-\t) -- (v-4-\t)
			}
			\foreach \t in {5,...,\time} {
				(v-10-\t) -- (v-11-\t) -- (v-12-\t)
			}
			(v-12-8) -- (v-13-8)
			\foreach \t in {9,10} {
				(v-12-\t) -- (v-13-\t) -- (v-14-\t) -- (v-15-\t)
			}
		;
		
		\colorvertex{\p}{0}{vertexP}
		\colorvertex{\q}{0}{vertexQ}
		\colorvertex{6}{1}{vertexG}
		
		\colorvertex{8}{1}{vertexQ}
		\colorvertex{9}{3}{vertexQ}
		\colorvertex{10}{4}{vertexQ}	
		\colorvertex{11}{5}{vertexQ}
		\colorvertex{12}{6}{vertexQ}
		
		\colorvertex{13}{8}{vertexG}
		\colorvertex{14}{9}{vertexG}
		\colorvertex{15}{10}{vertexG}
		
		\colorvertex{4}{3}{vertexG}
		\colorvertex{3}{4}{vertexG}
		\colorvertex{2}{5}{vertexG}
		\colorvertex{1}{6}{vertexG}
			
		\path[every node/.style={anchor=base}]
			(0, \vertexlabely)
			+(1, 0) node {$1$}
			+(\p, 0) node {$p_1$}
			+(6, 0) node {$6$}
			+(\q, 0) node {$p_2$}
			+(12, 0) node {$12$}
			+(15,0) node {$15$}
		;
			
		\draw[timeline,color=\colorP, every node/.style={vboundary,rotate=90}]
			(0.5, -3) node {}
			-- (3.5, -3) node {}
			-- (4.5, -3) node {}
			-- (5.5, -3) node {}
			-- (12.5, -3) node {}
			-- (13.5, -3) node {}
			-- (15.5, -3) node {}
		;
		\draw[timeline,color=\colorQ, every node/.style={vboundary,rotate=90}]
			(0.5, -4) node {}
			-- (3.5, -4) node {}
			-- (4.5, -4) node {}
			-- (6.5, -4) node {}
			-- (7.5, -4) node {}
			-- (8.5, -4) node {}
			-- (10.5, -4) node {}
			-- (12.5, -4) node {}
			-- (13.5, -4) node {}
			-- (15.5, -4) node {}
		;
	\end{tikzpicture}
	\caption{Example Voronoi game on a monotonically growing temporal path graph.
		On the bottom, the boundary components of $p_1=5$ (upper row, blue)
		and $p_2=7$ (lower row, red) are indicated.
		As guaranteed by \cref{thm:boundary-components},
		the sets of vertices colored by the players satisfy
		$\{p_1\} = U_1(p_1, p_2) \subseteq \J_{p_2}(p_1) = \{p_1, 6\}$
		and 
		$\ints{p_2}{12} = U_1(p_2, p_1) \subseteq \J_{p_1}(p_2) = \ints{6}{12}$.
	}
	\label{fig:boundary}
\end{figure}

Subsequently, we always assume that the graph $\Tt$ is a 
monotonically growing tree~$\Tt = (V, (E_i)_{i\in 
[\tau]})$.
For two vertices $v, w$ we will use the notation~$[v, w]$ to refer to the unique (static) path from~$v$ to~$w$ in~$\footprint{\Tt}$.
We say that a vertex~$x$ or an edge~$e$ is \emph{between}~$v$~and~$w$ if it is part of $[v, w]$.

Complementary to~\cref{thm:boundary-components}, we now find that boundaries are the \emph{only} way for a player to ``catch up'' with another player.
An important special case of the following lemma is the situation where $x = x'$.
(For an example, consider \cref{fig:boundary} with $v = p_1$ and $x = x' = p_2$.)
\begin{lemma}\label{thm:boundary}
Let $v, w$ be two vertices of~$\Tt$
and $x \in [v, w]$.
Assume~$x'$ to be any vertex with $\at(x', x) < \at(v, x)$.
Then $\at(v, w) \geq \at(x', w)$,
and equality holds if and only if
$[x, w]$~contains a boundary of~$v$.
\end{lemma}
\begin{proof}
It is clear that $\at(v, w) \geq \at(x', w)$ since any temporal path from~$v$ to~$w$ must pass~$x$,
and $x'$ reaches~$x$ before~$v$~does.

Furthermore, if the path~$[x, w]$ contains a boundary~$B$ of~$v$,
then let $\at(v, B)$ be the time $v$~reaches either endpoint of~$B$.
Since $B$~is a boundary of~$v$, $\at(v, B) < \appear{B}$.
Then $\at(x', x) < \at(v, x) \leq \at(v, B) < \appear{B}$,
therefore $v$, $x$, and $x'$ must all be on the same side of~$B$ (with $w$ on the other side).
Thus a temporal path from~$x'$ to~$w$ must cross~$B$ before reaching~$w$.
Since $B$~is a boundary of~$v$, this implies that $x'$~cannot reach~$w$ earlier than~$v$~does.

Conversely, assume that $\at(v, w) = \at(x', w)$.
Then let $w'$ be the first vertex on~$[x,w]$ with $\at(v, w') = \at(x', w')$.
Since $\at(v, x) > \at(x', x)$, we have $w' \neq x$.
So there is a vertex~$y$ of~$[x, w]$ right before~$w'$.
Since $\at(x', y) \leq \at(v, y) -1 \leq \allowbreak \at(v, w')-2 = \at(x', w')-2$,
a foremost path from~$x'$ to~$w'$ must ``wait'' at least one time step before crossing~$\{y, w'\}$.
This proves that that path crosses~$\{y, w'\}$ at the earliest possible time,~$\appear{\{y, w'\}}$.
This makes $\{y, w'\}$ a boundary of~$x'$ and thus also of~$v$,
since $\at(v, w') = \at(x', w')$ implies that $v$~and~$x'$ cross $\{y, w'\}$ simultaneously.
\end{proof}

With $\footprint{\Tt}$ being a tree, every boundary splits it in two connected components,
about which we can make the following observation.

\begin{lemma}\label{thm:boundary-half}
Let $v$ be a vertex of~$\Tt$ and~$B$ a boundary of~$v$.
Let $C$~and~$C'$ be the two connected components of $\footprint{\Tt}-\{B\}$,
with $v \in C$.
Then $\at(v, w) \leq \at(w', w)$ for any vertices $w \in C$ and $w' \in C'$.
\end{lemma}
\begin{proof}
Let $B = \{b, b'\}$ with $b \in C$ and $b' \in C'$.
Then $b \in [w', w]$ and $B \in [w', b]$.
Since $B$ is a boundary of~$v$, we have $\at(v, b) < \appear{B} \leq \at(w', b)$.
Thus, $\at(v, w) \leq \at(w', w)$ by \cref{thm:boundary}.
\end{proof}

For any boundary component~$C$ of a vertex~$v$,
we call the vertex~$x \in C$ that minimizes $\at(v, x)$ the \emph{entry vertex} of~$C$
and the last edge of $[v, x]$ the \emph{entry boundary} of~$C$ (note that this edge is a boundary of~$v$). 

\begin{lemma}\label{thm:boundary-component-entry}
Let $v$ be any vertex, $C$~be any boundary component of~$v$ with entry vertex~$x$.
Then for any vertex~$w$,
$U_1(w, v) = C$ if and only if $x \in U_1(w, v)$.
\end{lemma}
\begin{proof}
The forward direction is immediate since $x \in C$.
For the reverse, note that by \cref{thm:boundary-components}
we must have $U_1(w, v) \subseteq \J_v(w) = C$.
So it only remains to prove that $U_1(w, v) \supseteq C$.
So let $y \in C$ be any vertex.
By definition of~$C$, $[x, y]$~does not contain a boundary of~$v$.
Thus we may apply~\cref{thm:boundary} to get $\at(v, y) > \at(w, y)$.
\end{proof}

From \cref{thm:boundary-component-entry} we can derive the following about best responses.
\begin{lemma}\label{thm:best-response-boundaries}
Let $x$~be a vertex of~$\Tt$ and $y$ a best response to~$x$.
Then $U_1(y, x) = \J_x(y)$ and all edges leaving~$\J_x(y)$ are boundaries of~$y$.
\end{lemma}
\begin{proof}
By \cref{thm:boundary-components} it holds that $U_1(y, x) \subseteq \J_x(y)$.
Let $b$~be the entry vertex of~$\J_x(y)$,
then $U_1(b, x) = \J_x(y)$ by~\cref{thm:boundary-component-entry}.
Thus $U_1(y, x) = \J_x(y)$, or $b$~would have been a better response to~$x$ than~$y$.
Since $\at(y, v) < \at(x, v)$ for all~$v \in \J_x(y)$, all the edges leaving~$C$ (which must be boundaries of~$x$ by definition) are boundaries of~$y$, too.
\end{proof}
Looking back at~\cref{fig:boundary}, we can thus deduce that $p_2$~was not a best response to~$p_1$. Instead, the red player should have played at~$6$.

Now we can already show the following powerful structural result.
\begin{lemma}\label{thm:boundary-component-structure}
Let $x$ be any vertex of~$\Tt$ and $y$~a best response to~$x$.
Then $U_1(y, x) = \J_x(y)$ and $U_1(x, y) = \J_y(x)$ are disjoint sets, connected by an edge.
Furthermore, $V \setminus (\J_x(y) \cup \J_y(x))$ can be partitioned into sets
which are boundary components of both, $x$~and~$y$.
\end{lemma}
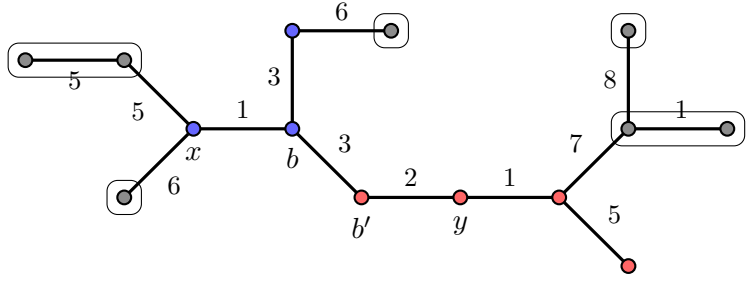
\begin{figure}
	\centering
	\begin{tikzpicture}[scale=1.3]
		\path[edge,auto]
			(0,0) node[vertexP,label=below:$x$] (x) {}
			-- node[timelabel] {1}
			++(1,0) node[vertexP,label=below:$b$] (b) {}
			-- node[timelabel] {3}
			++(0.7,-0.7) node[vertexQ,label=below:$b'$] (b') {}
			-- node[timelabel] {2}
			++(1,0) node[vertexQ,label=below:$y$] (y) {}
			-- node[timelabel] {1}
			++(1,0) node[vertexQ] (yy) {}
			-- node[timelabel] {7}
			++(0.7, 0.7) node[vertexG] (z) {}
			-- node[timelabel] {8}
			++(0, 1) node[vertexG] (z1) {}
			(z)
			-- node[timelabel] {1}
			++(1, 0) node[vertexG] (z2) {}
			(yy)
			-- node[timelabel] {5}
			++(0.7, -0.7) node[vertexQ] {}
			(b)
			-- node[timelabel] {3}
			++(0, 1) node[vertexP] {}
			-- node[timelabel]{6}
			++(1, 0) node[vertexG] (x3) {}
			(x)
			-- node[timelabel] {6}
			++(-0.7, -0.7) node[vertexG] (x2) {}
			(x)
			-- node[timelabel]{5}
			++(-0.7, 0.7) node[vertexG] (x1) {}
			-- node[timelabel]{5} ++(-1, 0) node[vertexG] (x1b) {}
			;
		\node[fit=(x1)(x1b),group] {};
		\node[fit=(x2),group] {};
		\node[fit=(x3),group] {};
		\node[fit=(z)(z2),group] {};
		\node[fit=(z1),group] {};
	\end{tikzpicture}
	\caption{
		Example illustrating~\cref{thm:boundary-component-structure};
		each edge is labeled with its appearance time.
		Here, $y$ is a best response to~$x$,
		$U_1(x, y)$~is colored~blue,
		and $U_1(y, x)$ is colored~red.
		The boxes mark common boundary components of~$x$~and~$y$.
	}
	\label{fig:boundary-component-structure}
\end{figure}
\begin{proof}
Let $B = \{b, b'\}$ be the entry boundary of~$\J_x(y)$ with $b' \in \J_x(y)$ being the entry vertex (see also~\cref{fig:boundary-component-structure}).
By \cref{thm:best-response-boundaries}, $U_1(y, x) = \J_x(y)$ and  $B$ is a boundary of~$y$, too.
Note that $[x, b] \subseteq U_1(x, y)$.
In particular, $y$~cannot have any boundaries in~$[x, b]$.
Thus, $b$ is the entry vertex of~$\J_y(x)$ and $U_1(x, y) = \J_y(b) = \J_y(x)$ by \cref{thm:boundary-component-entry}.

It remains to show that $\J_x(v) = \J_y(v)$ for any $v \in V \setminus (\J_x(y) \cup \J_y(x))$.
Say without loss of generality that $v$ is on the same side of~$B$ as~$x$
and let~$B'$~be the entry boundary of~$\J_y(v)$.
By the above, $x$~and~$y$ are on the same side of~$B'$,
and by \cref{thm:boundary-half} $B'$ is also a boundary of~$x$ (due to $B$ being a boundary of~$x$).
Thus, $\at(x, w) = \at(y, w)$ for all $w \in \J_y(v)$.
Therefore $\J_y(v)$ must also be a boundary component of~$x$.
\end{proof}

Note that the fact that $\J_x(y)$~and~$\J_y(x)$ are connected by an edge in \cref{thm:boundary-component-structure}
implies that for any $v \in V \setminus (\J_x(y) \cup \J_y(x))$ the two (identical) boundary components $\J_x(v)$ and~$\J_y(v)$ also have the same entry boundary.

Since $(v, w)$ forms a Nash equilibrium if and only if $v$~and~$w$ are mutual best responses,
our strategy of proving \cref{thm:voronoi-tree} will be  to consider a best response dynamic where both players alternatingly pick a best response to the other's previous choice.
We will then show that this process must eventually reach a Nash equilibrium.
For this, we require the following lemma.

\begin{lemma}\label{thm:voronoi-tree-part1}
Let $x$ be any vertex of~$\Tt$, $y$ a best response to~$x$,
and $z$ a best response to~$y$.
Let~$B$ be the entry boundary of~$\J_x(y)$.
If $x$~and~$z$ are on the same side of~$B$, then $(x, y)$ or~$(y,z)$ is a Nash equilibrium.
\end{lemma}
\begin{proof}
By \cref{thm:boundary-component-structure}, $B$ is a shared boundary of~$x$~and~$y$.
Also $U_1(x, y) = \J_y(x)$, so if $z \in \J_y(x)$ then $U_1(z, y) \subseteq \J_y(z) = \J_y(x) = U_1(x, y)$,
which would make~$x$ a best response to~$y$, proving the claim.

So assume now $z \notin \J_y(x)$, i.e., $\J_y(z) \neq \J_y(x)$.
By \cref{thm:boundary-component-structure},
it holds that $\J_x(z) = \J_y(z)$ and both of these have the same entry boundary, $B'$~say
(see \cref{fig:voronoi-tree-part1}).
Let $C, C'$ be the two connected components of~$\footprint{\Tt} - \{B'\}$, with $z \in C'$ and $\{x, y\} \subseteq C$.
Furthermore, by \cref{thm:boundary-half}, every boundary of~$z$ in~$C$ is also a boundary of~$x$ and~$y$.
Also, $B$~is clearly no boundary of~$z$ as $B$ is between~$y$ and~$B'$.
Thus, $\J_z(y) \supseteq \J_x(y) \cup \J_y(x)$.

\begin{figure}
	\centering
	\begin{tikzpicture}[scale=1]
		\path[edge]
			(0, 0) node[vertex, label=left:$x$] (x) {}
			++(1, 0.5) node[vertex] (bx) {}
			-- node[above] {$B$} 
			++(1, 0) node[vertex] (by) {}
			++(1.2, -0.3) node[vertex,label=left:$y$] (y) {}
			;
		\path[path]
			(x)
			++(-0.8, 0.8) node[vertex] (bbx) {}
			-- ++(-1.5, 0.5) node[vertex] (bbx2) {}
			;
		\path[edge]
			(bbx2)
			-- node[above]{$B'$}
			++(-1.2, 0) node[vertex] (bbz) {}
			++(-0.5, -1) node[vertex, label=left:$z$] (z) {}
			;
		\node[blob,fit={(x)(bx)(bbx)},text depth=1.3cm] (Jyx) {$\J_y(x)$};
		\node[above=2mm of Jyx] (text) {$\J_z(x) = \J_z(y)$};
		\node[blob,fit={(y)(by)},text depth=1.2cm] (Jxy) {$\J_x(y)$};
		\node[blob,fit={($(z)+(-2,-0.5)$)(bbz)($(bbz)+(0,0.5)$)}] {$\J_x(z) = \J_y(z)$};
		\node[blob,fit={(Jxy)(Jyx)(bbx2)(text)}] {};
	\end{tikzpicture}
	\caption{Illustration of the proof of \cref{thm:voronoi-tree-part1}.}
	\label{fig:voronoi-tree-part1}
\end{figure}
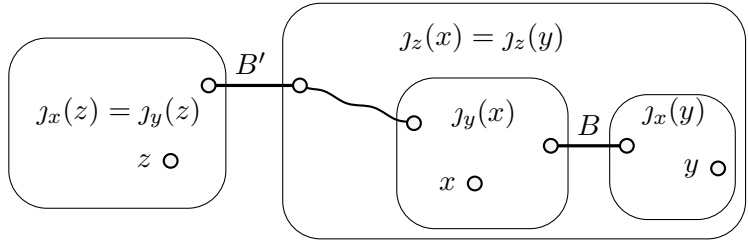

By \cref{thm:best-response-boundaries}, $B'$ is a boundary of~$z$.
Thus, by \cref{thm:boundary-half}, all boundaries of~$x$ in~$C'$ are also boundaries of~$z$.
So if some~$y' \in C'$ is a best response to~$z$, then $\J_z(y') \subseteq \J_x(y')$.
But then we must have $\abs{\J_z(y')} \leq \abs{\J_x(y')} \leq \abs{\J_x(y)} < \allowbreak \abs{\J_x(y) \cup \J_y(x)} \leq \abs{\J_z(y)}$ by choice of~$y$.
This contradicts~$y'$~being a best response to~$z$.

So any best response~$y'$ to~$z$ must necessarily lie in~$C$.
We may assume $y' \notin \J_z(y) \supseteq \J_x(y) \cup \J_y(x)$, or $y$ would already be a best response to~$z$, making~$(y, z)$~a Nash equilibrium.
Let $B''$ be the entry boundary of~$\J_z(y')$.
Since $\J_z(y)$~touches~$B'$, both, $x$~and~$z$, are on the same side of~$B''$.
Also, $B''$~is a boundary of~$x$ by \cref{thm:boundary-half}.
Therefore $\J_z(y') = \J_x(y')$.
Consequently, $\abs{\J_z(y')} = \abs{\J_x(y')} \leq \abs{\J_x(y)} < \abs{\J_x(y) \cup \J_y(x)} \leq \abs{\J_z(y)}$ by choice of~$y$,
thus contradicting the choice of~$y'$.
\end{proof}

We can now prove the existence of a Nash equilibrium in $\Vor(\Tt)$.

\begin{lemma}\label{thm:voronoi-tree-converges}
Let~$x_0$~be any vertex of~$\Tt$.
Define $x_{i+1}$ iteratively to be the entry vertex of a largest boundary component of~$x_i$.
Then there is some index~$j < \abs{V}$ such that $(x_j, x_{j+1})$ forms a Nash equilibrium.
\end{lemma}
\begin{proof}
By \cref{thm:boundary-components} and \cref{thm:boundary-component-entry}, each $x_{i+1}$~is a best response to~$x_i$.

For any~$i$, let~$B_i$~be the entry boundary of $\J_{x_i}(x_{i+1})$ and $C_i$, $C_i'$ the two connected components of~$\footprint{\Tt} - \{B_i\}$,
with $x_i \in C_i$.

By \cref{thm:voronoi-tree-part1}, if $x_{i+2} \in C_i$, then we have found a Nash equilibrium.
Thus, we may assume that~$x_{i+2} \in C_i'$ for all~$i < \abs{V}$.
Therefore, also $B_{i+1} \subseteq C_i'$ and $C_{i+1}' \subset C_i'$.
Since the sequence $C_0' \supset C_1' \supset C_2' \supset \dots$ must terminate after at most~$\abs{V}$ steps,
this concludes the proof.
\end{proof}

From \cref{thm:voronoi-tree-converges}, we can easily deduce our main result~\cref{thm:voronoi-tree}.
\begin{proof}[Proof of \cref{thm:voronoi-tree}]
Compute a sequence $x_0, x_1, \dots$ as in \cref{thm:voronoi-tree-converges} until we encounter a Nash equilibrium.
Observe that computing the reach times from any given vertex~$v_i$ to all other vertices takes $\bigO(\abs{V})$~time,
since we assume that the first appearance of any edge can be found in constant time.
Consequently, we can also determine all boundaries of~$v_i$ and thus compute~$v_{i+1}$ in $\bigO(\abs{V})$~time.
By \cref{thm:voronoi-tree-converges}, we will find a Nash equilibrium after at most $\abs{V}$~iterations,
and we can easily test for this by comparing $\abs{\J_{x_i}(x_{i-1})}$ and $\abs{\J_{x_i}(x_{i+1})}$.
Thus, we need $\bigO(\abs{V}^2)$~time overall.
\end{proof}
    
We close this section by remarking that the existence of Nash equilibria extends to monotonically growing forests as well.
To see this, simply apply the argument of \cref{thm:voronoi-tree-converges} to a connected component of maximum size.
If at any point the best response to~$x_i$ is not~$x_{i+1}$ but some vertex $y$ located in a different connected component,
then $(x_i, y)$~already forms a Nash equilibrium.

\subsection{Monotonically growing cycles}
\label{sec:voronoi-cycles}
In the previous subsection we have seen that Nash equilibria always exist in temporal Voronoi games on monotonically growing trees. 
Now we extend this result also to monotonically growing cycles.
However, the proof becomes more involved here. 

In this subsection, let $\Cc = (\zeroto{n-1}, (E_i)_{i \in \oneto{\tau}}))$ be a monotonically growing temporal cycle graph (note that it will be convenient here to number the vertices starting with 0).
We split the proof into two parts. 
In the first part, in \cref{se:vor4}, we make some structural observations that allow us to classify all Nash equilibria in temporal difference Voronoi games.
Subsequently, these structural results also help us to find Nash equilibria in (normal) temporal Voronoi games in~\cref{sec:voronoi-cycle-normal}.

\subsubsection{Temporal difference Voronoi games}\label{se:vor4}
As was the case for temporal diffusion games,
we will first look into \dVoronoi{} games,
which will help us to subsequently find Nash equilibria for Voronoi games.

For vertices, we use chains of inequalities like $a < b < c$
to express that when starting from vertex~$a$ and moving in positive direction on~$\Cc$ (that is, in increasing order of vertices), then vertex~$b$ is encountered before~$c$.
We denote subintervals of the cycle by $[a, b] \coloneqq \{v \in V \mid a \leq v \leq b\}$, thus $[a, b] \cup [b, a] = V$.
For any $x \in \ZZ$, we use $\mymod{x}$ to denote the unique $\mymod{x} \in V = \zeroto{n-1}$ with $\mymod{x} \equiv x \pmod{n}$.
Note that $\abs{[a, b]} = \mymod{b-a}+1$.

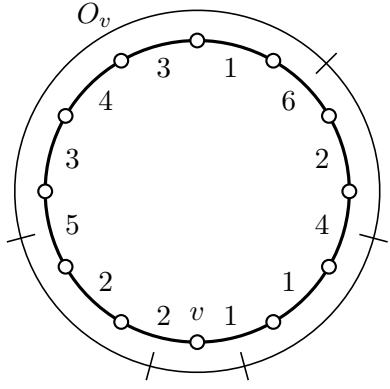
\begin{figure}
\centering
\begin{tikzpicture}[scale=2]
\path[edge,radius=1]
	(270:1) node[label=90:$v$] {}
	\foreach[count=\i] \label in {1, 1, 4, 2, 6, 1, 3, 4, 3, 5, 2, 2}{
		arc[start angle=270+(\i-1)*30, end angle=270+\i*30]
		node[vertex, fill=white] {}
		node[at={(270+\i*30-15 : 0.85)}] {$\label$}
	};
\newcommand{\rad}{1.2}
\path[timeline,radius=\rad]	
	(285:\rad) node[vboundary,rotate=285] {}
	arc[start angle=285, end angle=345]
	node[vboundary,rotate=345] {}
	arc[start angle=345, end angle=405]
	node[vboundary,rotate=405] {}
	arc[start angle=405, end angle=555]
	node[at=(480:\rad+0.15)] {$O_v$}
	node[vboundary,rotate=555] {}
	arc[start angle=555, end angle=615]
	node[vboundary,rotate=615] {}
	arc[start angle=615, end angle=645]	
	;
\end{tikzpicture}
\caption{Example for a monotonically growing cycle, with edge labels indicating appearance times.
The boundary components of~$v$ are marked on the outside, with~$O_v$ labeled.
Apart from the trivial boundary component~$\{v\}$, $v$~has three inner boundary components, all of size~$2$.
If we take the positive direction to be counterclockwise,
then~$v$ has one left and two right boundary components.
}
\label{fig:vor-ne-cycle:example}
\end{figure}

The definition of a boundary carries over from \cref{sec:voronoi_growing}.
However, we will now distinguish between left and and right boundaries.
An edge $B = \{b, \mymod{b+1}\}$ is a \emph{right boundary} of a vertex~$v$ if $\at(v, b) < \appear{B}$
and a \emph{left boundary} if $\at(v, \mymod{b+1}) < \appear{B}$.
Note that there might be one boundary that is both a left and a right boundary,
i.e., where $v$ reaches both $b$ and $\mymod{b+1}$ before $\appear{\{x,x+1\}}$.

Clearly, every boundary component~$C$ of~$v$ is now adjacent to exactly two boundaries of~$v$.
If both of these are left (resp.\ right) boundaries, then we call~$C$ a left (resp.\ right) boundary component.
Otherwise, if $C$~is not the trivial boundary component~$\{v\}$, then we call $C$ the (unique) \emph{outer boundary component} of~$v$, denoted $O_v$.
If $v$ has no such boundary component, then we set $O_v = \emptyset$.

All boundary components except~$O_v$ are called \emph{inner boundary components}.
See \cref{fig:vor-ne-cycle:example} for an example.
Note that $w \in O_v$ is intuitively equivalent to the fact that $v$ will never ``catch up'' with $w$ from behind (see also \cref{fig:cycle-catch-up}).
Due to this, we observe the following.
(Recall that~$\Omega^t(v)$ is the set of all vertices reachable from~$v$ until time~$t$.)

\begin{lemma}\label{thm:outer-property}
If $v$ is a vertex of~$\Cc$ and $w \in O_v$,
then for all~$t$
\[
	\Omega^t(v) \supseteq \Omega^t(w) \iff \Omega^t(v) = V.
\]
\end{lemma}
\begin{proof}
Assume $\Omega^t(v) \supseteq \Omega^t(w)$.
By monotonicity, it suffices to consider the minimal such~$t$,
thus we may assume that there is $x \in \Omega^{t-1}(w) \setminus \Omega^{t-1}(v)$.
Since $x \in \Omega^t(v)$, there must be a neighbor~$x'$ of~$x$ with $x' \in \Omega^{t-1}(v)$ and $\appear{\{x,x'\}} \leq t$.
Say without loss of generality $x' = \mymod{x-1}$.
If $\mymod{x+1} \in \Omega^t(w) \subseteq \Omega^t(v)$,
then~$\Omega^t(v) = V$, since $v$~cannot have reached~$\mymod{x+1}$ via~$x$ yet.
Otherwise ($\mymod{x+1} \notin \Omega^t(w)$), as $x\in \Omega^t(w)$,  we must have~$\appear{\{x,\, \mymod{x+1}\}} > t$, making this edge a right boundary of~$v$ with $v<w<\{x,\mymod{x+1}\}$,
which would contradict $w \in O_v$.

The reverse implication is trivial.
\end{proof}

\begin{figure}
\centering
\begin{tikzpicture}[scale=1.7]
\path[edge,radius=1]
	(270:1)
	\foreach[count=\i] \label in {1, 1, 3, 2, 3, 3, 7, 7}{
		arc[start angle=270+(\i-1)*45, end angle=270+\i*45]
		node[vertex, fill=white] {}
		node[at={(270+\i*45-22.5 : 0.85)}] {$\label$}
	}
	(270:1) node[vertex,fill=\colorP,label=90:$v$] {}
	(135:1) node[vertex,fill=\colorQ,label=320:$w$] {}
	;
\newcommand{\rad}{1.2}
\path[timeline,radius=\rad,color=\colorP]	
	(-90:\rad) node[vboundary,rotate=-90] {}
	arc[start angle=-90, end angle=180]
	node[vboundary,rotate=180] {}
	;
\renewcommand{\rad}{1.4}
\path[timeline,radius=\rad,color=\colorQ]
	(-45:\rad) node[vboundary,rotate=-45] {}
	arc[start angle=-45, end angle=180]
	node[vboundary,rotate=180] {}
	;

\end{tikzpicture}
\caption{Example instance with $\Omega^6(v)$ (blue, inner arc) and $\Omega^6(w)$ (red, outer arc) indicated.
Note how $\Omega^6(v) \supseteq \Omega^6(w)$ since $v$ has ``caught up'' with~$w$ in positive (counterclockwise) direction.
By \cref{thm:outer-property} we can thus conclude that $w \notin O_v$.
(In fact, $O_v = \{\mymod{v-1}\}$.)
}
\label{fig:cycle-catch-up}
\end{figure}
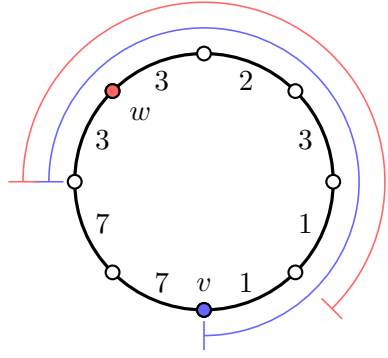

Boundaries keep their importance also on cycles, since \cref{thm:boundary-components} also applies to cycles.
We also observe the following:
\begin{lemma}\label{thm:boundary_inheritance-cycle}
Let $u < v \leq w$ be vertices of $\Cc$.
If $\{w,\, \mymod{w+1}\}$ is a right boundary of $u$, then it is also a right boundary of $v$.
\end{lemma}
\begin{proof}
 Because $\{w,\, \mymod{w+1}\}$ is a right boundary of $u$, $\at(u,w)<\appear{\{w,\, \mymod{w+1}\}}$. 
 Thus, as $u < v \leq w$, we have $\at(v,w)\leq \at(u,w)<\appear{\{w,\, \mymod{w+1}\}}$, implying that $w$ is also a right boundary of $v$.
\end{proof}
Analogously, it holds for $w\leq u<v$ that if $\{\mymod{w-1},\,w\}$ is a left boundary of~$v$, then it is also a left boundary of~$u$.

Lastly, we will also make use of the following fact.
\begin{lemma}\label{thm:boundary_inheritance2-cycle}
Let $u, v$ be vertices of $\Cc$.
If $B$ is a common right boundary of $u$ and $v$ with $u < v < B$, then every edge $C$ with $B < C < u$ 
is a right boundary of $u$
if and only if it is a right boundary of $v$.
\end{lemma}
\begin{proof}
 Let $B=\{b,\,\mymod{b+1}\}$ and let $C=\{c,\,\mymod{c+1}\}$ be a right boundary of $u$ or $v$. 
 This implies that $B$ is not a left boundary of $u$ or $v$ and thus $\at(u,\mymod{b+1})=\at(v,\mymod{b+1})=\appear{\{b,\,\mymod{b+1}\}}$ and further that $\at(u,c)=\at(v,c)$,
 which establishes the claim.
\end{proof}

As indicated by the following lemma, playing in the outer component of your opponent is often a good idea:
\begin{lemma}\label{thm:outer-beats-inner}
Let $v$, $w$ be vertices of~$\Cc$.
If $v \notin O_w$ and $w \in O_v$, then there is some time~$t$
with $\Omega^t(v) \subset \Omega^t(w)$.
\end{lemma}
\begin{proof}
  Let $\J_w(v) = [B, B']$ be a left boundary component of~$w$ without loss of generality.
  Note that~$B$ is also a left boundary of~$v$ by \cref{thm:boundary_inheritance-cycle}.
Set $B \eqqcolon \{\mymod{b-1}, b\}$ and $t \coloneqq \at(w, b)$.
Then $\Omega^t(v) \subseteq \Omega^t(w)$.

Assume for contradiction that $\Omega^t(v) = \Omega^t(w)$.
Thus, by \cref{thm:outer-property}, $\Omega^t(w) = \Omega^t(v) = V$.
In particular $\at(v, \mymod{b-1}) \leq t$, therefore $B$ is also a right boundary of~$v$.
Thus, $O_v = [B,B] = \emptyset$, but $w \in O_v$.
\end{proof}

If both players play in their opponent's outer components, then each of them is guaranteed to win all their respective inner components.
\begin{lemma}\label{thm:mutual-outer-inner}
Let $v$, $w$ be vertices of~$\Cc$.
If $v \in O_w$ and $w \in O_v$,
then $U_1(v, w) \supseteq V \setminus O_v$.
\end{lemma}
\begin{proof}
Recall that $V \setminus O_v$ is the union of all inner boundary components of~$v$.
Let $I = [B, B']$ be any inner boundary component of~$v$, say without loss of generality a right one.
Then $w \in O_v \subseteq [B', v]$.
Since $B'$ is a boundary of~$v$ and $v \in O_w$, we have $\at(v, u) < \at(w, u)$ for all $u \in [v, B'] \supseteq I$.
\end{proof}

We call an edge~$\{x,\, \mymod{x+1}\}$ a left (resp.\ right) \emph{blocker} for the vertex~$v$ if $\at(v,\, x+1) < \appear{\{x,\, x+1\}}-1$
(resp.\ $\at(v, x) < \appear{\{x,\, x+1\}}-1)$, i.e., if a temporal path from~$v$ to~$x$ needs to wait at least one time step at~$x+1$.
Clearly, blockers are a special case of boundaries.

To keep track of the situation even after a player has reached all vertices of~$\Cc$,
using the following step count functions is sometimes preferable to just using~$\Omega^t$.
For any $t\geq 0$, $v \in \oneto{n}$,
the number of steps that one can move from~$v$ to the left until time~$t$ is
\[
\lreach^t(v) \coloneqq \begin{cases}
	0 & \text{if } t \leq 0,\\
	\lreach^{t-1}(v) & \text{if } 0 < t < \appear{\{\mymod{v-\lreach^{t-1}(v)-1},\: \mymod{v-\lreach^{t-1}(v)}\}},\\
	\lreach^{t-1}(v)+1 & \text{otherwise}.
\end{cases}
\]
Symmetrically, the number of possible steps to the right is
\[
\rreach^t(v) \coloneqq \begin{cases}
	0 & \text{if } t \leq 0,\\
	\rreach^{t-1}(v) & \text{if } 0 < t < \appear{\{\mymod{v+\rreach^{t-1}(v)},\: \mymod{v+\rreach^{t-1}(v)+1}\}},\\
	\rreach^{t-1}(v)+1 & \text{otherwise}.
\end{cases}
\]
Note that we do not stop once $v$ has reached $v$ again, i.e., $\lreach^t(v)$ can be larger than $n$.
Note further that the second case in the above definitions only occurs if the respective edge is a blocker of~$v$.
Also, setting $\reach^t(v) \coloneqq 1 + \lreach^t(v) + \rreach^t(v)$, the following is true for all $t \geq 0$ (both cases may hold simultaneously):
\begin{equation}
\reach^t(v) = \begin{cases}
	\abs{\Omega^t(v)}  & \text{if } \Omega^t(v) \neq V \text{ or there is an edge~$e$ with } \appear{e} > t, \\
	\reach^{t-1}(v) + 2 & \text{if there is no edge~$e$ with } \appear{e} > t.
\end{cases}
\label{eq:reach}
\end{equation}

Using these notions, we can give the following easy characterization of boundaries.
\begin{lemma}\label{thm:boundary-reach-lemma}
Two distinct vertices $v$, $w$ of~$\Cc$ have a common right boundary~$B=\{b, \mymod{b+1}\}$ with $v < w < B$ if and only if
\[
	\rreach^t(v) = \rreach^t(w) + \mymod{w-v}
\]
holds for some time~$t$.
In that case, equality holds for all $t \geq \at(v, b)$.

Analogously, $v$~and~$w$ have a common left boundary~$B = \{\mymod{b-1},b\}$ with $B < w < v$ if and only if
\[
	\lreach^t(v) = \lreach^t(w) + \mymod{v-w}
\]
holds for some time~$t$. In that case, equality holds for all $t \geq \at(v, b)$.

\end{lemma}
\begin{proof}
We only prove the first half, as the second half follows by symmetry.

If $B$~is a right boundary of~$v$ and~$w$ with $v < w < B$,
then the player starting at~$w$ will be forced to wait at~$b$ until time~$\appear{B}-1 \geq \at(v, b)$.
Thus, eventually
\[
	\mymod{v + \rreach^{\at(v, b)}(v)} = b = \mymod{w + \rreach^{\at(v, b)}(w)}.
\]
It is further clear that $\mymod{v + \rreach^t(v)} = \mymod{w + \rreach^t(w)}$ must then also hold for all $t > \at(v, b)$.

For the converse, assume now~$t$ to be chosen minimally such that
\[
	\rreach^{t-1}(v) < \rreach^{t-1}(w) + \mymod{w-v} \leq \rreach^t(w) + \mymod{w-v} = \rreach^t(v).
\]
Since $\rreach^t(v) \leq \rreach^{t-1}(v)+1$,
we must then have $\rreach^t(w) = \rreach^{t-1}(w)$.
Therefore, the edge $\{\mymod{w + \rreach^t(w)},\;\mymod{w+\rreach^t(w)+1}\}$ is a (right) blocker for~$w$ and thus also a (right) boundary of~$v$, as $\rreach^t(w) + \mymod{w-v} = \rreach^t(v)$.
\end{proof}

We specifically note the following consequence of \cref{thm:boundary-reach-lemma}.
\begin{corollary}\label{thm:boundary-reach-corr}
Let $v$,$w$ be two vertices of~$\Cc$.
Then $w \in O_v$ if and only if
\begin{align*}
	\rreach^t(v) &< \rreach^t(w) + \mymod{w-v} \quad\text{and}\\
	\lreach^t(v) &< \lreach^t(w) + \mymod{v-w}
\end{align*}
hold for all $t \geq 0$.
\end{corollary}
\begin{proof}
Clearly both inequalities hold for $t = 0$ (unless $v = w$ in which case we are done).
By \cref{thm:boundary-reach-lemma}, neither of the two can hold with equality for any value of~$t$.
Combining this with the fact that both sides of the inequalities can only increase by~$1$ during each time step
yields the claim.
\end{proof}

The following technical lemma will be the key to solving \dVoronoi{} games.
It's precondition can be paraphrased as ``none of the two players has caught up with their opponent yet in neither direction''.

\begin{lemma}\label{thm:main-technical-lemma}
Let $v, w$ be two vertices of~$\Cc$ and $t\geq 0$. If
\begin{align*}
\mymod{v - \lreach^{t-1}(v)} &\neq \mymod{w - \lreach^{t-1}(w)} \quad\text{and} \\
\mymod{v + \rreach^{t-1}(v)} &\neq \mymod{w + \rreach^{t-1}(w)}
\end{align*}
then
\[
\Delta^t(v, w) = u_1^t(v, w) - u_1^t(w, v) = \reach^t(v) - \reach^t(w).
\]
\end{lemma}
\begin{proof}
Note first that the assumption of the lemma must also hold for all $t' < t$
(by \cref{thm:boundary-reach-lemma}, if we ever had equality then this would propagate to all subsequent times).

We now use induction over~$t$.
Clearly, the claim holds for~$t=0$.

Observe that
\begin{equation} \label{eq:expansion-1}
\begin{aligned}
u_1^t(v, w) ={}&
u_1^{t-1}(v, w) \\
&{}+ (\lreach^t(v) - \lreach^{t-1}(v))\cdot[ \mymod{v - \lreach^t(v)} \notin \Omega^t(w)] \\
&{}+ (\rreach^t(v) - \rreach^{t-1}(v))\cdot [ \mymod{v + \rreach^t(v)} \notin \Omega^t(w) ].
\end{aligned}
\end{equation}

Let now $t_{vw}$ be the first time that $v$~and~$w$ ``meet'' inside $[v, w]$, i.e., the smallest value for which
$\rreach^{t_{vw}}(v) + \lreach^{t_{vw}}(w) \geq \mymod{w-v}$.
Let analogously $t_{wv}$ be minimal with $\rreach^{t_{wv}}(w) + \lreach^{t_{wv}}(v) \geq \mymod{v-w}$.

Assume now that $\rreach^t(v) > \rreach^{t-1}(v)$. Then we claim that
\begin{equation*}
	\mymod{v + \rreach^t(v)} \in \Omega^t(w) \iff t \geq t_{vw}.
\end{equation*}
The reason for this is that our assumption from the lemma implies that $\rreach^{t-1}(w) < \mymod{v-w} + \rreach^{t-1}(v)$,
therefore $w$~cannot, at time~$t-1$, have reached $\mymod{v + \rreach^{t-1}(v)}$ by going in positive direction.
Consequently, $w$~cannot reach $\mymod{v + \rreach^{t-1}(v) + 1} = \mymod{v + \rreach^t(v)}$ until time~$t$ by going in positive direction.
This establishes that $\mymod{v + \rreach^t(v)} \in \Omega^t(w)$ implies $t \geq t_{vw}$.
Moreover, if $t\geq t_{vw}$, then by the definition of $t_{vw}$ we have that $\mymod{v + \rreach^t(v)} \in \Omega^t(w)$.

For symmetrical reasons, if $\lreach^t(v) > \lreach^{t-1}(v)$, then
\[
	\mymod{v - \lreach^t(v)} \in \Omega^t(w) \iff t \geq t_{wv}.
\]

Together, these two claims allow us to rewrite \eqref{eq:expansion-1} as follows:
\begin{equation} \label{eq:expansion}
\begin{aligned}
u_1^t(v, w) ={}&
u_1^{t-1}(v, w) \\
&{}+ (\lreach^t(v) - \lreach^{t-1}(v))\cdot[ t < t_{wv}] \\
&{}+ (\rreach^t(v) - \rreach^{t-1}(v))\cdot [ t < t_{vw} ]
\end{aligned}
\end{equation}

Our assumptions from the lemma directly imply that $\rreach^{t-1}(v) <\allowbreak \mymod{w-v} +\allowbreak \rreach^{t-1}(w)$
and $\lreach^{t-1}(v) < \mymod{v-w} + \lreach^{t-1}(w)$. 
This and the definitions of $t_{vw}$ and $t_{wv}$ give us the following two implications (since any edge that has already been crossed by the player starting at $w$ must also be available to the player starting at~$v$).
\begin{align}
t \geq t_{wv} &\implies \lreach^t(v) = \lreach^{t-1}(v) + 1 \label{eq:impl1}\\
t \geq t_{vw} &\implies \rreach^t(v) = \rreach^{t-1}(v) + 1 \label{eq:impl2}
\end{align}
By symmetry, \eqref{eq:expansion}--\eqref{eq:impl2} also hold with $v$ and $w$ swapped.
From \eqref{eq:expansion}, we obtain
\begin{align*}
u_1^t(v, w) - u_1^t(w, v) ={}&
u_1^{t-1}(v, w) - u_1^{t-1}(w, v) \\
& \left. \begin{aligned}
{}+ (\lreach^t(v) - \lreach^{t-1}(v) - \rreach^t(w) + \rreach^{t-1}(w)) \cdot [t < t_{wv}] \\
{}+ (\rreach^t(v) - \rreach^{t-1}(v) - \lreach^t(w) + \lreach^{t-1}(w)) \cdot [t < t_{vw}] 
\end{aligned} \right\} (\star)\\
\intertext{where we may use the induction hypothesis to get}
={}& \lreach^{t-1}(v) + \rreach^{t-1}(v) - \lreach^{t-1}(w) - \rreach^{t-1}(w) + (\star) \\
={}& \phantom{{}+{}} (\lreach^{t-1}(v) - \rreach^{t-1}(w)) \cdot [t \geq t_{wv}] \\
&{}+ (\rreach^{t-1}(v) - \lreach^{t-1}(w)) \cdot [t \geq t_{vw}] \\
&\left. \begin{aligned}
&{}+ (\lreach^t(v) - \rreach^t(w)) \cdot [t < t_{wv}] \\
&{}+ (\rreach^t(v) - \lreach^t(w)) \cdot [t < t_{vw}] 
\end{aligned} \right\} (\star\star) \\
\intertext{and, using~\eqref{eq:impl1} and \eqref{eq:impl2},}
={}& \phantom{{}+{}} (\lreach^t(v) - \rreach^t(w)) \cdot [ t \geq t_{wv}] \\
&{}+ (\rreach^t(v) - \lreach^t(w)) \cdot [ t \geq t_{vw}] \\
&{}+ (\star\star) \\
={}& \lreach^t(v) - \rreach^t(w)  + \rreach^t(v) - \lreach^t(w) \\
={}& \reach^t(v) - \reach^t(w).
\qedhere
\end{align*}
\end{proof}

With \cref{thm:main-technical-lemma} at hand, we can now easily determine the winner of a \dVoronoi{} game if both players play in their opponent's outer boundary component.
\begin{theorem} \label{thm:vor-cycle-diff-payout}
Let $v$ and $w$ be vertices of~$\Cc$ with $v \in O_w$ and $w \in O_v$.
Then, for all~$t \geq 0$,
\[
	u_1^t(v, w) - u_1^t(w, v) = \reach^t(v) - \reach^t(w).
\]
\end{theorem}
\begin{proof}
Since $w \in O_v$, we have that 
$\rreach^t(v) < \mymod{w-v} + \rreach^t(w)$
and
$\lreach^t(v) < \allowbreak \mymod{v-w} + \allowbreak \lreach^t(w)$
by \cref{thm:boundary-reach-corr}.
Analogously, these hold with $v$~and~$w$ swapped.
Thus we may apply \cref{thm:main-technical-lemma}.
\end{proof}

It remains to consider what happens if the players do not play in each others outer boundary component, i.e., the case not covered by \cref{thm:vor-cycle-diff-payout}.
Here and in the following, $\jointime = \jointime(v, w)$ denotes the first time at which $\Omega^\jointime(v)$ and $\Omega^\jointime(w)$ are inclusion-wise comparable.
\begin{theorem}\label{thm:vor-cycle-diff-payout-2}
Let $v$ and $w$ be vertices of $\Cc$ with $w \notin O_v$.
Let $\jointime = \jointime(v, w)$ and assume that $\Omega^\jointime(v) \supseteq \Omega^\jointime(w)$.
Then
\begin{align}
	u_1^t(v, w) - u_1^t(w, v) &= \reach^t(v) - \reach^t(w) &\forall t \leq \jointime\,, \tag{i}\label{eq:vor-cycle-diff-payout-2a}
	\\u_1^t(v, w) - u_1^t(w, v) &\geq \reach^t(v) - \reach^t(w) \geq 0 &\forall t \geq \jointime\,. \tag{ii}\label{eq:vor-cycle-diff-payout-2b}
\end{align}
\end{theorem}
\begin{proof}
The proof of \eqref{eq:vor-cycle-diff-payout-2a} is just a direct application of \cref{thm:main-technical-lemma}.

Assume without loss of generality that $\J_v(w)$ is a right boundary component.
Then $v$~has a right boundary~$B = \{b, \mymod{b+1}\}$ with $v < w < B$.
Observe that $\jointime \leq \at(v, b) < \appear{B}$ (as $\Omega^{\at(v,b)}(v)\supseteq \Omega^{\at(v,b)}(w)$)
and thus $\reach^\jointime(v) - \reach^\jointime(w) \geq 0$ follows from the definition \eqref{eq:reach} of $\reach$.
Since, $\Omega^t(v) \supseteq \Omega^t(w)$ for all $t \geq \jointime$,
we can further conclude from~\eqref{eq:reach} that $\reach^t(v) - \reach^t(w) \geq 0$ also for all $t \geq \jointime$.

So it remains to show the first inequality of~\eqref{eq:vor-cycle-diff-payout-2b}.
For this we also use induction, starting at $t = \jointime$, for which the validity follows from \eqref{eq:vor-cycle-diff-payout-2a}.
Then (for $t \geq \jointime$) we have that
$\lreach^t(v) - \lreach^t(w)$ is a nonincreasing function of~$t$
and $\rreach^t(v) - \rreach^t(w)$ is a constant function of~$t$.
Therefore, as $t \geq \jointime$ and thus $\Omega^t(v) \supseteq \Omega^t(w)$, using the induction hypothesis, we get
\begin{align*}
u_1^t(v, w) - u_1^t(w, v) ={}&
u_1^{t-1}(v, w) - u_1^{t-1}(w, v) \\
&{} + (\lreach^t(v) - \lreach^{t-1}(v)) \cdot [\mymod{v - \lreach^t(v)} \notin \Omega^t(w)]  \\
\geq{}& u_1^{t-1}(v, w) - u_1^{t-1}(w, v) \\
\geq{}& \reach^{t-1}(v) - \reach^{t-1}(w) \\
={}& \lreach^{t-1}(v) - \lreach^{t-1}(w) + \rreach^{t-1}(v) - \rreach^{t-1}(w) \\
\geq{}& \lreach^t(v) - \lreach^t(w) + \rreach^t(v) - \rreach^t(w) \\
={}& \reach^t(v) - \reach^t(w). \qedhere
\end{align*}
\end{proof}

\Cref{thm:vor-cycle-diff-payout,thm:vor-cycle-diff-payout-2} already show that a vertex~$v$ is a good choice to play on
if $\reach^t(v)$ is large for large values of~$t$.
We make this more precise in the following theorem. For this, we first need to introduce the colexicographic order:
For two monotone sequences $(a^i)_i$, $(b^i)_i$ of numbers in~$\oneto{n}$, we write
\[
	(a^i)_i > (b^i)_i \iff \exists i: a^i > b^i 
	\land \forall j > i: a^j \geq b^j.
\]
Note that, due to the monotonicity and boundedness of the sequences, this order is total.

We will use the shorthand notation $v \succ w$ for two vertices $v$, $w$ to denote that
\begin{align*}
	(\reach^t(v))_{t = \jointime(v, w)}^\infty &> (\reach^t(w))_{t = \jointime(v, w)}^\infty
\intertext{which, by \eqref{eq:reach}, is equivalent to}
	 (\abs{\Omega^t(v)})_{t = \jointime(v, w)}^\infty &> (\abs{\Omega^t(w)} )_{t = \jointime(v, w)}^\infty.
\intertext{Further, we write $v \sim w$ if}
	(\reach^t(v))_{t = \jointime(v, w)}^\infty &= (\reach^t(w))_{t = \jointime(v, w)}^\infty.
\end{align*}
Since the colexicographic order is total, we have $v \prec w$ if and only if $v \not\succsim w$.
We remark that~$\sim$~is not transitive and thus not an equivalence relation (see e.g.\ \cref{fig:intransitive}).

\begin{figure}
	\centering
	\begin{tikzpicture}[scale=1.3]
		\path[edge,radius=1]
		(180:1) node[label=180:$x$] {}
		\foreach[count=\i] \label in {1, 4, 1, 2, 4, 1}{
			arc[start angle=180+(\i-1)*60, end angle=180+\i*60]
			node[vertex, fill=white] {}
			node[at={(180+\i*60-30 : 0.8)}] {$\label$}
		}
		(-60:1) node[label=-60:$y$] {}
		(60:1) node[label=60:$z$] {}
		;
	\end{tikzpicture}
	\caption{
		In this monotonically growing cycle, $x \sim y$ and $x \sim z$, but $y \succ z$.
	}	
	\label{fig:intransitive}
\end{figure}
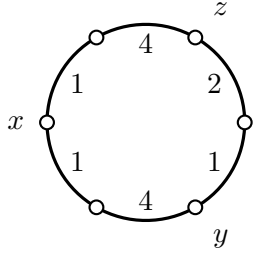

Using this newly introduced notation, we can now summarize our findings on \dVoronoi{} games as follows.
\begin{theorem}\label{thm:diffvoronoi-main}
Let $v$, $w$ be two vertices in $\mathcal{C}$ and $\jointime = \jointime(v, w)$. Then,
\[
u_1(v, w) \geq u_1(w, v) \iff \reach^{\bar{t}}(v) \geq \reach^{\bar{t}}(w)
\iff v \succsim w.
\]
\end{theorem}
\begin{proof}
The second equivalence is due to the fact that $\Omega^t(v)$ and $\Omega^t(w)$ are inclusion-wise comparable for all $t \geq \jointime$.
So it remains to prove the first equivalence.

Assume $\reach^{\bar{t}}(v) \geq \reach^{\bar{t}}(w)$,
i.e.\ $\Omega^{\bar{t}}(v) \supseteq \Omega^{\bar{t}}(w)$ by choice of~$\bar{t}$ and~\eqref{eq:reach}.

If $w \in O_v$, then $\Omega^{\bar{t}}(v) = V$ by \cref{thm:outer-property}.
This already tells us that all edges appear at or before time~$\bar{t}$ (we use $O_v \neq \emptyset$).
In particular we have for all $t \geq \bar{t}$ that
$\reach^t(v) = \reach^{t-1}(v)+2$ and $\reach^t(w) = \reach^{t-1}(w)+2$
and thus $\reach^t(v) - \reach^t(w) = \reach^{\bar{t}}(v) - \reach^{\bar{t}}(w)$.
Furthermore $v \in O_w$ by \cref{thm:outer-beats-inner}, so the claim follows by \cref{thm:vor-cycle-diff-payout}.

If $w \notin O_v$, then we can directly apply \cref{thm:vor-cycle-diff-payout-2}~\eqref{eq:vor-cycle-diff-payout-2b}.

For the other implication, assume (by symmetry) $\omega^{\bar{t}}(v) > \omega^{\bar{t}}(w)$ and thus $\Omega^{\jointime}(v) \supset \Omega^\jointime(w)$.
If~$w \in O_v$, then the proof works exactly as above.
If~$w \notin O_v$, then by \cref{thm:vor-cycle-diff-payout-2},
$u_1^{\bar{t}}(v, w) > u_1^{\bar{t}}(w, v)$.
As $u_1^t(w, v) = u_1^{\bar{t}}(w, v)$
and $u_1^t(v, w) \geq u_1^{\bar{t}}(v, w)$ for all $t \geq \bar{t}$,
this proves the claim.
\end{proof}

Note that \cref{thm:diffvoronoi-main} also holds if all inequalities are replaced by their strict versions,
since this is tantamount to simply negating and mirroring all three equivalent statements.

We call a vertex~$v$ \emph{paramount} if $v \succsim w$ for all other vertices~$w$.
Note that the existence of at least one paramount vertex~$v$ is guaranteed by simply picking~$v$ to maximize $(\omega^t(v))_{t=0}^\infty$ with respect to the colexicographic order.

We can now classify all Nash equilibria of temporal difference Voronoi games.
\begin{corollary}\label{thm:diffvoronoi-all-ne}
A pair of vertices~$(v, w)$ forms a Nash equilibrium in $\dVor(\Cc)$ if and only if $v$, $w$ are both paramount.
\end{corollary}
\begin{proof}
If $v$ and~$w$ are paramount, then for all vertices~$x$ by~\cref{thm:diffvoronoi-main} $\Delta(v, x) \geq 0$ and $\Delta(w, x) \geq 0$.
In particular, $\Delta(v, w) = -\Delta(w, v) = 0$ and neither player has a better response.

Assume now conversely that~$(v, w)$ is a Nash equilibrium.
If either vertex, say~$v$, is not paramount, then there exists a vertex~$w'$ with $w' \succ v$,
thus $\Delta(w', v) > 0$ by \cref{thm:diffvoronoi-main}.
Then also $\Delta(w, v) > 0$ since $w$~is a best response to~$v$.
Since $\Delta(w, w) = 0 > \Delta(v, w)$, $v$~cannot be a best response to~$w$.%
\footnote{
Note that we allow both players to pick the same vertex.
If we change the rules and forbid this, then \cref{thm:diffvoronoi-all-ne} still holds as long as there are at least two paramount vertices.
We leave the case of a single paramount vertex open for future research.
}
\end{proof}

We conclude this subsection by sketching an algorithm to compute all paramount vertices.
\begin{theorem}\label{thm:paramount-algo}
All paramount vertices of~$\Cc$ can be computed in~$\bigO(n^2)$~time.
\end{theorem}
\begin{proof}
For any vertex~$v$, we can compute a compact representation of the (formally infinite) vectors~$(\lreach^t(v))_{t=0}^\infty$ and~$(\rreach^t(v))_{t=0}^\infty$ in $\bigO(n)$~time,
by making note of the points in time~$t$ at which the value of~$\lreach^t(v)$ resp.\ $\rreach^t(v)$ increases.
Knowing~\eqref{eq:reach}, we need not store any further entries once we reach~$\lreach^t(v) + \rreach^t(v) \geq n$, as then clearly $t \geq \max\{\appear{e} \mid e \in E(\Cc)\}$.

Having computed these two vectors for all vertices, we can subsequently test whether any given vertex~$v$ is paramount in~$\bigO(n)$~time as follows.

Set $\ell \gets \mymod{v-1}$ and $r \gets \mymod{v+1}$.
Starting at~$t = 1$, iterate over all times~$t$ for which $\lreach^t(v) > \lreach^{t-1}(v)$ or $\rreach^t(v) > \rreach^{t-1}(v)$,
i.e., all times where~$v$ is not stuck between two boundaries.
Our algorithm will maintain the invariant that $x \precsim v$ holds for all vertices~$x$ with $\ell < x < r$.

For any such time~$t$, compare $\Omega^t(\ell)$ and~$\Omega^t(v)$.
If $\Omega^t(\ell) \supset \Omega^t(v)$, then~$v$~is clearly not paramount and we can abort.
If conversely $\Omega^t(\ell) \subseteq \Omega^t(v)$, then we must have $\ell \precsim v$ since we did not abort at any previous step.
Thus, we may then decrement~$\ell$ by~$1$ and repeat the comparison.

Symmetrically, compare also $\Omega^t(r)$ and~$\Omega^t(v)$,
possibly incrementing~$r$.

During the above, if $\ell$~and~$r$ ever pass each other (i.e., if $\mymod{r + 1} = \ell$), then we can immediately conclude that~$v$~is paramount
as we have compared~$v$ to all other vertices.

If this does not happen, then we must reach a point where $\Omega^t(v)$ is inclusion-wise incomparable to both, $\Omega^t(\ell)$ and~$\Omega^t(r)$.
Then observe that $\Omega^t(v)$ must also be incomparable to~$\Omega^t(x)$ for all vertices $x$ with $r < x < \ell$.
We then continue with the next value of~$t$.

Note that once we reach~$\lreach^t(v) + \rreach^t(v) \geq n$,
we will iterate at most~$\floor{n/2}$ more times --- afterwards every vertex has reached every other vertex.
Thus we perform at most~$\bigO(n)$ loop iterations.
Since we also update $\ell$ and~$r$ at most~$n$~times overall,
our algorithm runs in~$\bigO(n)$~time.
\end{proof}

\subsubsection{Temporal Voronoi games}
\label{sec:voronoi-cycle-normal}

Moving on to temporal Voronoi games, paramount vertices remain a reasonable choice for the players.
However, the situation becomes slightly more intricate.
Call a vertex~$w$ a \emph{paramount response} to another vertex~$v$, if $w$ is a best response to~$v$ and furthermore $w \succsim w'$ for all other best responses~$w'$.
Then in most cases a paramount vertex and a paramount response will form a Nash equilibrium;
but this does not hold universally, as seen e.g.\ in \cref{fig:paramount-counterexample}.
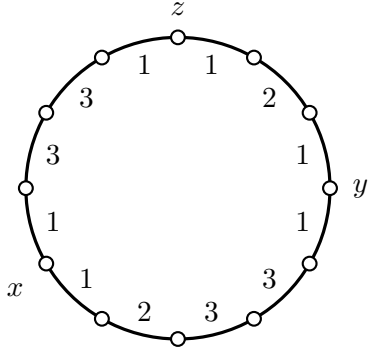
\begin{figure}
	\centering
	\begin{tikzpicture}[scale=2]
		\path[edge,radius=1]
		(210:1) node[label=219:$x$] {}
		\foreach[count=\i] \label in {1, 2, 3, 3, 1, 1, 2, 1, 1, 3, 3, 1}{
			arc[start angle=210+(\i-1)*30, end angle=210+\i*30]
			node[vertex, fill=white] {}
			node[at={(210+\i*30-15 : 0.85)}] {$\label$}
		}
		(0:1) node[label=0:$y$] {}
		(90:1) node[label=90:$z$] {}
		;
	\end{tikzpicture}
	\caption{On this temporal graph, $x$, $y$, and~$z$ are all paramount, but while $(y, z)$~form a Nash equilibrium of the temporal Voronoi game, there is no Nash equilibrium involving~$x$.}
	\label{fig:paramount-counterexample}
\end{figure}

Our goal in this section will be to prove the following theorem, which states that we will reach a Nash equilibrium by iterating the above at most two times.

\begin{theorem}\label{thm:vor-ne-cycle}
Let $x$ be a paramount vertex,
$y$ a paramount response to~$x$,
and $z$ a paramount response to~$y$.
Then $(x, y)$ or $(y, z)$ is a Nash equilibrium of $\Vor(\Cc)$.
\end{theorem}

From \Cref{thm:vor-ne-cycle}, the main result of this subsection directly follows:

\begin{corollary}
Every instance of $\Vor(\Cc)$ has a Nash equilibrium
which can be found in $\bigO(n^2)$~time.
\end{corollary}
\begin{proof}
We use \cref{thm:vor-ne-cycle}.
As seen in the previous subsection, a paramount vertex~$x$ always exists and we can find it in~$\bigO(n^2)$~time.
Since we can determine all best responses to any given vertex in $\bigO(n^2)$~time, computing~$y$ and~$z$ works similarly.
\end{proof}

The proof of \cref{thm:vor-ne-cycle} is split into three parts,
\cref{thm:vor-ne-cycle:case-in-in,thm:vor-ne-cycle:case-in-out,thm:vor-ne-cycle:case-out-out}.
For the first and easiest part, we only need the following lemma.

\begin{lemma}\label{thm:paramount-inner-boundaries}
Let $x$ be any vertex of~$\Cc$ and $y$ a best response to~$x$.
If $\J_x(y)$ is an inner boundary component, then the corresponding boundaries of~$x$ are also boundaries of~$y$.
\end{lemma}
\begin{proof}
Let $\J_x(y) = [b, b']$ be a right boundary component of~$x$ without loss of generality.
Note that $U_1(y, x) \subseteq \J_x(y)$ (\cref{thm:boundary-components})
and, since $y$~is a best response to~$x$, also $U_1(y, x) \supseteq U_1(b, x) = \J_x(y)$.
In particular, $\at(y, b) < \at(x, b) = \appear{\{\mymod{b-1},\: b\}}$
and $\at(y, b') < \at(x, b') < \appear{\{b',\: \mymod{b'+1}\}}$. 
\end{proof}

We can now give the first of three parts of the proof of \cref{thm:vor-ne-cycle} in the form of the following lemma.
It covers all cases where $z \notin O_x$.
\begin{lemma} \label{thm:vor-ne-cycle:case-in-in}
Let $x$ be a paramount vertex of~$\Cc$,
$y$ a best response to~$x$,
and $z$ a best response to~$y$
with $z \notin O_x$.
Then $(x, y)$ is a Nash equilibrium of~$\Vor(\Cc)$.
\end{lemma}
\begin{proof}
It suffices to show that $u_1(z, y) \leq u_1(x, y)$.
Let $\J_x(z) \eqqcolon [a, b]$ be a left boundary component of~$x$ without loss of generality.
We distinguish three cases.

If $y \in \J_x(z)$, then \cref{thm:paramount-inner-boundaries} applies to~$y$ (since~$\J_x(z)=\J_x(y)$ is an inner component) and yields $\J_y(z)\subseteq\J_x(z)$.
Otherwise, if $y \notin \J_x(z)$ and additionally $\at(x, b) \geq \at(y, b)$, we have $b<y<x$, as $\{\mymod{a-1},a\}$ is a left boundary of $x$ and $y\notin [a,b]$. By \Cref{thm:boundary_inheritance-cycle}, we get that $\{b,\mymod{b+1}\}$ is a left boundary of $y$, implying $\J_y(z) = \J_x(z)$ (recall \cref{thm:boundary_inheritance2-cycle}).
So we have in both of these cases that $U_1(z, y) \subseteq \J_y(z) \subseteq \J_x(z)$ (by \cref{thm:boundary-components}).
Also, $\abs{\J_x(z)} = u_1(b, x) \leq u_1(y, x)$ where the inequality holds because $y$ is a best response to $x$.
Furthermore, $u_1(y, x) \leq u_1(x, y)$ since $x$ is paramount (\cref{thm:diffvoronoi-main}).
Together this gives $u_1(z, y)\leq u_1(b,x)\leq u_1(y,x) \leq u_1(x, y)$.

It remains to consider the case that $y \notin \J_x(z)$ and $\at(x, b) < \at(y, b)$.
Since~$x$ has no blockers in~$[a, b]$ and $\at(x, a) < \appear{\{\mymod{a-1},a\}}$, we then also have $\at(x, a) <\allowbreak \at(y, a)$ and thus $\J_x(z) \subseteq U_1(x, y)$.
Further, note that $U_1(z, y) \subseteq \allowbreak U_1(z, x) \cup U_1(x, y)$.
Since~$U_1(z,x)\subseteq \J_x(z)$ (\cref{thm:boundary-components}), this yields~$U_1(z,y)\subseteq\allowbreak \J_x(z)\subseteq\allowbreak U_1(x,y)$,
which proves the claim.
\end{proof}

For the next part of the proof of \cref{thm:vor-ne-cycle}, we first need to briefly investigate how a section of~$\Cc$ (i.e., a temporal path graph)
is split between the two players if neither of them has any blockers in the area.
To this end, assume for simplicity that the temporal path graph~$\Pp$ is obtained from~$\Cc$ by deleting the edge $\{n-1, 0\}$.

\begin{lemma}\label{thm:interval-without-boundaries}
Let~$ x<v\leq w< y$ be vertices of~$\Pp$.
If $x$ and~$y$ have no blockers in~$[v,w]$, then in~$\Vor(\Pp)$
\begin{align*}
U_1(x, y) \cap [v,w] &= [v, \min\{\ceil{z}-1, w\}] \text{ and}\\
U_1(y, x) \cap [v,w] &= [\max\{\floor{z}+1, v\}, w],
\end{align*}
where
\[
z =  \frac{v + w + \at(y,w) - \at(x, v)}{2}.
\]
\end{lemma}
\begin{proof}
It is easy to verify that,
unless one of $x, y$ reaches all of $[v, w]$ before the other,
the two will meet each other exactly at point~$z$ as above,
where $z$ might be either an integer (i.e., a vertex) or have fractional part $0.5$ (i.e., an edge).
If $z$ is an integer, then the corresponding vertex is colored gray,
and $\ceil{z}-1 = z -1 \in U_1(x, y)$ and $\floor{z}+1 = z+1 \in U_1(y, x)$.
Otherwise, we have $\ceil{z}-1 = \floor{z} \in U_1(x, y)$
and $\floor{z}+1 = \ceil{z} \in U_1(y, x)$.
This proves the claim.
\end{proof}
From \Cref{thm:interval-without-boundaries}, we can directly conclude the following,
in which $\med(a, b, c)$ denotes the median of the three numbers~$a, b, c$.
\begin{lemma}\label{thm:no-boundaries-payout}
Let $x < v \leq w < y$ be vertices of~$\Pp$.
Define
\[ M \coloneqq \med\left(0, \ceil*{\frac{w-v + \at(y,w) - \at(x,v)}{2}}, w-v+1 \right). \]
Then the following holds for~$\Vor(\Pp)$:
\begin{enumerate}[(i)]
	\item If $x$ has no blockers in~$[v, w]$, then $\abs{U_1(x, y) \cap [v, w]} \geq M$.
	\item If $y$ has no blockers in~$[v, w]$, then $\abs{U_1(x, y) \cap [v, w]} \leq M$.
\end{enumerate}
\end{lemma}

We can now apply \cref{thm:interval-without-boundaries} to an outer boundary component on the cycle as follows.
\begin{lemma}\label{thm:outer-interval-response}
Let $x$, $a$, and $b$ be vertices of~$\Cc$, 
$O_x = [a, b]$
and suppose that $\at(x, a) \leq \at(x, b)$.
Then \[u_1(a, x) \geq \min\left\{\ceil*{\frac{\mymod{b - a} + 1 + \at(x,b) - \at(x, a)}{2}}, \mymod{b - a} + 1\right\}.\]
\end{lemma}
\begin{proof}
If $a = b$, then the statement is easy to verify.
Otherwise, $\{a, \mymod{a+1}\}$ is no boundary of~$x$, therefore
$\at(a, \mymod{a+1}) = \appear{\{a, \mymod{a+1}\}} \leq \at(x, a)$
and we may assume equality without loss of generality since we are interested in a lower bound for $u_1(a, x)$.
Then, as $x$ has no boundary in~$[a,b]$, $a$ has no blockers in $[\mymod{a+1}, b]$. Thus, by applying~\cref{thm:no-boundaries-payout}~(i), we obtain
\begin{align*}
u_1(a, x) &= 1 + \abs{U_1(a, x) \cap [\mymod{a+1}, b]} \\
 &\geq 1 + \med\left(0, \ceil*{\frac{\mymod{b- (a+1)} + \at(x,b) - \at(a,\mymod{a+1})}{2}},\; \mymod{b-a} \right) \\
&= \min\left\{\ceil*{\frac{\mymod{b-a}+1 + \at(x,b) - \at(x,a)}{2}},\; \mymod{b-a}+1 \right\},
\end{align*}
where the first equality holds by \cref{thm:boundary-components} as $\J_x(a)= [a,b]$ and the last equality holds as we have assumed that $\at(a,a+1)=\at(x,a)$ and $\at(x,a)\leq \at(x,b)$.
\end{proof}

We also record the following easy observation.
\begin{lemma}\label{thm:trivial-lemma1}
Let $v$ be a vertex of~$\Cc$ with left boundary~$A$ and right boundary~$B$.
If a vertex~$w \notin [A, B]$ has no boundaries inside%
\footnote{We do not consider the boundaries $A$~and~$B$ to be inside~$[A, B]$.}%
~$[A, B]$,
then $U_1(v, w) \supseteq [A, B]$.
\end{lemma}
\begin{proof}
Suppose a vertex $x \in [A, B]$ had $\at(w, x) \leq \at(v, x)$.
Say without loss of generality $x \in [A, v]$.
Since $\appear{A} > \at(v, x)$,
$w$~must first reach~$x$ via~$v$.
By \cref{thm:boundary}, this contradicts $w$ not having any boundaries in~$[A, v]$
(we can treat~$\Cc$ as a path by cutting at~$A$).
\end{proof}

Using the results above, we can now prove the second puzzle piece of~\cref{thm:vor-ne-cycle}.
\begin{lemma} \label{thm:vor-ne-cycle:case-in-out}
Let $x$ be a paramount vertex of~$\Cc$,
$y$ a best response to~$x$,
and $z$ a best response to~$y$
and suppose that $y \notin O_x$ and $z \in O_x$.
Then $(x, y)$ is a Nash equilibrium of~$\Vor(\Cc)$.
\end{lemma}
\begin{proof}
It suffices to show $u_1(x, y) \geq u_1(z, y)$.
Without loss of generality $\J_x(y)$ is a right boundary component of~$x$.
Let $B_1$ be the rightmost boundary~$B$ of~$x$ with $x < B < y$ (note that $B_1$ is one of the two boundaries enclosing~$\J_x(y)$).
Let further $B_2$ be the rightmost (right) boundary of~$x$
and $B_3$ the leftmost (left) boundary of~$x$,
that is, $[B_2, B_3] = O_x$.
See \cref{fig:vor-ne-cycle:case-in-out} for an illustration.
We set $t_i \coloneqq \appear{B_i}$.

\begin{figure}
\centering
\begin{tikzpicture}[scale=2]
\path[timeline,radius=1,every label/.style=absolute]
	(-50:1) node[vboundary,rotate=-50,label=-50:$B_1$] {}
	arc[start angle=-50, end angle=40] node[midway,right] {$\eta$}
	node[vboundary,rotate=40,label=40:$B_2$] {}
	arc[start angle=40, end angle=160] node[midway,above] {$\zeta$}
	node[vboundary,rotate=160,label=160:$B_3$] {}
	arc[start angle=160, end angle=310] node[midway,below] {$\xi$}
	;
\path
	(270:1) node[vertex,fill=white,label=90:$x$] {}
	(-30:1) node[vertex,fill=white,label=150:$y$] {}
	(120:1) node[vertex,fill=white,label=-60:$z$] {}
	;
\end{tikzpicture}
\caption{Illustration of the proof of \cref{thm:vor-ne-cycle:case-in-out}; positive direction oriented counterclockwise.}
\label{fig:vor-ne-cycle:case-in-out}
\end{figure}
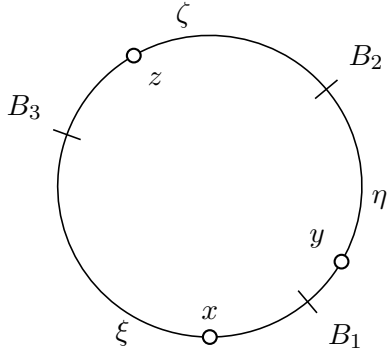

First consider the case that $x \notin O_y$.
Then $y$ has a left boundary $B$ with $B < x < y$.
By \cref{thm:boundary_inheritance-cycle}, $B$~is also a left boundary of~$x$, therefore $B_3 < B < x$.
Thus, $B_3$ is also a left boundary of~$y$ by \cref{thm:boundary_inheritance2-cycle}.
Since $B_2$, too, is a common right boundary of~$x$~and~$y$ by \cref{thm:boundary_inheritance-cycle},
we then have $\at(x, v) = \at(y, v)$ for all $v \in [B_2, B_3]$,
and in particular $O_y = O_x = [B_2, B_3]$.
From this and \cref{thm:boundary-components} we get that $U_1(z, y) = U_1(z, x)$.
Thus,
$u_1(x, y) \leq u_1(z, y) = u_1(z, x) \leq u_1(y, x)$
by the choice of~$y$ and~$z$.
Since $x$~is paramount, all of these inequalities must be equalities (by \cref{thm:diffvoronoi-main}), and the claim is proven.

In the second case, we have $x \in O_y$.
Let $\xi$ be the number of vertices in $[B_3, B_1]$,
$\eta$ the number of vertices in~$[B_1, B_2]$,
and $\zeta$ the number of vertices in~$O_x=[B_2, B_3]$
(compare \cref{fig:vor-ne-cycle:case-in-out}).
By \cref{thm:paramount-inner-boundaries}, $B_1$ is a left boundary of~$y$,
and, since $x \in O_y$, $B_1$ is the leftmost left boundary of~$y$.
(Note that there cannot be a left boundary of $y$ between $B_1$ and $x$, since this would contradict $B_1$ being a right boundary of $x$.)
Therefore, since $B_3$ is a left boundary of $x$ but not of $y$, $t_3 < t_1 + \xi$.
Also, by \cref{thm:boundary_inheritance-cycle} $B_2$ is a right boundary of $y$ and by \cref{thm:boundary_inheritance2-cycle} the rightmost right boundary of~$y$. Thus, $O_y=[B_2,B_1]$.

Further, we have
\begin{equation}
t_2 \geq t_1 + \eta \geq t_1 + u_1(y, x) \label{eq:B}
\end{equation}
where the first inequality holds as $B_1$ and $B_2$ are boundaries of~$x$ and the second inequality holds by \Cref{thm:boundary-components}.
Note that $\abs{t_2 - t_3} < \zeta$ since $O_x \neq \emptyset$.
Moreover, using that $O_y=[B_2,B_1]$, applying \cref{thm:trivial-lemma1} yields $[B_3, B_1] \subseteq U_1(x, y)$.
It remains to find out how many vertices from $[B_2, B_3]$ are colored by $x$ (where we already know that the rest will be colored gray).
To this end, we use \cref{thm:no-boundaries-payout} (mirrored), since $x$ clearly has no blockers in~$O_x=[B_2,B_3]$
and since we can ignore the fact that $y$~may reach vertices via~$B_3$ (as $x \in O_y$ will have reached them first).
This gives us
\begin{align}
u_1(x, y) &\geq \xi + \med\left(0, \ceil*{\frac{\zeta-1  + t_2 - t_3}{2}},\;  \zeta \right)
= \xi + \ceil*{\frac{\zeta -1 + t_2 - t_3}{2}} \label{eq:C}. \\
\intertext{Let $B_3\coloneqq[\mymod{b_3-1},b_3]$. Again using \Cref{thm:no-boundaries-payout}, since $U_1(z, y) \subseteq [B_2, B_1]$ (both ends being boundaries of~$y$)  and as we have already observed above that $y$ has no boundaries and thus in particular no blockers in $[B_3, B_1]$, we get}
u_1(z, y)  &\leq \zeta + \med\left(0, \ceil*{\frac{\xi -1 + t_1 - \at(z,b_3)}{2}} ,\; \xi\right) \notag
\\ &\leq \zeta + \med\left(0, \ceil*{\frac{\xi -1 + t_1 - t_3}{2}} ,\; \xi\right) \notag
\\ &\leq  \zeta + \ceil*{\frac{\xi -1 + t_1 - t_3}{2}}, \label{eq:D} \\
\intertext{where we used for the second inequality that $\at(z,b_3)\geq t_3$. Finally, by \cref{thm:outer-interval-response} and as $y$ is a best response to $x$, it follows
}
u_1(y, x) &\geq \min\left\{\ceil*{\frac{\zeta + \abs{t_2 - t_3}}{2}}, \zeta\right\}
= \ceil*{\frac{\zeta  + \abs{t_2 - t_3}}{2}}. \label{eq:E}
\end{align}

Using the above, we then derive by \eqref{eq:C} and \eqref{eq:D}
\begin{align*}
2(u_1(x, y) - u_1(z, y)) &\geq  
\mathrlap{ 2\xi + 2\ceil*{\frac{\zeta -1 + t_2 - t_3}{2}}
-2\zeta - 2\ceil*{\frac{\xi -1 + t_1 - t_3}{2}} }
\\ &\geq \mathrlap{2\xi - 2\zeta + \zeta -1 + t_2 - t_3 - \xi+1 - t_1 + t_3 -1}
\\ &= \xi - \zeta + t_2 - t_1 - 1
\\ &\geq t_3 + t_2 - 2t_1 - \zeta  \quad&\text{since $t_3 < t_1 + \xi$}
\\ &\geq t_3 + t_2- 2t_1 - 2 u_1(y,x) +\abs{t_2-t_3} \quad&\text{by \eqref{eq:E}}
\\ &\geq t_3 - t_2 + \abs{t_2-t_3} \quad&\text{by \eqref{eq:B}}
\\ &\geq 0
\end{align*}
Thus, $u_1(x, y) \geq u_1(z, y)$.
\end{proof}

The following easy result will help us in the last part of the proof of \cref{thm:vor-ne-cycle}.

\begin{lemma}\label{thm:outer-gray}
If $v \in O_w$ and $w \in O_v$ are two vertices of~$\Cc$,
then the pair of positions~$(v,w)$ produces at most two gray vertices in $\Vor(\Cc)$.
\end{lemma}
\begin{proof}
By \cref{thm:outer-property},  $\Omega^t(v)\subseteq \Omega^t(w)$ or $\Omega^t(w)\subseteq \Omega^t(v)$ for some $t$ implies that $\Omega^t(v)=\Omega^t(w)=V$.
Thus, any gray vertex must be reached from the two players from opposite sides,
so there can be at most one gray vertex in each of the two components in which $v$ and $w$ divide $\Cc$.
\end{proof}

We need one last ingredient before we can solve the final piece of the puzzle:
\begin{lemma}\label{thm:rescue-lemma}
Let $x$~be a paramount vertex of~$\Cc$,
and $y$~a paramount response to~$x$ with $y \sim x$.
Then $y$~is paramount.
\end{lemma}
\begin{proof}
Suppose for contradiction that there was a vertex~$y' \succ y$.
Since $x \sim y$, clearly $y' \neq x$.
Assume without loss of generality that $y < y' < x$.
Note that we must have $\jointime(y, y') < \jointime(x, y)$,
for otherwise $y' \succ x$ would follow from the definition of~$\succ$,
contradicting the fact that~$x$~is paramount.
Therefore $\Omega^{\jointime(y, y')}(y') \neq V$,
so $y \notin O_{y'}$ by \cref{thm:outer-property}.
Thus, $\J_{y'}(y)$ is a left boundary interval. Let~$B$ be its left boundary, i.e., the rightmost left boundary of~$y'$ with $B < y < y'$.

As $\jointime(y, y') < \jointime(x, y)$, vertex~$x$ has no boundaries inside $[B, y']$, thus
$[B, y'] \subseteq \allowbreak U_1(y', x)$ by \cref{thm:trivial-lemma1}.
Then it is easy to see that $U_1(y, x) \subseteq U_1(y', x)$, i.e., $y'$~is a best response to~$x$.
Since $y'$~is paramount, this contradicts the choice of~$y$ as paramount response to~$x$.
\end{proof}

Now, we can finally prove the remaining case of \cref{thm:vor-ne-cycle}:
\begin{lemma} \label{thm:vor-ne-cycle:case-out-out}
Let $x$ be a paramount vertex of~$\Cc$,
$y$ a paramount response to~$x$,
and $z$ a paramount response to~$y$,
and suppose that $y, z \in O_x$.
Then $(x, y)$ or~$(y, z)$ is a Nash equilibrium of~$\Vor(\Cc)$.
\end{lemma}
\begin{proof}
Without loss of generality, assume that $x < y < z$.
Further, we have $x \in O_y \cap O_z$ by \cref{thm:outer-beats-inner} and \cref{thm:diffvoronoi-main}, as $x$ is paramount. 
We prove the statement via case distinction.  \medskip

\case{1}{$\mathbf{z \notin O_y}$}
Then $\J_y(z)$ is an inner boundary component of~$y$
and $U_1(z, y) \subseteq \J_y(z)\subseteq V\setminus O_y \subseteq U_1(y, x)$ by \cref{thm:boundary-components,thm:mutual-outer-inner}.
Since~$x$ is paramount, we have $u_1(y, x) \leq u_1(x, y)$ (\cref{thm:diffvoronoi-main}).
Together, this gives $u_1(z, y) \leq u_1(x, y)$.
Thus, as $z$ is a best response to $y$, $x$ is a best response to~$y$ and $(x, y)$ is a Nash equilibrium. \medskip

\case{2}{$\mathbf{z \in O_y}$ and $\mathbf{y \notin O_z}$}
Then $\J_z(y)$ is an inner boundary component of~$z$
and $U_1(y, z) \subseteq \J_z(y) \subseteq V\setminus O_z \subseteq U_1(z, x)$ by \cref{thm:boundary-components,thm:mutual-outer-inner}.
Therefore,
\[
	U_1(y, x) \subseteq U_1(y, z) \cup U_1(z, x)
	\subseteq \J_z(y) \cup U_1(z, x)
	\subseteq U_1(z, x),
\]
which means that $z$ is a best response to~$x$.
Also, by \cref{thm:outer-beats-inner}, there is a time~$t$ with $\Omega^t(z) \supset \Omega^t(y)$.
By \cref{thm:diffvoronoi-main}, this contradicts $y$~being a paramount response to~$x$.  \medskip

\case{3}{$\mathbf{z \in O_y}$ and $\mathbf{y \in O_z}$}
Define $\Delta_+ \coloneqq U_1(z, y) \setminus U_1(x, y)$
and $\Delta_- \coloneqq \allowbreak U_1(x, y) \setminus U_1(z, y)$.
We assume $\abs{\Delta_+} > \abs{\Delta_-}$ since otherwise~$(x,y)$ is a Nash equilibrium and we are done.

Let $t_{yx}$ be the first time for which $\rreach^{t_{yx}}(y) + \lreach^{t_{yx}}(x) \geq \mymod{x-y}$,
that is, the first time that $x$ and $y$ reach a common vertex in $[y, x]$.
Let also $t_{yz}$ be the first time for which $\rreach^{t_{yz}}(y)+ \lreach^{t_{yz}}(z)  \geq \mymod{z-y}$.
Note that as $x<y<z$, we have $t_{yz}\leq t_{yx}$.

Then we claim that $t_{yz} < t_{yx}$.
Otherwise, if $t' \coloneqq t_{yz} = t_{yx}$, then
we would have
\[
	\mymod{x-y} - \lreach^{t'}(x) = 
	\rreach^{t'}(y) = 
	\mymod{z-y} - \lreach^{t'}(z)
\]
and thus
\[
	\lreach^{t'}(x) = \lreach^{t'}(z) + \mymod{x-y} - \mymod{z-y}
	= \lreach^{t'}(z) + \mymod{x-z}
\]
in contradiction to $z \in O_x$ (\cref{thm:boundary-reach-corr}).

Since $z \in O_y$ and $y \in O_z$, \cref{thm:boundary_inheritance-cycle} implies that
\begin{align}
\forall t \geq t_{yz}:& \quad  \lreach^t(z) = \lreach^{t-1}(z) + 1 \text{ and }  \rreach^t(y) = \rreach^{t-1}(y) + 1  \label{eq:Z}
\intertext{and similarly, as $y\in O_x$}
\forall t \geq t_{yx}:& \quad \lreach^t(x) = \lreach^{t-1}(x) + 1. \label{eq:Z2}
\end{align}

Let $\delta_+ \coloneqq t_{yx} - t_{yz} > 0$ (where the inequality holds as we have shown above that $t_{yz}<t_{yx}$) and define $\lambda_{yx}$, $\lambda_{yz} \in \{0, 1\}$ as follows:
\begin{align*}
\rreach^{t_{yx}}(y) + \lreach^{t_{yx}}(x) &= \mymod{x-y} + \lambda_{yx},\\
\rreach^{t_{yz}}(y)+ \lreach^{t_{yz}}(z) &= \mymod{z-y}+ \lambda_{yz}.
\end{align*}
Observe that $\lambda_{yx} = 0$ if and only if a player at position~$y$ and a player at~$x$ simultaneously reach some vertex in~$[y, x]$,
i.e., if some vertex in~$[y, x]$ is colored gray.
An analogous statement holds for~$\lambda_{yz}$.

Then, for all $t \geq t_{yx}>t_{yz}$, it holds
\begin{align*}
\lreach^t(x) - \lreach^t(z)
&= \lreach^{t_{yx}}(x) - \lreach^{t_{yx}}(z) &\text{by \eqref{eq:Z}, \eqref{eq:Z2}}
\\ &= \mymod{x-y}+\lambda_{yx} - \rreach^{t_{yx}}(y) - \lreach^{t_{yx}}(z)
\\ &= \mymod{x-y}+\lambda_{yx} - \rreach^{t_{yz}}(y) - \lreach^{t_{yz}}(z) - 2\delta_+ \quad &\text{by \eqref{eq:Z}}
\\ &= \mymod{x-y}+\lambda_{yx} - \mymod{z-y} - \lambda_{yz} - 2\delta_+
\\ &= \mymod{x-z} + \lambda_{yx} - \lambda_{yz} - 2\delta_+
\\ &= \mymod{x-z} - 2 \abs{\Delta_+} - \lambda_{yx} + \lambda_{yz}.
\end{align*}
In the above equation we used the fact that $\abs{\Delta_+} = \delta_+ - \lambda_{yx} + \lambda_{yz}$,
which is due to
$y$~reaching exactly one vertex from $\Delta_+$
during each time $t' \in [t_{yz} + 1 - \lambda_{yz},\allowbreak t_{yx} - \lambda_{yx}]$
(see also \cref{fig:caseIII}).

\begin{figure}[t]
	\centering
	\begin{tikzpicture}[xscale=0.5, yscale=0.5]
		\def\timelabelx{-3}
		\def\firstvertex{0}
		\gamegrid{15}{9}
		\def \lifetime {9}
		\def \n {15}
			
		\def \y {3}
		\def \z {7}
		\def \x {14}
			
		\def \colorX {\colorP}
		\def \colorY {\colorQ}
		\def \colorZ {green!70}
		
		\path (18,0); 
			
		\begin{scope}[on background layer]
			\draw[edge]
				\foreach \t in {1,2} {
					(v-1-\t) ++(-0.5,0) -- (v-\y-\t)
				}
				(v-\z-1) -- ++(1,0)
				\foreach \t in {1,2,3} {
					(v-\x-\t) ++(-2,0) -- (v-\n-\t) -- ++(0.5, 0)
				}
				(v-5-2) -- (v-\z-2) -- ++(1,0)
				(v-1-3) ++(-0.5,0) -- (v-10-3)
				\foreach \t in {4,...,\lifetime} {
					(v-1-\t) ++(-0.5,0) -- (v-\n-\t) -- ++(0.5,0)
				}
			;
		\end{scope}		
	
		\path[every node/.style={anchor=base}]
			(0, \vertexlabely)
			+(\x, 0) node {$x$}
			+(\y, 0) node {$y$}
			+(\z, 0) node {$z$}
		;
			
		\path[every node/.style={anchor=east}]
			(0, 4.95) node {$t_{yz} ={}$}
			(0, 7.95) node {$t_{yx} ={}$}
		;
			
		\draw[decorate,decoration={brace,amplitude=2mm}]
			($(v-\z-0) + (-2.1, 0.5)$) -- ($(v-\z-0) + (1.1, 0.5)$)
			(v-\z-0) ++(-0.5, 1.5) node {$\Delta_+$}
		;
			
		\begin{scope}[yshift=10 cm,yscale=-1] 
			\draw[reach=\colorY] (\y, 1) -- (0.5, \y+0.5) -- (0.5, \lifetime+1.5) -- (\y+\lifetime-1.5, \lifetime+1.5) -- (\y, 3) -- cycle;
			\draw[reach=\colorZ] (\z, 1) -- (\z, 2) -- (0.5, \z+1.5) -- (0.5, \lifetime+1.5) -- (\n+0.5, \lifetime+1.5) -- (\z+1, 3) -- (\z+1, 2) -- cycle;
			\draw[reach=\colorX] (\x, 1) -- (\x-2, 3) -- (\x-2, 4) -- (\x - \lifetime+0.5, \lifetime+1.5) -- (\n+0.5, \lifetime+1.5) -- (\n+0.5, \n+0.5 - \x + 1) -- cycle;
		\end{scope}

	\end{tikzpicture}
	\caption{
		Illustration of the proof of \cref{thm:vor-ne-cycle:case-out-out},~case~3.
		The figure shows a part of the graph~$\Cc$ and how $\Omega^t(y)$~(red), $\Omega^t(z)$~(green), and $\Omega^t(x)$~(blue) grow over time.
		Note that in this example $\lambda_{yx} = 0$ and $\lambda_{yz} = 1$
		and thus
		$\abs{\Delta^+} = 4 = t_{yx} - t_{yz} -\lambda_{yx} + \lambda_{yz}$.
	}
	\label{fig:caseIII}
\end{figure}
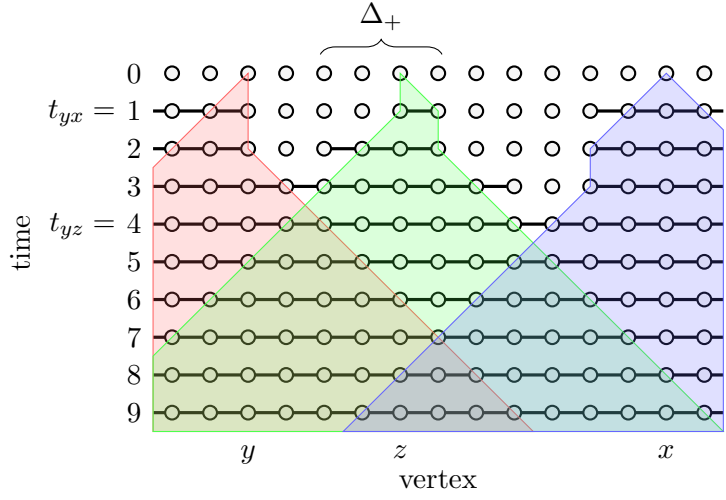

Analogously, let~$t_{zy}$ be the first time where $\rreach^{t_{zy}}(z) + \lreach^{t_{zy}}(y) \geq \mymod{y-z}$
and let~$t_{xy}$ be the first time where $\rreach^{t_{xy}}(x) + \lreach^{t_{xy}}(y) \geq \mymod{y-x}$.
Let $\delta_- \coloneqq t_{zy} - t_{xy} > 0$
and define $\lambda_{zy}$, $\lambda_{xy}$ as
\begin{align*}
\rreach^{t_{zy}}(z) + \lreach^{t_{zy}}(y) &= \mymod{y-z} + \lambda_{zy},\\
\rreach^{t_{xy}}(x)+ \lreach^{t_{xy}}(y) &= \mymod{y-x}+ \lambda_{xy}.
\end{align*}
Analogously to above, for all $t \geq t_{zy}$, it holds
\begin{align*}
\rreach^t(x) - \rreach^t(z) 
&= \rreach^{t_{zy}}(x) - \rreach^{t_{zy}}(z)
\\ &= \rreach^{t_{zy}}(x) + \lreach^{t_{zy}}(y) -\mymod{y-z} - \lambda_{zy} 
\\ &= \rreach^{t_{xy}}(x) + \lreach^{t_{xy}}(y) + 2\delta_- -\mymod{y-z} - \lambda_{zy} 
\\ &= \mymod{y-x} + \lambda_{xy} + 2\delta_- - \mymod{y-z} - \lambda_{zy}
\\ &= -\mymod{x-z} + \lambda_{xy} + 2\delta_- - \lambda_{zy}
\\ &= -\mymod{x-z} + 2 \abs{\Delta_-} - \lambda_{xy} + \lambda_{zy}.
\end{align*}
Since~$x$ is paramount, we now obtain from \cref{thm:diffvoronoi-main} that, for sufficiently large~$t$, it holds
\begin{align*}
0 &\leq 
\lreach^t(x) + \rreach^t(x) - (\lreach^t(z) + \rreach^t(z))
\\ &= \underbrace{2(\abs{\Delta_-} - \abs{\Delta_+})}_{\le -2} + \underbrace{\lambda_{yz} + \lambda_{zy} - \lambda_{xy} - \lambda_{yx}}_{\le 2}
\leq 0,
\end{align*}
from which we conclude that
\begin{align}
\reach^t(x)  &= \reach^t(z), \label{eq:x-eq-z}
\\ \abs{\Delta_+} &= \abs{\Delta_-} + 1, \nonumber
\\ \lambda_{yz} &= \lambda_{zy} = 1, \text{ and} \label{eq:lambda1}
\\ \lambda_{xy} &= \lambda_{yx} = 0. \label{eq:lambda0}
\end{align}

Furthermore,
\[
	u_1(y, x) \leq u_1(x, y) = n - 2 - u_1(y, x) \leq n-2 - u_1(z, x) \leq u_1(x, z) = u_1(z, x),
\]
where the first inequality is due to~$x$~being paramount,
the first equality is by \cref{thm:outer-gray} and \eqref{eq:lambda0},
the second inequality is due to the fact that $y$~is a best response to~$x$,
the third inequality is again by~\cref{thm:outer-gray},
and the last equality is by~\cref{thm:vor-cycle-diff-payout} and \eqref{eq:x-eq-z}.

Since also $u_1(z, x) \leq u_1(y, x)$ by choice of~$y$, all of the above inequalities are in fact equalities.
In particular, $u_1(x, y) = u_1(y, x) = (n-2)/2$.
We can thus deduce by \cref{thm:rescue-lemma} that $y$~is paramount.
Therefore also $u_1(y, z) \geq u_1(z, y)$.

We claim that~$z$ is paramount, too.
Otherwise, we must have $u_1(z, y) > u_1(x, y)$ by choice of~$z$.
Furthermore $u_1(y, z) > u_1(z, y)$ or we could deduce the claim from~\cref{thm:rescue-lemma}.
Thus we get $u_1(y, z) > u_1(z, y) > u_1(x, y) = n/2 - 1$, i.e.,
$u_1(y, z) + u_1(z, y) \geq n + 1$, which is clearly impossible. 

Due to this, $(y, z)$ is a Nash equilibrium in $\dVor(\Cc,2)$ by \cref{thm:diffvoronoi-all-ne} (as well as $(x, y)$ and~$(x, z)$).
As there are no gray vertices~(by \eqref{eq:lambda1}),
this implies that $(y, z)$ is also a Nash equilibrium in $\Vor(\Cc)$:
Any opportunity for one player to improve would have to come with a loss for the other player,
which would contradict~$(y, z)$ being a Nash equilibrium in $\dVor(\Cc, 2)$.
\end{proof}

As \cref{thm:vor-ne-cycle:case-in-in,thm:vor-ne-cycle:case-in-out,thm:vor-ne-cycle:case-out-out} together cover all possible cases,
we have now successfully proved \cref{thm:vor-ne-cycle}.
Note that in the only case where $(x, y)$~did not form a Nash equilibrium (case~3 of~\cref{thm:vor-ne-cycle:case-out-out}),
we showed~$y$~to be paramount.
Thus there always exists some paramount vertex that is part of a Nash equilibrium.

\section{Conclusion}
Our work is meant to initiate further systematic studies of 
(not only competitive) games on (classes of) temporal graphs. 
There is a wealth of unexplored research directions to pursue.

Extending our work on temporal trees and cycles, there are also many more special temporal graphs to study such as temporal grids.
Another direction is to consider variations of temporal diffusion and Voronoi 
games.
For example, as already partly studied in difference diffusion/Voronoi games, the payoff could be defined as the difference of the number of vertices 
colored by the players.
Related to this, ``splitting'' gray vertices between players is another possibility and was already studied on static games.
Finally, for Voronoi games, there are several different temporal 
distance notions to consider.
For example, one may study non-strict walks (i.e,. walks that traverse multiple edges in a single time step).
Note that these are trivial on monotonically shrinking temporal graphs.
We conjecture that \cref{thm:voronoi-tree} still holds for such non-strict Voronoi games
--- in fact, we know of no instance of any monotonically growing temporal graph without a Nash equilibrium.

\end{document}